\newtheorem{Theorem}{Theorem}
\newtheorem{corollary}{corollary}
\newtheorem{Proposition}{Proposition}
\newtheorem{Remark}{Remark}
\begin{document}

%
%
%
%
%
%


\title{Cooperative Strategies for the Half-Duplex Gaussian Parallel Relay Channel:\\
Simultaneous Relaying versus Successive Relaying \footnote{Financial supports provided by Nortel, and the corresponding matching
 funds by the Federal government: Natural Sciences and Engineering Research Council of Canada (NSERC)
 and Province of Ontario: Ontario Centres of Excellence (OCE) are gratefully acknowledged.}}

\author{\normalsize
Seyed Saeed Changiz Rezaei, Shahab Oveis Gharan, and Amir K. Khandani \\
\small Coding \& Signal Transmission Laboratory\\[-5pt]
\small Department of Electrical \& Computer Engineering\\[-5pt]
\small University of Waterloo \\[-5pt]
\small Waterloo, ON, N2L\ 3G1 \\[-5pt]
\small {sschangi, shahab, khandani}@cst.uwaterloo.ca\\}

\date{}
\maketitle \thispagestyle{empty}

\begin{abstract}
This study investigates the problem of communication for a network composed of two half-duplex parallel relays with additive white Gaussian noise. Two protocols, i.e., \emph{Simultaneous} and \emph{Successive} relaying, associated with two possible relay orderings are proposed. The simultaneous relaying protocol is based on \emph{Dynamic Decode and Forward (DDF)} scheme. For the successive relaying protocol: (i) a \emph{Non-Cooperative} scheme based on the \emph{Dirty Paper Coding (DPC)}, and (ii)  a \emph{Cooperative} scheme based on the \emph{Block Markov Encoding (BME)} are considered. Furthermore, the composite scheme of employing BME at one relay and DPC at another always achieves a better rate when compared to the \emph{Cooperative} scheme. A \emph{``Simultaneous-Successive Relaying based on Dirty paper coding scheme" (SSRD)} is also proposed. The optimum ordering of the relays and hence the capacity of the half-duplex Gaussian parallel relay channel in the low and high signal-to-noise ratio (SNR) scenarios is derived. In the low SNR scenario, it is revealed that under certain conditions for the channel coefficients, the ratio of the achievable rate of the simultaneous relaying based on DDF to the cut-set bound tends to be 1. On the other hand, as SNR goes to infinity, it is proved that successive relaying, based on the DPC, asymptotically achieves the capacity of the network.
\end{abstract}
%
%
\section{Introduction}

\subsection{Motivation}

 The continuous growth in wireless
communication has motivated information theoretists to extend shannon's information theoretic arguments for a single user channel to the scenarios that involve communication among multiple users.

 In this regard, cooperative wireless communication has been the focus of attention during recent years. Due to rapid decrease of the transmitted signal power with distance, the idea of multi-hopped communication has been proposed. In multi-hopped communication, some intermediate nodes as relays are exploited to facilitate data transmission from the source to the destination. Using this technique leads to saving battery power as well as increasing the physical coverage area. Moreover, relays by emulating distributed transmit antenna, can form spatial diversity and combat the multi-path fading effect of the wireless media.

  Motivated by practical constraints, half-duplex relays which cannot
transmit and receive at the same time and in the same frequency band
are of great importance. Here, our goal is to study and analyze the
performance limits of a half-duplex parallel relay channel.

\subsection{History}
Relay channel is a three terminal network which was
introduced for the first time by Van der Meulen in 1971
\cite{VanderMeulen}. The most important capacity results of the relay
channel were reported by Cover and El Gamal \cite{Cover}. Two relaying strategies are proposed in \cite{Cover}. In one strategy, the relay decodes the transmitted message and forwards the re-encoded version to the destination, while in another one the relay does not
decode the message, but sends the quantized received values to the
destination.

Moreover, several works on multi-relay channels exist in the
literature (See \cite{schein1,schein,Xie2,Gastpar3,Gastpar4,Gastpar5,Shamai1,Yates1,PeyWei,Gastpar2,Wittneben1,Wittneben2,Sumeet,Yong,Belfiore,Azarian,Mitran,Poor}). Schein in~\cite{schein1,schein} establishes upper and lower
bounds on the capacity of a full-duplex parallel relay channel in which
the channel consists of a source, two relays and a destination, where there is no direct link between the source and the destination, and also between the two relays. Generally, the
best rate reported for the full-duplex Gaussian parallel relay
channel is based on the Decode-Forward (DF) or Amplify-Forward (AF) schemes,
with time sharing \cite{schein1,schein}. Xie and Kumar generalize the block
Markov encoding scheme in \cite{Cover} for a network of multiple relays \cite{Xie2}.
Gastpar, Kramer, and
Gupta extend compress and forward scheme to a multiple relay
channel by introducing the concept of antenna polling in
\cite{Gastpar3,Gastpar4,Gastpar5}. In \cite{Shamai1}, Amichai,
Shamai, Steinberg and Kramer consider a parallel relay setup, in which a nomadic
source sends its information to a remote destination via some relays with
lossless links to the destination. They investigate the case that these relays
do not have any decoding capability, so signals received at the relays must be compressed. The authors also fully characterize the capacity of this case for the
Gaussian channel. In \cite{Yates1}, Maric and Yates investigate DF
and AF schemes in a parallel-relay network. Motivated by
applications in sensor networks, they assume large bandwidth
resources allowing orthogonal transmissions at different nodes. They
characterize optimum resource allocation for AF and DF and show
that the wide-band regime minimizes the energy cost per information
bit in DF, while AF should work in the band-limited regime to achieve the
best rate. Razaghi and
Yu in \cite{PeyWei} propose a parity-forwarding scheme for
full-duplex multiple relay. They show that parity-forwarding can achieve the capacity
in a new form of degraded relay networks.

Radios that can receive and transmit simultaneously in the same frequency band require complex and expensive components \cite{HostMadsen}. Hence,
Khojastepour and Aazhang in \cite{KhojastepourAazhang1},
\cite{KhojastepourAazhang2} call the half-duplex relay as ``\emph{Cheap
Relay}".

Recently, half-duplex relaying has drawn a great deal of attention (See \cite{KhojastepourAazhang1,KhojastepourAazhang2,szahedi,SinaZahedi,ElGamal,HostMadsen,LiangVeeravalli}, \cite{Gastpar2}, \cite{Wittneben1,Wittneben2,Sumeet,Yong,Belfiore,Azarian,Mitran,Poor}).
Zahedi and El Gamal consider two different cases of frequency
division Gaussian relay channel, deriving lower and upper bounds
on the capacity \cite{szahedi}. They also derive single letter characterization of
the capacity of frequency division additive white Gaussian noise (AWGN) relay channel with simple
linear relaying scheme \cite{SinaZahedi},\cite{ElGamal}. The problem
of time division relaying is also considered by Host-Madsen
and Zhang \cite{HostMadsen}. By considering fading scenarios, and assuming channel
state information (CSI), they study upper and lower bounds on the outage
capacity and the Ergodic capacity. In \cite{LiangVeeravalli}, Liang
and Veeralli present a Gaussian orthogonal relay model, in which
the relay-to-destination channel is orthogonal to the source-to-relay and source-to-destination channel.
They show that when the source-to-relay channel is better
than the source-to-destination channel and the signal-to-noise ratio (SNR)
of the relay-to-destination is less than a given threshold,
optimizing resource allocation causes the lower and the upper bounds to
coincide with each other.

\subsection{Contributions and Relation to Previous Works}

In this paper, we study transmission strategies for a network with a
source, a destination, and two half-duplex relays with additive white Gaussian noise which cooperate with each
other to facilitate data transmission from the source to the
destination. Furthermore, it is assumed that no direct link exists between the source and the destination.

Half-duplex relaying, in multiple relay networks, is studied in \cite{Gastpar2,Wittneben1,Wittneben2,Sumeet,Yong,Belfiore,Azarian,Mitran,Poor}. Gastpar in \cite{Gastpar2} shows that in a Gaussian parallel
relay channel with infinite number of relays, the optimum coding
scheme is AF. Rankov and Wittneben in \cite{Wittneben1,Wittneben2} further study the problem of half-duplex
relaying in a two-hop communication scenario. In their study, they also consider a parallel relay setup with two relays where there is no direct link between the source and the destination, while there exists a link between the relays.
Their relaying protocols are based on either AF or DF, in which the relays
successively forward their messages from the source to the destination. We
call this protocol ``\emph{Successive Relaying}" in the sequel.
Xue and Sandhu in~\cite{Sumeet} further study different
half-duplex relaying protocols for the Gaussian parallel relay channel. Since they assume that
there is no link between the relays, they refer to their parallel
channel as a \emph{Diamond Relay Channel}.

In this work, our primary objective
is to find the best ordering of the relays in the intended set-up. We consider two relaying
protocols, i.e., simultaneous relaying versus successive relaying,
associated with two possible relay orderings. For simultaneous relaying, each relay exploits
``Dynamic DF (DDF)". It should be noted that the DDF scheme considered here is slightly different from the DDF introduced in \cite{Azarian} and \cite{Mitran}. In those works, the DDF scheme is applied to the set-up of the multiple relay network in which the nodes only have the CSI of their receiving channel. In the DDF scheme described in \cite{Azarian}, the source is broadcasting the message to all the network nodes during whole period of transmission and each relay, listens to the transmitted signal of the source and other relays until it can decode the transmitted message. Consequently, it transmits its signal coherently with the source and other active relays in the remaining time. However, in our set-up, all the nodes are assumed to have all the channel coefficients. Therefore, in a fixed pre-assigned portion of the time, the relays receive the signal transmitted from the source, and in the remaining time slot they transmit the re-encoded version of the decoded message together. In other words, the relays operate in a synchronous manner.

For successive relaying, we study a \emph{Non-Cooperative} scheme based on ``Dirty Paper Coding (DPC)" and also a \emph{Cooperative} scheme based on ``Block Markov Encoding (BME)". It is worth noting that the authors in ~\cite{Poor} also propose successive relaying protocol for the set up with two parallel relays and direct links between the relays and between the source and the destination. They propose a simple repetition coding at the relays, and show that their scheme can recover the loss in the multiplexing gain, while achieving diversity gain of 2.

We derive the optimum relay ordering in low and high SNR scenarios. In low SNR scenarios and under certain channel conditions, we show that the ratio of the achievable rate of DDF for simultaneous relaying to the cut-set bound tends to one. On the other hand, in high SNR scenarios, we prove that the proposed DPC for successive relaying asymptotically achieves the capacity.

After this work was completed, we became aware of~\cite{Yong} which has independently
proposed an achievable rate based on the combination of superposition coding,
  BME and DPC. In their scheme, the intended message $``w"$ is split into a message which is transmitted to the destination by exploiting cooperation between the relays $``w_r"$ and a message which is transmitted to the destination without using any cooperation between the relays $``w_d"$. Hence, the signal associated with $``w_d"$, transmitted by one relay, can be considered as interference on the other relay. $``w_r"$ is transmitted by using BME and $``w_d"$ is transmitted by employing DPC. Therefore, in their general scheme, the associated signals with these two messages are superimposed and transmitted. As the channel between the two relays become strong, their proposed scheme is converted to BME. On the other hand, as the channel becomes weak, their proposed scheme becomes DPC.

  Unlike~\cite{Yong}, in which the authors only consider successive relaying and propose a combined BME and DPC, as the main result of this paper, simultaneous and successive relaying protocols are combined and a ``Simultaneous-Successive Relaying based on Dirty paper coding" (SSRD) scheme with a new achievable rate is proposed. It is shown that in the low SNR scenario and under certain channel conditions, SSRD scheme is converted to simultaneous relaying based on DDF, while in the high SNR scenarios, when the ratio of the relay powers to the source power remain constant, it becomes successive relaying based on DPC (to achieve the capacity).

Besides this main result, some other results obtained in this paper are as follows:
  \begin{itemize}
  \item Two different types of decoding, i.e., \emph{successive} and \emph{backward} decoding, at the destination for the BME scheme are proposed. We prove that the achievable rate of BME with backward decoding is greater than that of BME with successive decoding, i.e., $C_{BME_{back}}^{low}\geq C_{BME_{succ}}^{low}$.

   \item It is proved that BME with backward decoding leads to a simple strategy in which at most, one of the relays is required to cooperate with the other relay in sending the bin index of the other relay's message. Accordingly, in the Gaussian case, the combination of BME at one relay and DPC at the other relay always achieves a better rate than the simple BME.

  \item In the degraded case, where the destination receives a degraded version of the received signals at the relays, BME with backward decoding achieves the successive cut-set bound.
  \end{itemize}

The rest of the paper is organized as follows: In section II, the
system model is introduced. In section III, the achievable rates and coding schemes for
a half-duplex relay network are derived. Optimality results are discussed in section IV. Simulation
results are presented in section V. Finally, section VI
concludes the paper.

\subsection{Notation}

Throughout the paper, the superscript $^H$ stands for matrix operation of conjugate
transposition. Lowercase bold letters and regular letters
represent vectors and scalars, respectively. For any two functions
$f(n)$ and $g(n)$, $f(n)=O(g(n))$ is equivalent to $\lim_{n
\rightarrow \infty} \left| \frac{f(n)}{g(n)} \right| < \infty$, and
$f(n)=\Theta(g(n))$ is equivalent to $\lim_{n \rightarrow \infty}
\frac{f(n)}{g(n)} =c$, where $0<c<\infty$. And $C(x)\triangleq\frac{1}{2}\log_2(1+x)$. Furthermore, for the sake of brevity, $A_{\epsilon}^{(n)}$ denotes the set of weakly jointly typical sequences for any intended set of random variables.

\section{System Model}
We consider a Gaussian network which consists of a source, two half-duplex relays, and a destination, and there is no direct link between the source and the destination. Here we define four time slots according to the transmitting and receiving mode of each relay (See Fig.~\ref{fig:SysModelm.eps}), where $t_b$ denotes the duration of time slot $b$ ($\sum_{b=1}^4 t_b = 1$). Nodes 0, 1, 2, and 3 represent the source, relay 1, relay 2, and the destination, respectively. Moreover, the transmitting and receiving signals at node $a$ during time slot $b$ are represented by $\textbf{x}_a^{(b)}$ and $\textbf{y}_a^{(b)}$, respectively. Hence, at each node $c\in\{1,2,3\}$, we have
\begin{eqnarray}
\textbf{y}_{c}^{(b)} = \sum_{a\in\{0,1,2\}} h_{ac}\textbf{x}_{a}^{(b)} + \textbf{z}_{c}^{(b)}.
\end{eqnarray}
where $h_{ac}$$^{,}$s denote channel coefficients from node $a$ to node $c$, and $\textbf{z}_{c}^{(b)}$ is the AWGN term with zero mean and variance of $``1"$ per
dimension.
\begin{figure}[hbtp]
\centering
\includegraphics[angle=0,scale=.75]{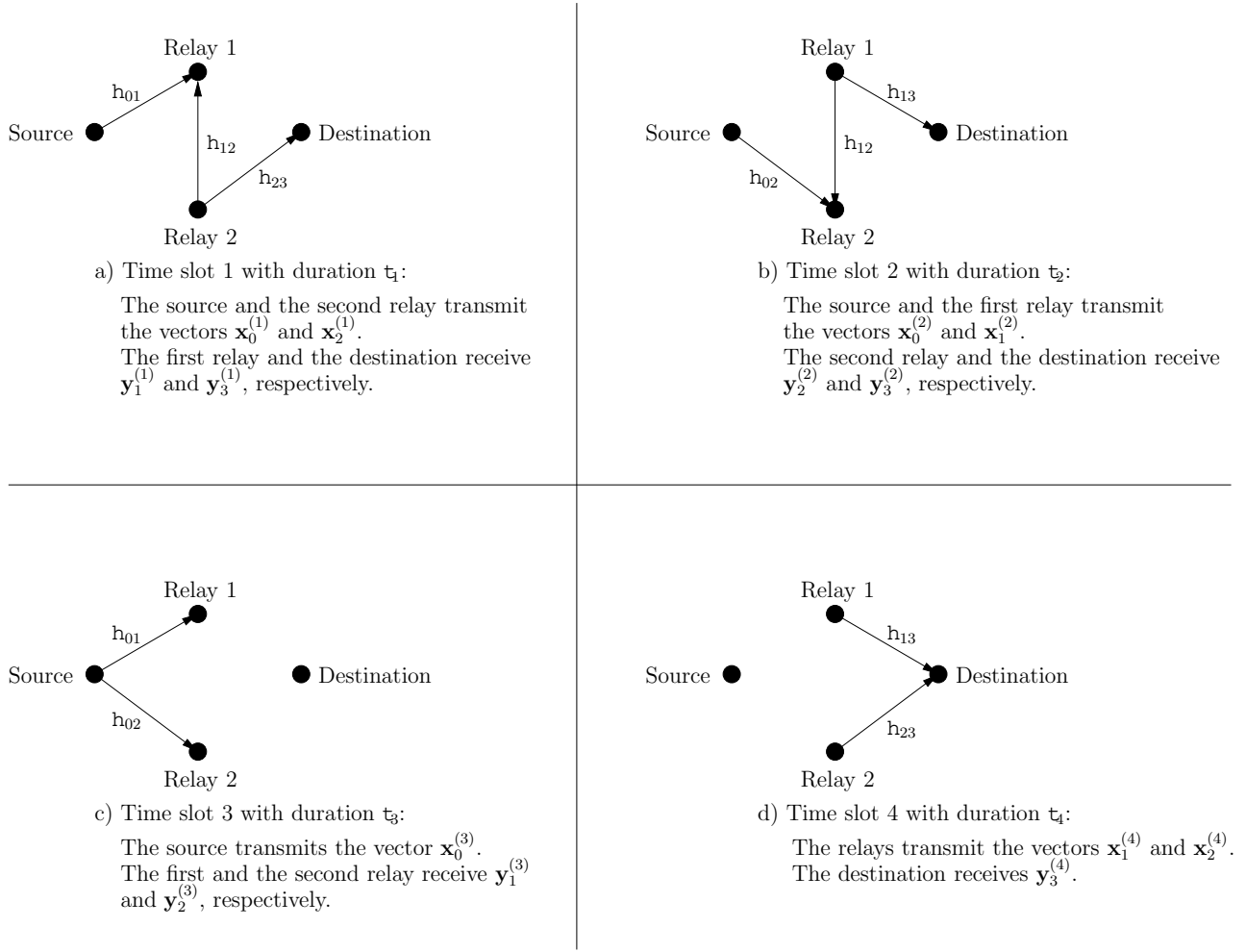}\\
\caption{\small{System Model.}}\label{fig:Sysmodelm.eps}
\end{figure}

Noting the transmission strategies in Fig.~\ref{fig:SysModelm.eps}, we have
\begin{IEEEeqnarray}{rl}
&\textbf{y}_{1}^{(1)} = h_{01}\textbf{x}_{0}^{(1)} + h_{21}\textbf{x}_{2}^{(1)} + \textbf{z}_{1}^{(1)},\label{eq:Sys1_1}\\
&\textbf{y}_{3}^{(1)} = h_{23}\textbf{x}_{2}^{(1)} + \textbf{z}_{3}^{(1)},\label{eq:Sys1_2}\\
&\textbf{y}_{2}^{(2)} = h_{02}\textbf{x}_{0}^{(2)} + h_{12}\textbf{x}_{1}^{(2)} + \textbf{z}_{2}^{(2)},\label{eq:Sys2_1}\\
&\textbf{y}_{3}^{(2)} = h_{13}\textbf{x}_{1}^{(2)} + \textbf{z}_{3}^{(2)},\label{eq:Sys2_2}\\
&\textbf{y}_{k}^{(3)} = h_{0k}\textbf{x}_{0}^{(3)} + \textbf{z}_{k}^{(3)}, k\in\{1,2\},\label{eq:Sys3}\\
&\textbf{y}_{3}^{(4)} = \sum_{k = 1}^{2}h_{k3}\textbf{x}_{k}^{(4)} + \textbf{z}_{3}^{(4)}.\label{eq:Sys4}
\end{IEEEeqnarray}
 Throughout the paper, we assume that $h_{01}\geq h_{02}$ unless specified otherwise, and from reciprocity assumption, we have $h_{12}=h_{21}$.
Furthermore, the power constraints $P_0$, $P_1$, and $P_2$ should be satisfied for the source, the first relay, and the second relay, respectively. Hence, denoting the power consumption of node $a$ at time slot $b$ by $P_a^{(b)} = E\left[\textbf{x}_a^{(b)H}\textbf{x}_a^{(b)}\right]$, we have
\begin{IEEEeqnarray}{rl}
&P_{0}^{(1)}+P_{0}^{(2)}+P_{0}^{(3)} = P_0,\\
&P_{1}^{(2)}+P_{1}^{(4)} = P_1,\nonumber\\
&P_{2}^{(1)}+P_{2}^{(4)} = P_2.\nonumber
\end{IEEEeqnarray}

\section{Achievable Rates and Coding Schemes}\label{section:proposed_method}

In this section, we propose two cooperative protocols, i.e. \emph{Successive} and \emph{Simultaneous} relaying protocols, for a half-duplex Gaussian parallel relay channel.
\subsection{Successive Relaying Protocol}
In \emph{Successive} relaying protocol, the relays are not allowed to receive and transmit simultaneously, i.e. $t_3 = t_4 = 0$, and the relations between the transmitted and the received signals at the relays and at the destination follow from (\ref{eq:Sys1_1})-(\ref{eq:Sys2_2}). For the successive relaying protocol, we propose a \emph{Non-Cooperative} and a \emph{Cooperative Coding} scheme in the sequel.
In the proposed schemes, the time is divided into odd and even time slots with the duration $t_1$ and $t_2$, respectively. Accordingly, at each odd and even time slots, the source transmits a new message to one of the relays, and the destination receives a new message from the other relay, successively (See Fig. \ref{fig: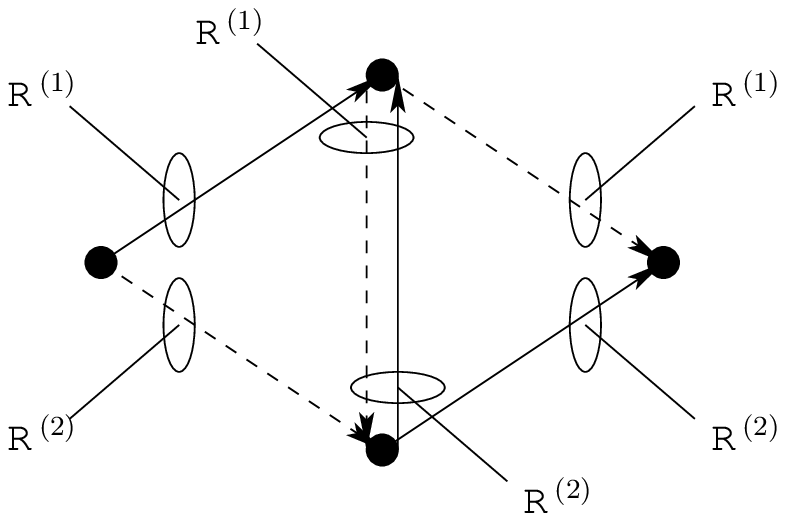}).
\begin{figure}[hbtp]
\centering
\includegraphics[angle=0,scale=.75]{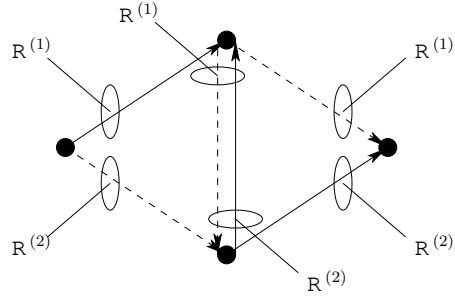}\\
\caption{\small{Information flow transfer for successive relaying protocol for two relays.}}\label{fig:p1_4.eps}
\end{figure}
\subsubsection{Non-Cooperative Coding}
In the Non-Cooperative Coding scheme, each relay considers the other's signal as
interference. Since the source knows each relay's message, it
can apply the Gelfand-Pinsker's coding scheme to transmit its
message to the other relay. The following Theorem gives the achievable rate of this scheme.

\begin{figure}[hbtp]
\centering
\includegraphics[angle=0,scale=.75]{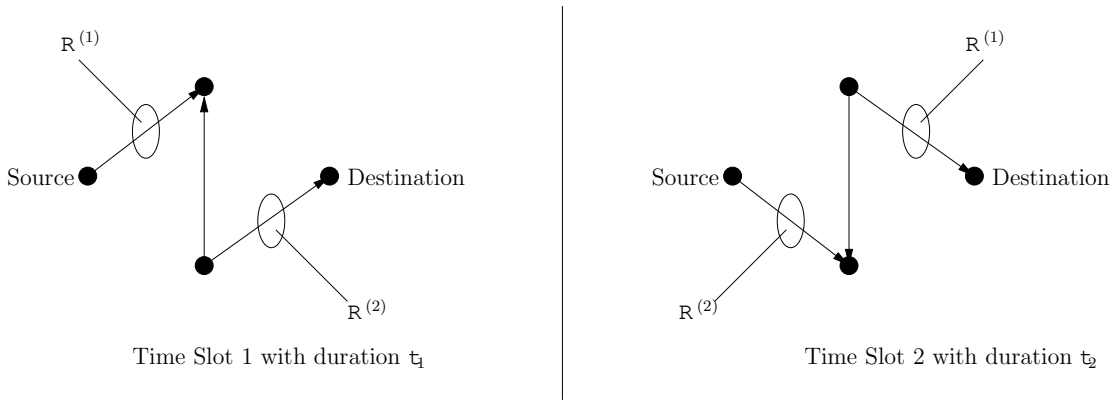}\\
\caption{\small{Successive relaying protocol based on Non-Cooperative Coding.}}\label{fig:p1_8m.eps}
\end{figure}

\begin{Theorem}\label{thm:Th2}
For the half-duplex parallel relay channel, assuming successive
relaying, the following rate $C_{DPC}^{low}$ is achievable:
\begin{IEEEeqnarray}{rl}
C_{DPC}^{low} = &\max_{0\leq t_{1},t_{2},t_{1}+t_{2}=1} R^{(1)} + R^{(2)},\label{eq:DPC1}\\
&\text{subject to:}\nonumber\\
&~~~~~R^{(1)}\leq \min\left(t_{1}(I(U_{0}^{(1)};Y_{1}^{(1)})-I(U_{0}^{(1)};X_{2}^{(1)})),
t_{2}I(X_{1}^{(2)};Y_{3}^{(2)})\right),\label{eq:DPC2}\\
&~~~~~R^{(2)}\leq \min\left(t_{2}(I(U_{0}^{(2)};Y_{2}^{(2)})-I(U_{0}^{(2)};X_{1}^{(2)})),t_{1}I(X_{2}^{(1)};Y_{3}^{(1)})\right).\label{eq:DPC3}
\end{IEEEeqnarray}
with probabilities:
\begin{IEEEeqnarray}{rl}
&p(x_{2}^{(1)},u_{0}^{(1)},x_{0}^{(1)})=p(x_{2}^{(1)})p(u_{0}^{(1)}|x_{2}^{(1)})p(x_{0}^{(1)}|u_{0}^{(1)},x_{2}^{(1)}),\nonumber\\
&p(x_{1}^{(2)},u_{0}^{(2)},x_{0}^{(2)})=p(x_{1}^{(2)})
p(u_{0}^{(2)}|x_{1}^{(2)})p(x_{0}^{(2)}|u_{0}^{(2)},x_{1}^{(2)}).\nonumber
\end{IEEEeqnarray}
\end{Theorem}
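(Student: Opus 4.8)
The plan is to establish achievability by a block-Markov scheme in which two independent message streams, of rates $R^{(1)}$ and $R^{(2)}$, are pipelined through the two relays in an alternating fashion, while the source applies Gelfand--Pinsker (binning) coding in each slot to pre-cancel the interference caused by the relay that is simultaneously transmitting to the destination. I would transmit over $B$ super-blocks, each composed of one odd slot (of length $t_1$) followed by one even slot (of length $t_2$). The stream of rate $R^{(1)}$ travels source $\to$ relay $1$ in the odd slot and relay $1 \to$ destination in the following even slot; the stream of rate $R^{(2)}$ travels source $\to$ relay $2$ in the even slot and relay $2\to$ destination in the following odd slot. The crucial structural observation is that in any block the signal $\mathbf{x}_2^{(1)}$ radiated by relay $2$ encodes a message that was already decoded in an earlier block, and is therefore known to the source; hence from the source's viewpoint $X_2^{(1)}$ is a non-causally known interference, and dually $X_1^{(2)}$ in the even slot. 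This is precisely the setting in which Gelfand--Pinsker coding applies, with the factorizations $p(x_2^{(1)})p(u_0^{(1)}|x_2^{(1)})p(x_0^{(1)}|u_0^{(1)},x_2^{(1)})$ and its even-slot counterpart stated in the theorem.

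For the odd slot I would generate a Gelfand--Pinsker codebook for $U_0^{(1)}$ with roughly $2^{nt_1 I(U_0^{(1)};Y_1^{(1)})}$ auxiliary sequences partitioned into $2^{nt_1 R^{(1)}}$ bins, together with relay $2$'s point-to-point codebook $\{\mathbf{x}_2^{(1)}(w_2)\}$; symmetrically for the even slot with $U_0^{(2)}$, $X_1^{(2)}$, and relay $1$'s codebook. Given the fresh message index for relay $1$ and the known interference $\mathbf{x}_2^{(1)}$, the source searches its designated bin for a $\mathbf{u}_0^{(1)}$ jointly typical with $\mathbf{x}_2^{(1)}$; the standard covering argument shows such a sequence exists with high probability provided each bin holds at least $2^{nt_1 I(U_0^{(1)};X_2^{(1)})}$ entries. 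It then emits $\mathbf{x}_0^{(1)}$ drawn from $p(x_0^{(1)}|u_0^{(1)},x_2^{(1)})$, and the dual step is performed in the even slot.

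The error analysis then splits into the two hops of each stream. Relay $1$ performs joint-typicality decoding of $U_0^{(1)}$ from $\mathbf{y}_1^{(1)}$ without knowing $\mathbf{x}_2^{(1)}$; the usual packing bound forces $t_1\bigl(R^{(1)}+I(U_0^{(1)};X_2^{(1)})\bigr)\le t_1 I(U_0^{(1)};Y_1^{(1)})$, which is the first term in the minimum of (\ref{eq:DPC2}). The destination decodes relay $1$'s point-to-point codeword from $\mathbf{y}_3^{(2)}$, giving $R^{(1)}\le t_2 I(X_1^{(2)};Y_3^{(2)})$, the second term. The constraints in (\ref{eq:DPC3}) follow by the symmetric argument in the swapped slots. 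Taking, for each stream, the minimum of the source-to-relay and relay-to-destination constraints, summing $R^{(1)}+R^{(2)}$, and optimizing over $t_1,t_2$ with $t_1+t_2=1$ and over the admissible factorizations yields the claimed achievable rate $C_{DPC}^{low}$.

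The main obstacle I expect is not any individual bounding step but making the pipeline rigorous. I must verify that the message relay $k$ forwards in a given block was decoded error-free in a previous block, so that the sequence the source pre-cancels is truly the one transmitted; that the two streams remain mutually independent, so that $X_2^{(1)}$ plays exactly the role of the Gelfand--Pinsker interference and no unmodeled coupling appears in the joint-typicality tests; and that the rate penalty from the initialization and flushing blocks is $O(1/B)$ and hence vanishes as $B\to\infty$. Care is also needed to confirm that relay $k$, which lacks the interference sequence, is correctly handled inside the standard Gelfand--Pinsker error analysis rather than being credited with side information it does not possess.
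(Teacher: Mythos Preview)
Your proposal is correct and follows essentially the same approach as the paper's proof in Appendix~A: a pipelined successive-relaying scheme in which the source applies Gelfand--Pinsker binning against the known relay interference in each slot, together with point-to-point decoding at the relays and the destination, yielding exactly the two pairs of constraints in (\ref{eq:DPC2})--(\ref{eq:DPC3}). The only minor slip is a normalization issue---the number of bins should be $2^{nR^{(1)}}$ rather than $2^{nt_1 R^{(1)}}$, since $R^{(1)}$ is already defined per super-block---but this does not affect the argument.
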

\begin{proof}
See Appendix A.
\end{proof}

From Theorem \ref{thm:Th2}, the achievable rate of the proposed scheme for the Gaussian case can be obtained as follows.

\begin{corollary}\label{cor:cor2}
For the half-duplex Gaussian parallel relay channel, assuming
successive relaying protocol with power constraint at the source and
at each relay, DPC achieves the following rate:
\begin{IEEEeqnarray}{rl}
C_{DPC}^{low}&=\max\left(R^{(1)} + R^{(2)}\right),\label{eq:R_DPC_1}\\
&\text{subject to:}\nonumber\\
&~~~~~R^{(1)}\leq\min\left(t_{1}C\left(\frac{h_{01}^2P_{0}^{(1)}}{t_{1}}\right),
t_{2}C\left(\frac{h_{13}^2P_1}{t_{2}}\right)\right),\nonumber\\
&~~~~~R^{(2)}\leq\min\left(t_{2}C\left(\frac{h_{02}^2P_{0}^{(2)}}{t_{2}}\right),
t_{1}C\left(\frac{h_{23}^2P_2}{t_{1}}\right)\right),\nonumber\\
&~~~~~P_{0}^{(1)} + P_{0}^{(2)} = P_0,\nonumber\\
&~~~~~t_{1} + t_{2} = 1,\nonumber\\
&~~~~~0 \leq t_1, t_2, P_0^{(1)}, P_0^{(2)}.\nonumber
\end{IEEEeqnarray}
\end{corollary}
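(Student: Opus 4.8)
The plan is to treat Corollary~\ref{cor:cor2} as the evaluation of the single-letter region of Theorem~\ref{thm:Th2} under a specific jointly Gaussian choice of the auxiliaries and channel inputs. Since the corollary only asserts \emph{achievability} of the stated rate, it suffices to exhibit one admissible distribution of the prescribed Markov form that realizes each of the four mutual-information terms. First I would classify the four bounds into two kinds: the two Gelfand--Pinsker differences $I(U_0^{(b)};Y_c^{(b)})-I(U_0^{(b)};X_k^{(b)})$, which model the source-to-relay links where the source knows the interfering relay codeword non-causally, and the two ordinary point-to-point terms $I(X_k^{(b)};Y_3^{(b)})$, which are the interference-free relay-to-destination links.

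The key step is the evaluation of the Gelfand--Pinsker term via Costa's writing-on-dirty-paper argument. Taking slot~1 as the representative case, relay~1 observes $Y_1^{(1)}=h_{01}X_0^{(1)}+h_{21}X_2^{(1)}+Z_1^{(1)}$, in which the state $S=h_{21}X_2^{(1)}$ is known to the source. The per-symbol power available to the source in this slot is $P_0^{(1)}/t_1$, so I would set $X_0^{(1)}\sim\mathcal{N}(0,P_0^{(1)}/t_1)$ independent of $X_2^{(1)}$ and choose the auxiliary $U_0^{(1)}=h_{01}X_0^{(1)}+\alpha^\ast S$ with the Costa coefficient $\alpha^\ast=\frac{h_{01}^2P_0^{(1)}/t_1}{h_{01}^2P_0^{(1)}/t_1+1}$. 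This choice is exactly of the prescribed form $p(x_2^{(1)})p(u_0^{(1)}|x_2^{(1)})p(x_0^{(1)}|u_0^{(1)},x_2^{(1)})$, because $X_0^{(1)}$ is a deterministic function of $(U_0^{(1)},S)$. A direct computation of the two Gaussian mutual informations then yields $I(U_0^{(1)};Y_1^{(1)})-I(U_0^{(1)};X_2^{(1)})=C\!\left(h_{01}^2P_0^{(1)}/t_1\right)$, i.e. the interference from relay~2 is completely cancelled. Scaling by the time fraction $t_1$ gives the first term in the bound on $R^{(1)}$, and the identical argument in slot~2 (state $h_{12}X_1^{(2)}$, per-symbol power $P_0^{(2)}/t_2$) produces the first term of $R^{(2)}$.

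For the two clean links I would simply put Gaussian inputs on interference-free Gaussian channels. On $Y_3^{(2)}=h_{13}X_1^{(2)}+Z_3^{(2)}$ relay~1 transmits with its entire budget $P_1$ over a slot of relative length $t_2$ (recall that successive relaying forces $t_3=t_4=0$, hence $P_1^{(4)}=0$ and $P_1^{(2)}=P_1$), giving per-symbol power $P_1/t_2$, so a Gaussian $X_1^{(2)}$ achieves $I(X_1^{(2)};Y_3^{(2)})=C\!\left(h_{13}^2P_1/t_2\right)$; symmetrically relay~2 on $Y_3^{(1)}=h_{23}X_2^{(1)}+Z_3^{(1)}$ yields $C\!\left(h_{23}^2P_2/t_1\right)$. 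Substituting the four evaluated terms into (\ref{eq:DPC2})--(\ref{eq:DPC3}), and then maximizing $R^{(1)}+R^{(2)}$ over the admissible source power split $P_0^{(1)}+P_0^{(2)}=P_0$ and the time split $t_1+t_2=1$, reproduces the claimed expression for $C_{DPC}^{low}$.

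The only genuinely non-routine ingredient is the dirty-paper evaluation: one must verify that Costa's optimal $\alpha^\ast$ collapses the Gelfand--Pinsker difference to the interference-free capacity, and one must keep careful track of the normalization by $t_b$ inside $C(\cdot)$, which converts the per-block energy $P_a^{(b)}$ allocated to a slot of relative length $t_b$ into the per-symbol power $P_a^{(b)}/t_b$. Everything else is a mechanical substitution of scalar Gaussian mutual informations, so I expect the write-up in Appendix~A to reduce to stating the Gaussian choice and citing the dirty-paper computation.
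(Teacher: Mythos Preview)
Your proposal is correct and follows essentially the same route as the paper: specialize Theorem~\ref{thm:Th2} to the Gaussian case by choosing independent Gaussian inputs and Costa's dirty-paper auxiliary, so that each Gelfand--Pinsker difference collapses to the interference-free point-to-point capacity. Your auxiliary $U_0^{(1)}=h_{01}X_0^{(1)}+\alpha^\ast h_{21}X_2^{(1)}$ differs from the paper's $U_0^{(1)}=X_0^{(1)}+\frac{h_{01}h_{12}P_0^{(1)}}{h_{01}^2P_0^{(1)}+t_1}X_2^{(1)}$ only by the scalar factor $h_{01}$, which leaves the mutual informations unchanged; your added bookkeeping about the $t_b$-normalization and the implication $t_3=t_4=0\Rightarrow P_1^{(2)}=P_1,\ P_2^{(1)}=P_2$ is a useful clarification the paper leaves implicit.
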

\begin{proof}
From Costa's Dirty Paper Coding~\cite{Costa}, by having
\begin{eqnarray}\label{eq:DIRT}
U_{0}^{(1)} = X_{0}^{(1)} + \frac{h_{01}h_{12}P_{0}^{(1)}}{h_{01}^2P_{0}^{(1)} + t_{1}}X_{2}^{(1)},\\
U_{0}^{(2)} = X_{0}^{(2)} + \frac{h_{02}h_{12}P_{0}^{(2)}}{h_{02}^2P_{0}^{(2)} + t_{2}}X_{1}^{(2)}.
\end{eqnarray}
where
$X_{0}^{(1)}\sim\mathcal{N}(0,P_{0}^{(1)}),X_{0}^{(2)}\sim\mathcal{N}(0,P_{0}^{(2)}),X_{2}^{(1)}\sim\mathcal{N}(0,P_2)$,
and $X_{1}^{(2)}\sim\mathcal{N}(0,P_1)$, and applying
them to Theorem~\ref{thm:Th2}, we obtain corollary~\ref{cor:cor2}.
\end{proof}

\subsubsection{Cooperative Coding}

\begin{figure}[hbtp]
\centering
\includegraphics[angle=0,scale=.65]{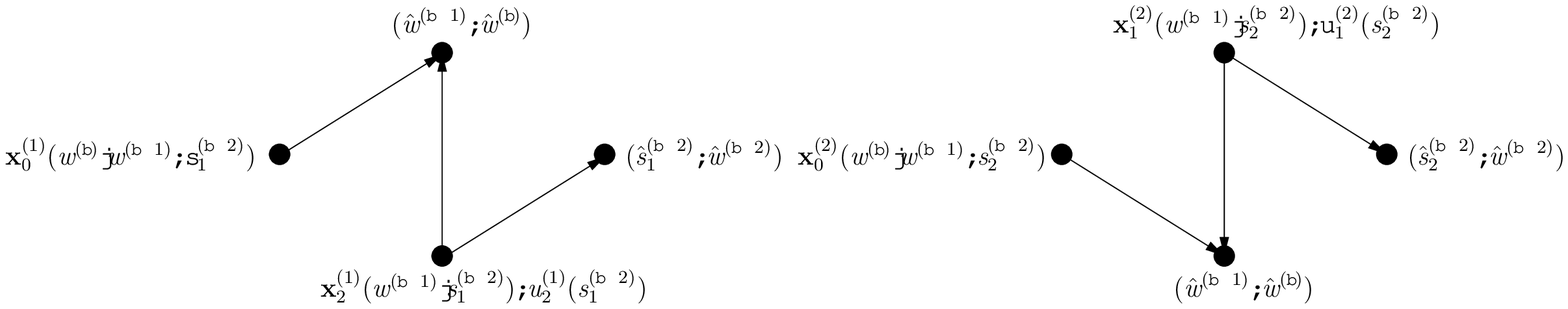}\\
\caption{\small{Successive relaying protocol based on Cooperative Coding.}}\label{fig:p1_3.eps}
\end{figure}
\begin{figure}[hbtp]
\centering
\includegraphics[angle=0,scale=.75]{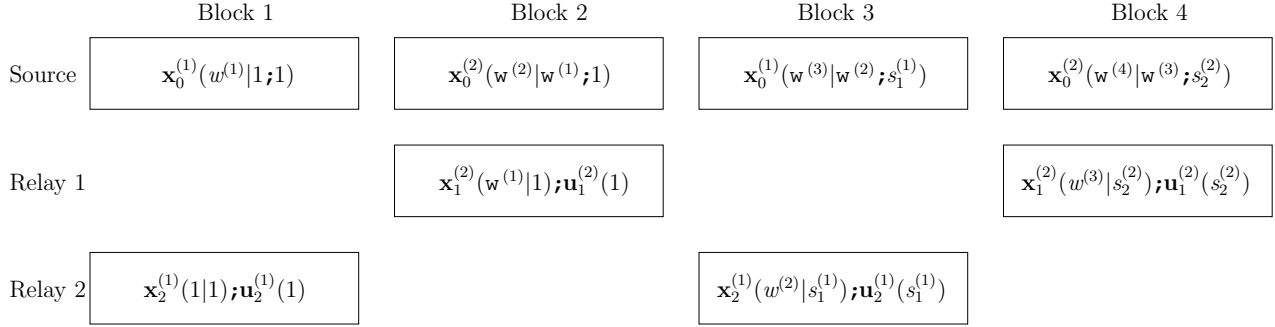}\\
\caption{\small{Decode-and-forward for successive relaying protocol.}}\label{fig:p1_2.eps}
\end{figure}
In this type of coding scheme, we assume that, at each time slot, the receiving relay decodes not only the new transmitted message from the source, but also the previous message transmitted from the transmitting relay (See Figs.~\ref{fig:p1_4.eps} and \ref{fig:p1_3.eps}).
Our proposed coding scheme is based on binning, superposition
coding, and Block Markov Encoding. The source sends $B$ messages $w^{(1)},w^{(2)},\cdots,w^{(B)}$ in $B+2$ time slots.

Generally, this scheme can be described as follows (See Figs.~\ref{fig:p1_3.eps} and~\ref{fig:p1_2.eps}). In time slot $b$, the relay $(b+1)$ mod 2$+1$ decodes the transmitted messages $w^{(b)}$ and $w^{(b-1)}$
from the source and the other relay, respectively. In time slot $b+1$, it broadcasts $w^{(b)}$ and the bin index of $w^{(b-1)}$, $s_{(b+2)\text{ mod }2+1}^{(b-1)}$, to the destination using the binning function defined next.

\emph{Definition (The Binning Function):} The
binning function $f_{Bin}^{({(b+1)\text{ mod }2+1})}(w^{(b-2)}) :W=
\{1,2,\cdots,2^{nR^{({(b+1)\text{ mod }2+1})}}\}
\\ \longrightarrow\{1,2,\ldots,2^{\emph{nr}_{Bin}^{({(b+1)\text{ mod }2+1})}}\}$ is defined
by $f_{Bin}^{({(b+1)\text{ mod }2+1})}(w^{(b-2)}) = s_{(b+1)\text{ mod }2+1}^{(b-2)}$, where $f_{Bin}^{({(b+1)\text{ mod }2+1})}(.)$ assigns a randomly uniform distributed integer between 1 and $2^{nr_{Bin}^{({(b+1)\text{ mod }2+1})}}$ independently to each member of $W$.

As indicated in Fig.~\ref{fig:p1_2.eps}, in the first time slot,
the source transmits the codeword $\textbf{x}_{0}^{(1)}(w^{(1)}|1,1)$
to the first relay, while the second relay transmits a doubly
indexed codeword $\textbf{x}_{2}^{(1)}(1|1)$ and the codeword
$\textbf{u}_{2}^{(1)}(1)$ to the first
relay and to the destination. In the second time slot, the source
transmits the codeword $\textbf{x}_{0}^{(2)}(w^{(2)}|w^{(1)},1)$
to the second relay, and having decoded the message $w^{(1)}$, the
first relay broadcasts the codewords
$\textbf{x}_{1}^{(2)}(w^{(1)}|1)$ and
$\textbf{u}_{1}^{(2)}(1)$ to the second relay and to the
destination. It should be noted that the destination cannot decode
the message $w^{(1)}$ at the end of this time slot; however, the
second relay decodes $w^{(1)}$ and $w^{(2)}$ messages. Using the binning
function, it finds the bin index of $w^{(1)}$ according to $s_{1}^{(1)}
= f_{Bin}^{(1)}(w^{(1)})$. In the third time slot, the source
transmits the codeword $\textbf{x}_{0}^{(1)}(w^{(3)}|w^{(2)},s_{1}^{(1)})$
to the first relay, and the second relay broadcasts the codewords
$\textbf{x}_{2}^{(1)}(w^{(2)}|s_{1}^{(1)})$ and
$\textbf{u}_{2}^{(1)}(s_{1}^{(1)})$ to the first relay and to the
destination.

Two types of decoding can be used at the
destination: successive decoding and backward decoding. Successive
decoding at the destination can be described as follows. At the end
of the $b$th time slot, the destination cannot decode the message
$w^{(b-1)}$; however, having decoded the bin index $s_{(b+1)\text{ mod } 2 +1}^{(b-2)}$ from the received vector of the $b$th time slot, it can
decode the message $w^{(b-2)}$ from $s_{(b+1)\text{ mod } 2 +1}^{(b-2)}$ and the received vector of the $(b-1)$th time slot. On the other hand, backward decoding can
be explained as follows. Having received the sequence of the $B+2$'th time slot, the final
destination starts decoding the intended messages. In the time slot
$B+2$, one of the relays transmits the dummy message $``1"$ along
with the bin index of the message $w^{(B)}$ to the destination. Having
received this bin index, the destination decodes it, and then
backwardly decodes messages $w^{(b)},~b=B,B-1,\cdots,1$ and their bin
indices. The following Theorem gives the achievable rate of the proposed scheme.

\begin{Theorem}\label{thm:Th1}
For the half-duplex parallel relay channel, assuming successive
relaying, the BME scheme achieves the rates $C_{BME_{succ}}^{low}$
and $C_{BME_{back}}^{low}$ using successive and backward decoding,
respectively:
\begin{IEEEeqnarray}{rl}
C_{BME_{succ}}^{low}& = R^{(1)} + R^{(2)} \leq\max_{0\leq t_{1},t_{2},t_{1}+t_{2}=1}\min\left( \right.\nonumber\\
&\left.
\min{\left(t_{1}I\left(X_{0}^{(1)};Y_{1}^{(1)}\mid X_{2}^{(1)},U_{2}^{(1)}\right), t_{2}I\left(X_{1}^{(2)}
;Y_{3}^{(2)}\mid U_{1}^{(2)}\right)+t_{1}I\left(U_{2}^{(1)};
Y_{3}^{(1)}\right)\right)}+\right.\nonumber\\
&\left.
\min{\left(t_{1}I\left(X_{2}^{(1)}
;Y_{3}^{(1)}\mid U_{2}^{(1)}\right)+t_{2}I\left(U_{1}^{(2)};
Y_{3}^{(2)}\right), t_{2}I\left(X_{0}^{(2)};Y_{2}^{(2)}\mid
X_{1}^{(2)},U_{1}^{(2)}\right)\right)},\right.\nonumber\\
&\left.
t_{1}I\left(X_{0}^{(1)},X_{2}^{(1)};Y_{1}^{(1)}\mid
U_{2}^{(1)}\right), t_{2}I\left(X_{0}^{(2)},X_{1}^{(2)}
;Y_{2}^{(2)}\mid
U_{1}^{(2)}\right)\right).\label{eq:succ}
\end{IEEEeqnarray}
with probabilities
\begin{IEEEeqnarray}{rl}
&p(x_{0}^{(1)},x_{2}^{(1)},u_{2}^{(1)})=p(u_{2}^{(1)})p(x_{2}^{(1)}|u_{2}^{(1)})p(x_{0}^{(1)}|x_{2}^{(1)},u_2^{(1)}),\nonumber\\
&p(x_{0}^{(2)},x_{1}^{(2)},u_{1}^{(2)})=p(u_{1}^{(2)})p(x_{1}^{(2)}|u_{1}^{(2)})p(x_{0}^{(2)}|x_{1}^{(2)},u_1^{(2)}),\nonumber\\
&p(x_{2}^{(1)},u_{2}^{(1)})=p(u_{2}^{(1)})p(x_{2}^{(1)}|u_{2}^{(1)}),\nonumber\\
&p(x_{1}^{(2)},u_{1}^{(2)})=p(u_{1}^{(2)})p(x_{1}^{(2)}|u_{1}^{(2)}).\nonumber
\end{IEEEeqnarray}
\begin{IEEEeqnarray}{rl}
C_{BME_{back}}^{low}& = R^{(1)} + R^{(2)} \leq\nonumber\\
&\max_{0\leq t_{1},t_{2},t_{1}+t_{2}=1}\min\left(
t_{1}I\left(X_{0}^{(1)},X_{2}^{(1)};Y_{1}^{(1)}\right),t_{2}I\left(X_{0}^{(2)},X_{1}^{(2)}
;Y_{2}^{(2)}\right),\right.\nonumber\\
&\left.t_{1}I\left(X_{0}^{(1)};Y_{1}^{(1)}\mid X_{2}^{(1)}\right)+t_{2}I\left(X_{0}^{(2)};Y_{2}^{(2)}\mid
X_{1}^{(2)}\right),\right.\nonumber\\
&\left.t_1I\left(X_{2}^{(1)};Y_{3}^{(1)}\right)+
t_2I\left(X_{1}^{(2)};Y_{3}^{(2)}\right)\right).\label{eq:Back1}
\end{IEEEeqnarray}
with probabilities
\begin{IEEEeqnarray}{rl}
&p(x_{0}^{(1)},x_{2}^{(1)})=p(x_{2}^{(1)})p(x_{0}^{(1)}|x_{2}^{(1)}),\nonumber\\
&p(x_{0}^{(2)},x_{1}^{(2)})=p(x_{1}^{(2)})p(x_{0}^{(2)}|x_{1}^{(2)}).\nonumber
\end{IEEEeqnarray}
\end{Theorem}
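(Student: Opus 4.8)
The plan is to prove the two achievable rate expressions in Theorem~\ref{thm:Th1} via a random coding argument built on block Markov encoding, where the messages $w^{(1)},\ldots,w^{(B)}$ are transmitted over $B+2$ time slots. The two receiving relays alternate odd/even roles, so without loss of generality I would analyze a generic pair of consecutive time slots and then combine the per-block constraints. The key structural fact I would exploit is the ``chaining'' of decoding events: at the receiving relay one decodes both the fresh source message and the previous cross-relay message, while at the destination the bin index carried in the next block resolves the ambiguity left by the current block.

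First I would set up the codebook generation. For each time slot I would generate cloud-center codewords $\mathbf{u}_{k}^{(\cdot)}$ carrying the bin index according to $p(u)$, and superimpose on them the relay codewords $\mathbf{x}_{k}^{(\cdot)}$ according to the stated conditional distributions $p(x|u)$; the source codeword $\mathbf{x}_{0}^{(\cdot)}$ is generated from $p(x_0 \mid x_k, u_k)$ to model coherent cooperation. The binning function $f_{Bin}^{(\cdot)}$ is as defined in the excerpt, assigning a uniform bin index $s \in \{1,\ldots,2^{nr_{Bin}}\}$ to each message. I would then carry out the error analysis in two parts: (i) the decoding at each receiving relay, and (ii) the decoding at the destination under the two decoding rules.

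For the \emph{relay decoding}, the receiving relay in each block must recover the fresh message $w^{(b)}$ (sent by the source, superimposed on the other relay's cooperative codeword) together with the previous message $w^{(b-1)}$ carried by the transmitting relay. Standard joint-typicality decoding and the packing lemma give the MAC-type constraints: the individual-message terms $t_1 I(X_0^{(1)};Y_1^{(1)}\mid X_2^{(1)},U_2^{(1)})$ and $t_2 I(X_0^{(2)};Y_2^{(2)}\mid X_1^{(2)},U_1^{(2)})$ for the source message, and the sum constraints $t_1 I(X_0^{(1)},X_2^{(1)};Y_1^{(1)}\mid U_2^{(1)})$ and $t_2 I(X_0^{(2)},X_1^{(2)};Y_2^{(2)}\mid U_1^{(2)})$ for decoding both messages given the known bin index $U$; these supply the last two arguments of the $\min$ in~\eqref{eq:succ} and the corresponding terms in~\eqref{eq:Back1}. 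For \emph{successive decoding} at the destination, the bin index $s$ is decoded from the $b$th block via $U$ (rate $t I(U;Y_3)$), and the message $w^{(b-2)}$ is then resolved from the $(b-1)$th received vector restricted to the bin; combining the rate spent forwarding a message directly with the rate gained from the bin index produces the two mixed terms involving $I(X_2^{(1)};Y_3^{(1)}\mid U_2^{(1)})+t_2 I(U_1^{(2)};Y_3^{(2)})$ and its counterpart. For \emph{backward decoding}, I would start from block $B+2$ (where a dummy message pins down $s^{(B)}$) and decode backward; because the entire future is already known, the relay codewords and their clouds can be decoded jointly at the destination, collapsing the binning constraints and yielding the simpler cut-set-like sum term $t_1 I(X_2^{(1)};Y_3^{(1)})+t_2 I(X_1^{(2)};Y_3^{(2)})$ in~\eqref{eq:Back1}, with the auxiliary variables $U$ no longer needed (hence the simpler input distribution).

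The \textbf{main obstacle} I anticipate is the bookkeeping that ties the bin-index decoding across blocks to the message decoding, and showing that the error events are asymptotically independent so the union bound is tight: in the successive case the rate-split between $r_{Bin}$ and the residual message rate must be chosen so that the bin index is reliably decoded from $U$ in one block \emph{and} the message is uniquely identified within its bin from the adjacent block. I would handle this by carefully defining the joint-typicality error events for each (message, bin) pair, applying the packing lemma to bound the probability of a spurious codeword being jointly typical, and letting $B\to\infty$ so the two edge blocks contribute vanishing rate loss $\tfrac{2}{B+2}\to 0$. The comparison $C_{BME_{back}}^{low}\ge C_{BME_{succ}}^{low}$ asserted earlier follows because backward decoding removes the intermediate bin-index bottleneck, but proving the individual rate regions only requires exhibiting that each claimed constraint is the binding one in its respective error event, which the above construction does.
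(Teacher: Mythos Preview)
Your approach matches the paper's proof in all its main components: the codebook construction via superposition, the encoding at source and relays, the MAC-type joint-typicality decoding at the receiving relay yielding the individual and sum constraints on $R^{(1)}$, $R^{(2)}$, and the two destination decoding rules. For the successive-decoding case your sketch is essentially complete and coincides with the paper's argument.

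For backward decoding, however, you gloss over a step that the paper treats carefully and that is not automatic. Backward decoding at the destination does \emph{not} immediately produce the $U$-free constraints in~\eqref{eq:Back1}; it first yields a system of six inequalities still involving the bin rates $r_{Bin}^{(1)}, r_{Bin}^{(2)}$ and the auxiliaries $U_2^{(1)}, U_1^{(2)}$ (of the form $R^{(2)}-r_{Bin}^{(2)}+r_{Bin}^{(1)}\le t_1 I(X_2^{(1)},U_2^{(1)};Y_3^{(1)})$, etc.). Two further arguments are then needed. First, a Markov-chain argument shows that, without loss of optimality, the input distributions factor as $p(u)p(x_0,x_k)$, so that $U$ is independent of $(X_0,X_k)$ and the conditioning on $U$ can be dropped from every mutual-information term. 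Second, a Fourier--Motzkin-type elimination of $(r_{Bin}^{(1)}, r_{Bin}^{(2)})$ is carried out, together with a constructive converse showing that any rate $r$ satisfying the simplified region~\eqref{eq:Back1} is realized by some admissible choice of $(R^{(1)},R^{(2)},r_{Bin}^{(1)},r_{Bin}^{(2)})$---in fact with one bin rate equal to zero, which is precisely what justifies the later BME--DPC hybrid. Your sentence ``collapsing the binning constraints and yielding the simpler cut-set-like sum term \ldots\ with the auxiliary variables $U$ no longer needed'' hides exactly these two steps; they are where the substantive work in the backward-decoding half of the proof lies, and they do not follow merely from ``the entire future being known.''
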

\begin{proof}
See Appendix B.
\end{proof}

Now, the following set of propositions and corollaries investigate the Non-Cooperative and Cooperative schemes and compare them with each other.
\begin{Proposition}
The BME with backward decoding achieves a better rate than the one with successive decoding, i.e., $C_{BME_{back}}^{low}\geq C_{BME_{succ}}^{low}$.
\end{Proposition}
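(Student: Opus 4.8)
The plan is to show that every time-sharing pair and joint distribution admissible for the successive-decoding bound can be mapped to an admissible backward-decoding distribution whose rate is no smaller; since each rate is a supremum over its own feasible set, this yields $C_{BME_{back}}^{low}\geq C_{BME_{succ}}^{low}$. First I would fix the time-sharing $(t_1,t_2)$ and the joint distribution $p^{\ast}$ of the form $p(u_2^{(1)})p(x_2^{(1)}|u_2^{(1)})p(x_0^{(1)}|x_2^{(1)},u_2^{(1)})$ (and its slot-$2$ counterpart) attaining $C_{BME_{succ}}^{low}$. Marginalizing out the auxiliaries $U_2^{(1)}$ and $U_1^{(2)}$ produces joints that factor as $p(x_2^{(1)})p(x_0^{(1)}|x_2^{(1)})$ and $p(x_1^{(2)})p(x_0^{(2)}|x_1^{(2)})$, i.e. exactly the admissible form for $C_{BME_{back}}^{low}$, and evaluated on the \emph{same} $(t_1,t_2)$. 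The whole argument then rests on two Markov chains that hold because every channel output is a function of the channel inputs plus independent noise: $U_2^{(1)}\to(X_0^{(1)},X_2^{(1)})\to Y_1^{(1)}$ and $U_2^{(1)}\to X_2^{(1)}\to Y_3^{(1)}$, together with their slot-$2$ analogues.

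Let $S_1,S_2,S_3$ denote the three arguments of the successive $\min$ in \eqref{eq:succ} ($S_1$ being the sum of two inner minima, $S_2,S_3$ the two single conditional-mutual-information terms), and let $B_1,B_2,B_3,B_4$ denote the four arguments of the backward $\min$ in \eqref{eq:Back1}. I would prove $B_i\geq\min(S_1,S_2,S_3)$ for each $i$, whence $\min_iB_i\geq\min_jS_j$ and the claim follows. The two easy comparisons are $B_1\geq S_2$ and $B_2\geq S_3$: removing the conditioning on $U_2^{(1)}$ can only enlarge $I(X_0^{(1)},X_2^{(1)};Y_1^{(1)})$, because by the first Markov chain $H(Y_1^{(1)}\mid X_0^{(1)},X_2^{(1)},U_2^{(1)})=H(Y_1^{(1)}\mid X_0^{(1)},X_2^{(1)})$ while $H(Y_1^{(1)}\mid U_2^{(1)})\leq H(Y_1^{(1)})$.

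For $B_3$ I write $S_1=\min(a_1,a_2)+\min(a_3,a_4)$, where $a_1=t_1I(X_0^{(1)};Y_1^{(1)}\mid X_2^{(1)},U_2^{(1)})$, $a_4=t_2I(X_0^{(2)};Y_2^{(2)}\mid X_1^{(2)},U_1^{(2)})$, and $a_2,a_3$ are the remaining two bracketed terms. The same conditioning-reduction argument gives $I(X_0^{(1)};Y_1^{(1)}\mid X_2^{(1)})\geq I(X_0^{(1)};Y_1^{(1)}\mid X_2^{(1)},U_2^{(1)})$, so pairing the inner minima by the ``source'' terms, $S_1\leq a_1+a_4\leq B_3$. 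The crux is $B_4\geq S_1$, where I use the opposite pairing $S_1=\min(a_1,a_2)+\min(a_3,a_4)\leq a_2+a_3$. Grouping the slot-$1$ contributions of $a_2$ and $a_3$ gives $t_1\bigl(I(U_2^{(1)};Y_3^{(1)})+I(X_2^{(1)};Y_3^{(1)}\mid U_2^{(1)})\bigr)=t_1I(U_2^{(1)},X_2^{(1)};Y_3^{(1)})$, and the second Markov chain $U_2^{(1)}\to X_2^{(1)}\to Y_3^{(1)}$ collapses this to exactly $t_1I(X_2^{(1)};Y_3^{(1)})$; the slot-$2$ terms collapse identically, so $a_2+a_3=B_4$ with equality, giving $S_1\leq B_4$.

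I expect this telescoping of the cooperation (auxiliary) terms into the plain relay-to-destination terms to be the main subtlety: one must pair the two inner minima of $S_1$ with two \emph{different} backward constraints ($B_3$ and $B_4$), and one must note that $I(X_2^{(1)};Y_3^{(1)})$ is unchanged by marginalizing out $U_2^{(1)}$ because the marginal $p(x_2^{(1)})$ and the fixed channel $p(y_3^{(1)}|x_2^{(1)})$ are all that it depends on. Collecting $B_1\geq S_2$, $B_2\geq S_3$, $B_3\geq S_1$, and $B_4\geq S_1$, every $B_i$ dominates $\min(S_1,S_2,S_3)$, so the backward rate evaluated at the marginalized $p^{\ast}$ is at least $C_{BME_{succ}}^{low}$, and taking the supremum over backward distributions completes the proof.
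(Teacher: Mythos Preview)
Your proof is correct and follows essentially the same approach as the paper: both arguments rely on the Markov chains $U_2^{(1)}\to(X_0^{(1)},X_2^{(1)})\to Y_1^{(1)}$ and $U_2^{(1)}\to X_2^{(1)}\to Y_3^{(1)}$ (and their slot-$2$ analogues), the fact that conditioning reduces entropy, and the chain rule collapsing $I(U_2^{(1)};Y_3^{(1)})+I(X_2^{(1)};Y_3^{(1)}\mid U_2^{(1)})$ to $I(X_2^{(1)};Y_3^{(1)})$. The paper packages the two pairings of $S_1$ as a single inequality $\min(a_1,a_2)+\min(a_3,a_4)\leq\min(a_1+a_4,a_2+a_3)$ and then bounds each sum, whereas you treat $B_3\geq S_1$ and $B_4\geq S_1$ separately; but the content is identical, and your explicit bookkeeping of which $B_i$ dominates which $S_j$ is if anything clearer than the paper's terse ``using the similar argument'' wrap-up.
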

\begin{proof}
For the first term of minimization (\ref{eq:succ}), we have
\begin{IEEEeqnarray}{rl}
&\min{\left(t_{1}I\left(X_{0}^{(1)};Y_{1}^{(1)}\mid X_{2}^{(1)},U_{2}^{(1)}\right), t_{2}I\left(X_{1}^{(2)}
;Y_{3}^{(2)}\mid U_{1}^{(2)}\right)+t_{1}I\left(U_{2}^{(1)};
Y_{3}^{(1)}\right)\right)}+\nonumber\\
&\min{\left(t_{1}I\left(X_{2}^{(1)}
;Y_{3}^{(1)}\mid U_{2}^{(1)}\right)+t_{2}I\left(U_{1}^{(2)};
Y_{3}^{(2)}\right), t_{2}I\left(X_{0}^{(2)};Y_{2}^{(2)}\mid
X_{1}^{(2)},U_{1}^{(2)}\right)\right)}\leq \nonumber\\
&\min \left(t_1I\left(X_{0}^{(1)};Y_{1}^{(1)}\mid X_{2}^{(1)},U_{2}^{(1)}\right)+
t_2I\left(X_{0}^{(2)};Y_{2}^{(2)}\mid
X_{1}^{(2)},U_{1}^{(2)}\right),\right.\nonumber\\
&\left.t_1I\left(X_2^{(1)},U_2^{(1)};Y_3^{(1)}\right) + t_2I\left(X_1^{(2)},U_1^{(2)};Y_3^{(2)}\right)\right).\label{eq:R_back_succ}
\end{IEEEeqnarray}
Let us focus on $t_1I\left(X_{0}^{(1)};Y_{1}^{(1)}\mid X_{2}^{(1)},U_{2}^{(1)}\right)+
t_2I\left(X_{0}^{(2)};Y_{2}^{(2)}\mid
X_{1}^{(2)},U_{1}^{(2)}\right)$:
\begin{IEEEeqnarray}{rl}
&t_1I\left(X_{0}^{(1)};Y_{1}^{(1)}\mid X_{2}^{(1)},U_{2}^{(1)}\right)+
t_2I\left(X_{0}^{(2)};Y_{2}^{(2)}\mid
X_{1}^{(2)},U_{1}^{(2)}\right)\stackrel{(a)}{=}\nonumber\\
&t_1H\left(Y_{1}^{(1)}\mid X_{2}^{(1)},U_{2}^{(1)}\right)-t_1H\left(Y_{1}^{(1)}\mid X_{0}^{(1)},X_{2}^{(1)}\right)+\nonumber\\
&t_2H\left(Y_{2}^{(2)}\mid X_{1}^{(2)},U_{1}^{(2)}\right)-t_2H\left(Y_{2}^{(2)}\mid X_{0}^{(2)},X_{1}^{(2)}\right)\stackrel{(b)}{\leq}\nonumber\\
&t_1H\left(Y_{1}^{(1)}\mid X_{2}^{(1)}\right)-t_1H\left(Y_{1}^{(1)}\mid X_{0}^{(1)},X_{2}^{(1)}\right)+\nonumber\\
&t_2H\left(Y_{2}^{(2)}\mid X_{1}^{(2)}\right)-t_2H\left(Y_{2}^{(2)}\mid X_{0}^{(2)},X_{1}^{(2)}\right)\stackrel{(c)}{=}\nonumber\\
&t_1I\left(X_{0}^{(1)};Y_{1}^{(1)}\mid X_{2}^{(1)}\right)+
t_2I\left(X_{0}^{(2)};Y_{2}^{(2)}\mid
X_{1}^{(2)}\right).
\end{IEEEeqnarray}
$(a)$ and $(c)$ follow from the definition of mutual information, the fact that  $U_2^{(1)}\longrightarrow
\left(X_0^{(1)},X_2^{(1)}\right)$ $\longrightarrow Y_1^{(1)}$ and $U_1^{(2)}\longrightarrow
\left(X_0^{(2)},X_1^{(2)}\right)\longrightarrow Y_2^{(2)}$ form Markov chain, and $(b)$ follows from the fact that conditioning reduces entropy. Inequality $(b)$ becomes equality if $p(x_{0}^{(1)},x_{2}^{(1)},u_2^{(1)})=p(u_{2}^{(1)})p(x_{2}^{(1)})p(x_{0}^{(1)}|x_{2}^{(1)})$ and $p(x_{0}^{(2)},x_{1}^{(2)},u_1^{(2)})=p(u_{1}^{(2)})p(x_{1}^{(2)})p(x_{0}^{(2)}|x_{1}^{(2)})$ . Using the similar argument for $t_1I\left(X_2^{(1)},U_2^{(1)};Y_3^{(1)}\right) + t_2I\left(X_1^{(2)},U_1^{(2)};Y_3^{(2)}\right)$,\\ $t_{1}I\left(X_{0}^{(1)},X_{2}^{(1)};Y_{1}^{(1)}\mid
U_{2}^{(1)}\right)$, and $t_{2}I\left(X_{0}^{(2)},X_{1}^{(2)};Y_{2}^{(2)}\mid
U_{1}^{(2)}\right)$ in (\ref{eq:succ}) and (\ref{eq:R_back_succ}), and the fact $U_2^{(1)}\longrightarrow X_2^{(1)}\longrightarrow Y_3^{(1)}$, $U_1^{(2)}\longrightarrow X_1^{(2)}\longrightarrow Y_3^{(2)}$, $U_2^{(1)}\longrightarrow
\left(X_0^{(1)},X_2^{(1)}\right)\longrightarrow Y_1^{(1)}$, $U_1^{(2)}\longrightarrow
\left(X_0^{(2)},X_1^{(2)}\right)\longrightarrow Y_2^{(2)}$ form Markov chain, and Appendix B, along with comparing $C_{BM_{succ}}^{low}$ and $C_{BM_{back}}^{low}$ in Theorem \ref{thm:Th1}, we have $C_{BM_{back}}^{low}\geq C_{BM_{succ}}^{low}$.
\end{proof}
From Theorem \ref{thm:Th1}, we have the following
corollary for the Gaussian case.

\begin{corollary}\label{cor:cor1}
For the half-duplex Gaussian parallel relay channel, assuming
successive relaying protocol with power constraint at the source and
each relay, BME achieves the following rates
\begin{eqnarray}
C_{BME_{succ}}^{low}&=&\max~\min\left(C_{BME_1}^{low}+C_{BME_2}^{low},\right.\nonumber\\
&&\left.
t_{1}C\left(\frac{h_{01}^2P_{0}^{(1)}+h_{12}^2\theta_{2}P_2
+2h_{01}h_{12}\sqrt{\bar{\alpha}_{1}\theta_{2}P_{0}^{(1)}P_2}}{t_{1}}\right),\right.\nonumber\\
&&\left.
t_{2}C\left(\frac{h_{02}^2P_{0}^{(2)}+h_{12}^2\theta_{1}P_1
+2h_{02}h_{12}\sqrt{\bar{\alpha}_{2}\theta_{1}P_{0}^{(2)}P_1}}{t_{2}}\right)\right).\label{eq:R_BM_1}
\end{eqnarray}
\begin{eqnarray}
C_{BME_{back}}^{low}&=&\max~\min\left(t_1C\left(\frac{h_{01}^2P_0^{(1)}
+h_{12}^{2}P_2+2h_{01}h_{12}\sqrt{\bar{\beta}_1P_0^{(1)}P_2}}{t_1}\right),\right.\nonumber\\
&&\left.
t_2C\left(\frac{h_{02}^2P_0^{(2)}
+h_{12}^{2}P_1+2h_{02}h_{12}\sqrt{\bar{\beta}_2P_0^{(2)}P_1}}{t_2}\right),\right.\nonumber\\
&&\left.
t_1C\left(\frac{h_{01}^{2}\beta_1P_0^{(1)}}{t_1}\right)+
t_2C\left(\frac{h_{02}^{2}\beta_2P_0^{(2)}}{t_2}\right),\right.\nonumber\\
&&\left.t_1C\left(\frac{h_{23}^{2}P_2}{t_1}\right)+
t_2C\left(\frac{h_{13}^{2}P_1}{t_2}\right)\right).\label{eq:R_BM_b}
\end{eqnarray}
\begin{IEEEeqnarray}{rl}
&\text{subject to:}\nonumber\\
&~~~~~C_{BME_1}^{low} = \min\left(t_{1}C\left(\frac{h_{01}^2\alpha_{1}P_{0}^{(1)}}{t_{1}}\right),
t_{1}C\left(\frac{h_{23}^2\bar{\theta}_{2}P_2}{h_{23}^2\theta_{2}P_2+t_{1}}\right)
+t_{2}C\left(\frac{h_{13}^2\theta_{1}P_1}{t_{2}}\right)\right),\label{eq:R_BM_2}\\
&~~~~~C_{BME_2}^{low} = \min\left(t_{2}C\left(\frac{h_{02}^2\alpha_{2}P_{0}^{(2)}}{t_{2}}\right),
t_{2}C\left(\frac{h_{13}^2\bar{\theta}_{1}P_1}{h_{13}^2\theta_{1}P_1+t_{2}}\right)
+t_{1}C\left(\frac{h_{23}^2\theta_{2}P_2}{t_{1}}\right)\right),\label{eq:R_BM_3}\\
&~~~~~P_{0}^{(1)} + P_{0}^{(2)} = P_0,\nonumber\\
&~~~~~t_{1} + t_{2} = 1,\nonumber\\
&~~~~~0\leq \alpha_{1},\alpha_{2} \leq1,\nonumber\\
&~~~~~0\leq \beta_{1},\beta_{2} \leq1,\nonumber\\
&~~~~~0\leq \theta_{1},\theta_{2} \leq1.\nonumber
\end{IEEEeqnarray}

where $\bar{\theta}_{i} = 1 - \theta_{i}$, $\bar{\alpha}_{i} = 1 - \alpha_{i}$, and $\bar{\beta}_{i} = 1 - \beta_{i}$ for $i = 1, 2$.\\
\end{corollary}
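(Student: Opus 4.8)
The plan is to specialize the single-letter region of Theorem~\ref{thm:Th1} to jointly Gaussian auxiliary and channel-input variables. Since Corollary~\ref{cor:cor1} only asserts achievability, it suffices to exhibit one family of feasible Gaussian distributions — respecting the factorizations required in Theorem~\ref{thm:Th1} — and to evaluate every mutual-information term for that family; maximizing over the resulting free parameters then yields the claimed rates as a lower bound on the general region. Recall that in successive relaying $t_3=t_4=0$, so relay~2 uses all of $P_2$ in slot~1 and relay~1 uses all of $P_1$ in slot~2, and that the per-slot normalization is power $P/t_b$ per dimension against unit noise, which produces the $t_b$-factors and the $+t_b$ terms in the denominators.

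For the successive-decoding rate (\ref{eq:succ}) I would use a superposition/coherent-combining construction. In slot~1 set $X_2^{(1)} = U_2^{(1)} + W_2$ with $U_2^{(1)} \sim \mathcal{N}(0, \bar\theta_2 P_2)$ (the bin-index cloud center) and $W_2 \sim \mathcal{N}(0, \theta_2 P_2)$ independent (the fresh message), so that $X_2^{(1)}$ meets the constraint $P_2$ and the factorization $p(u_2^{(1)})p(x_2^{(1)}\mid u_2^{(1)})$ holds. The source is split as $X_0^{(1)} = \sqrt{\bar\alpha_1 P_0^{(1)}/(\theta_2 P_2)}\,W_2 + X_0'$ with $X_0' \sim \mathcal{N}(0,\alpha_1 P_0^{(1)})$ independent of everything, so $\mathrm{Var}(X_0^{(1)})=P_0^{(1)}$, the conditional law $p(x_0^{(1)}\mid x_2^{(1)},u_2^{(1)})$ is consistent, and $X_0^{(1)}$ combines coherently with the fresh relay part $W_2$. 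The slot-2 variables are defined symmetrically using $\theta_1,\alpha_2$. Evaluating each term via the Gaussian identity $I=C(\mathrm{SNR})$: conditioning on $(X_2^{(1)},U_2^{(1)})$ leaves the source residual $\alpha_1 P_0^{(1)}$, giving $t_1 C\!\left(h_{01}^2\alpha_1 P_0^{(1)}/t_1\right)$; conditioning only on $U_2^{(1)}$ leaves $W_2$, giving $t_1 C\!\left(h_{23}^2\theta_2 P_2/t_1\right)$ and, with $W_2$ treated as noise, $t_1 C\!\left(h_{23}^2\bar\theta_2 P_2/(h_{23}^2\theta_2 P_2 + t_1)\right)$ for $I(U_2^{(1)};Y_3^{(1)})$. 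Collecting these reproduces $C_{BME_1}^{low}$, $C_{BME_2}^{low}$ and the two coherent-combining terms of (\ref{eq:R_BM_1}).

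For the backward-decoding rate (\ref{eq:Back1}) the construction is simpler since no auxiliary splitting is needed: the factorization is just $p(x_2^{(1)})p(x_0^{(1)}\mid x_2^{(1)})$, so I would take $X_2^{(1)}\sim\mathcal{N}(0,P_2)$ at full power and $X_0^{(1)} = \sqrt{\bar\beta_1 P_0^{(1)}/P_2}\,X_2^{(1)} + X_0'$ with $X_0'\sim\mathcal{N}(0,\beta_1 P_0^{(1)})$, and symmetrically in slot~2 with $\beta_2$. Evaluating the four arguments of (\ref{eq:Back1}) is then routine: $I(X_2^{(1)};Y_3^{(1)})$ uses the full power $P_2$; $I(X_0^{(1)};Y_1^{(1)}\mid X_2^{(1)})$ leaves the residual $\beta_1 P_0^{(1)}$; and $I(X_0^{(1)},X_2^{(1)};Y_1^{(1)})$ gives the coherent sum $h_{01}^2P_0^{(1)}+h_{12}^2P_2+2h_{01}h_{12}\sqrt{\bar\beta_1 P_0^{(1)}P_2}$, since $h_{21}=h_{12}$ and the cross-covariance of $X_0^{(1)}$ with $X_2^{(1)}$ is $\sqrt{\bar\beta_1 P_0^{(1)}P_2}$. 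These match the four terms of (\ref{eq:R_BM_b}) after the $t_b$-scaling.

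The main obstacle is bookkeeping rather than conceptual: I must track the joint covariance matrices carefully so that each cross term emerges exactly as $2h_{0k}h_{12}\sqrt{\bar\alpha_i\theta_j P_0^{(k)}P_l}$ (resp. with $\bar\beta_i$), and so that conditioning on the correct subset of $\{U,X\}$ reduces the relevant variance to the intended residual power; I would verify each conditional variance through the standard Gaussian MMSE formula. I also need to confirm that every chosen distribution satisfies the Markov constraints invoked in Theorem~\ref{thm:Th1} (e.g. $U_2^{(1)}\!\to\!(X_0^{(1)},X_2^{(1)})\!\to\!Y_1^{(1)}$), which they do by construction, since the auxiliaries enter the channel outputs only through the inputs. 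Optimality of the Gaussian choice is not required; the $\max$ over $(\alpha_i,\beta_i,\theta_i,t_b,P_0^{(k)})$ merely records the freedom in the construction and delivers the stated achievable rates.
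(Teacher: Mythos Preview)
Your proposal is correct and is essentially identical to the paper's own proof: the paper defines the same superposition variables (writing $V_2^{(1)}$ for your $W_2$ and $V_0^{(1)}$ for your $X_0'$, with $X_2^{(1)}=V_2^{(1)}+U_2^{(1)}$ and $X_0^{(1)}=V_0^{(1)}+\sqrt{\bar\alpha_1 P_0^{(1)}/(\theta_2 P_2)}\,V_2^{(1)}$ for successive decoding, and $X_0^{(1)}=V_0^{(1)}+\sqrt{\bar\beta_1 P_0^{(1)}/P_2}\,X_2^{(1)}$ for backward decoding) and then plugs into Theorem~\ref{thm:Th1}. Your additional remarks on verifying the factorizations, tracking covariances via Gaussian MMSE, and noting that optimality of the Gaussian choice is not required are all appropriate and slightly more explicit than the paper's terse proof.
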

\begin{proof}
Let $V_{0}^{(1)}\sim\mathcal{N}(0,\alpha_{1}P_{0}^{(1)}), V_{0}^{(2)}\sim\mathcal{N}
(0,\alpha_{2}P_{0}^{(2)}),$ $V_{2}^{(1)}\sim\mathcal{N}(0,\theta_{2}P_2),
V_{1}^{(2)}\sim\mathcal{N}(0,\theta_{1}P_1)$, $ U_{2}^{(1)}\sim\mathcal{N}(0,\bar{\theta}_{2}P_2)$
and $U_{1}^{(2)}\sim\mathcal{N}(0,\bar{\theta}_{1}P_1)$, which are independent of each other.

Letting $X_{0}^{(1)} = V_{0}^{(1)} + \sqrt{\frac{\bar{\alpha}_{1}
P_{0}^{(1)}}{\theta_{2}P_2}}V_{2}^{(1)}, X_{0}^{(2)} =
V_{0}^{(2)} + \sqrt{\frac{\bar{\alpha}_{2}
P_{0}^{(2)}}{\theta_{1}P_1}} V_{1}^{(2)},X_{2}^{(1)} = V_{2}^{(1)} +
U_{2}^{(1)}$, $X_{1}^{(2)} = V_{1}^{(2)} +
U_{1}^{(2)}$ and using the result in the expression for the
achievable rate obtained in Theorem~\ref{thm:Th2}, we obtain $C_{{BME}_{succ}}^{low}$ for the Gaussian case, as given in~\cite{Yong} and (\ref{eq:R_BM_1}), (\ref{eq:R_BM_2}), and (\ref{eq:R_BM_3}), respectively.

For backward decoding, let $V_{0}^{(1)}\sim\mathcal{N}(0,\beta_{1}P_{0}^{(1)}), V_{0}^{(2)}\sim\mathcal{N}
(0,\beta_{2}P_{0}^{(2)}),$ $X_{2}^{(1)}\sim\mathcal{N}(0,P_2)$, and
$X_{1}^{(2)}\sim\mathcal{N}(0,P_1)$, which are independent of each other. By setting $X_{0}^{(1)} = V_{0}^{(1)} + \sqrt{\frac{\bar{\beta}_{1}
P_{0}^{(1)}}{P_2}}X_{2}^{(1)}$, $X_{0}^{(2)} =
V_{0}^{(2)} + \sqrt{\frac{\bar{\beta}_{2}
P_{0}^{(2)}}{P_1}} X_{1}^{(2)}$ and using the result in the expression for the
achievable rate obtained in Theorem~\ref{thm:Th2}, we obtain $C_{{BME}_{back}}^{low}$ for the Gaussian case, as given in (\ref{eq:R_BM_b}).
\end{proof}
\begin{Proposition}
In symmetric scenarios, where $h_{01}=h_{02}$, $h_{13}=h_{23}$, and $P_1=P_2$, Non-Cooperative DPC scheme outperforms Cooperative BME scheme, i.e. $C_{BME_{back}}^{low}\leq C_{DPC}^{low}$.
\end{Proposition}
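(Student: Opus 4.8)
The plan is to show that the right-hand side of (\ref{eq:R_BM_b}) is at most the right-hand side of (\ref{eq:R_DPC_1}). Denote the four arguments of the minimization in (\ref{eq:R_BM_b}) by $T_1,T_2,T_3,T_4$, so that $T_3 = t_1 C(h_{01}^2\beta_1 P_0^{(1)}/t_1) + t_2 C(h_{02}^2\beta_2 P_0^{(2)}/t_2)$ is the individual source-to-relay term and $T_4 = t_1 C(h_{23}^2 P_2/t_1) + t_2 C(h_{13}^2 P_1/t_2)$ is the relay-to-destination term. Since $C_{BME_{back}}^{low}\le \min(T_3,T_4)$ at every feasible parameter choice, and since $\beta_1,\beta_2\le 1$ forces $T_3 \le A_1 + A_2$ with $A_1 := t_1 C(h_{01}^2 P_0^{(1)}/t_1)$ and $A_2 := t_2 C(h_{02}^2 P_0^{(2)}/t_2)$, I would first reduce the claim to
\[
C_{BME_{back}}^{low}\ \le\ \max_{t_1+t_2=1,\ P_0^{(1)}+P_0^{(2)}=P_0}\ \min\big(A_1+A_2,\ B_1+B_2\big),
\]
where $B_1 := t_1 C(h_{23}^2 P_2/t_1)$ and $B_2 := t_2 C(h_{13}^2 P_1/t_2)$, so that $T_4 = B_1 + B_2$. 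This step deliberately discards the inter-relay coefficient $h_{12}$, which is essential because $T_1,T_2$ are the only terms containing $h_{12}$ and can be arbitrarily large.

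The main obstacle is that one cannot conclude by a term-by-term comparison at equal parameters: the DPC rate (\ref{eq:R_DPC_1}) equals $\min(A_1,B_2)+\min(A_2,B_1)$, and in the mixed regime $A_1\le B_2$, $A_2 > B_1$ the quantity $\min(A_1+A_2,B_1+B_2)$ strictly exceeds $\min(A_1,B_2)+\min(A_2,B_1)$. The way around this is to optimize the two schemes over \emph{different} time/power allocations and exploit the symmetry $h_{01}=h_{02}$, $h_{13}=h_{23}$, $P_1=P_2$. The key observation is that $g(t,P) := t\,C(h^2 P/t)$ is the perspective of the concave function $P\mapsto C(h^2 P)$ and is therefore jointly concave for $t>0$; hence $\min(A_1+A_2,B_1+B_2)$, a minimum of two sums of such perspectives, is concave on the convex feasible set $\{t_1+t_2=1,\ P_0^{(1)}+P_0^{(2)}=P_0,\ \text{all}\ \ge 0\}$.

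Because the objective is moreover invariant under the involution $(t_1,P_0^{(1)})\leftrightarrow(t_2,P_0^{(2)})$ (here is where $h_{01}=h_{02}$, $h_{13}=h_{23}$, $P_1=P_2$ are used), averaging any maximizer with its image under this involution and invoking concavity shows the maximum is attained at the symmetric point $t_1=t_2=\tfrac12$, $P_0^{(1)}=P_0^{(2)}=\tfrac{P_0}{2}$. There the bound collapses to $V := \min\big(C(h_{01}^2 P_0),\ C(2h_{13}^2 P_1)\big)$, so $C_{BME_{back}}^{low}\le V$. Evaluating the DPC expression (\ref{eq:R_DPC_1}) at this same symmetric allocation gives $\min(A_1,B_2)+\min(A_2,B_1) = 2\min\big(\tfrac12 C(h_{01}^2 P_0),\ \tfrac12 C(2h_{13}^2 P_1)\big) = V$, whence $C_{DPC}^{low}\ge V\ge C_{BME_{back}}^{low}$. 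The only delicate step is the concavity-plus-symmetry reduction to the symmetric operating point; the remaining evaluations are routine substitutions of $t=\tfrac12$ and an equal power split.
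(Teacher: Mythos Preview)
Your proof is correct and follows the same overall strategy as the paper: bound $C_{BME_{back}}^{low}$ from above, evaluate $C_{DPC}^{low}$ from below at the symmetric allocation $t_1=t_2=\tfrac12$, $P_0^{(1)}=P_0^{(2)}=\tfrac{P_0}{2}$, and compare. There are, however, two points on which your treatment is more careful than the paper's. First, the paper simply asserts ``due to the symmetric assumption'' that the BME optimum has $t_1=t_2=\tfrac12$, $P_0^{(1)}=P_0^{(2)}=\tfrac{P_0}{2}$ \emph{and} $\beta_1=\beta_2=\tfrac12$; the last of these is not a consequence of symmetry (nothing in the problem exchanges $\beta$ with $1-\beta$), and for large $h_{12}$ the optimal $\beta$ is close to $1$. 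You avoid this by discarding $T_1,T_2$ and using only $T_3\le A_1+A_2$, valid for every $\beta\in[0,1]$, which gives the slightly weaker but fully justified bound $C_{BME_{back}}^{low}\le\min\bigl(C(h_{01}^2P_0),\,C(2h_{13}^2P_1)\bigr)$ in place of the paper's $\min\bigl(C(h_{01}^2P_0/2),\,C(2h_{13}^2P_1)\bigr)$. Second, you actually \emph{prove} that the symmetric point maximizes $\min(A_1+A_2,B_1+B_2)$ via joint concavity of the perspective $t\,C(h^2P/t)$ together with the involution $(t_1,P_0^{(1)})\leftrightarrow(t_2,P_0^{(2)})$, whereas the paper takes the symmetry reduction for granted. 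Since $C_{DPC}^{low}$ evaluated at the symmetric point equals $\min\bigl(C(h_{01}^2P_0),\,C(2h_{13}^2P_1)\bigr)$ exactly, your weaker upper bound on $C_{BME_{back}}^{low}$ still closes the inequality, and your argument is the more rigorous of the two.
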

\begin{proof}
Due to the symmetric assumption, we have $t_1=t_2=\frac{1}{2}$, $P_{0}^{(1)}=P_{0}^{(2)}=\frac{P_0}{2}$, and $\beta_1=\beta_2=\frac{1}{2}$. Hence, from~(\ref{eq:R_BM_b}), we have
\begin{eqnarray}
C_{BME_{back}}^{low}\leq\min\left(C\left(\frac{h_{01}^2P_0}{2}\right),
C\left(2h_{13}^2P_1\right)\right).\label{eq:R_BM_Sym}
\end{eqnarray}
And also $C_{DPC}^{low}$ in~(\ref{eq:R_DPC_1}) becomes
\begin{eqnarray}
C_{DPC}^{low}=\min\left(C\left(h_{01}^2P_0\right),
\frac{1}{2}C\left(h_{01}^2P_0\right)+\frac{1}{2}C\left(2h_{13}^2P_1\right),
C\left(2h_{13}^2P_1\right)\right).\label{eq:R_DPC_Sym}
\end{eqnarray}
Comparing~(\ref{eq:R_BM_Sym}) and~(\ref{eq:R_DPC_Sym}), we have $C_{BME_{back}}^{low}\leq C_{DPC}^{low}$.
\end{proof}

According to the discussion in Appendix B, $r_{Bin}^{(1)}=0$ or $r_{Bin}^{(2)}=0$. In other words, in the Cooperative BME scheme based on backward decoding, at most one relay is necessary to use binning function for the message it receives from another, and the other relay is not necessary to cooperate with this relay. Therefore, we propose a composite BME-DPC scheme. In this scheme, one of the relays decodes the other relay's message. Having decoded that, it then uses the binning function to cooperate with the other relay. On the other hand, using the Gelfand- Pinsker's result the source cancels the interference due to one relay on the other. Hence, we have the following Theorem.
\begin{Theorem}\label{thm:Th21}
The composite BME-DPC scheme, achieves the following rate:
\begin{IEEEeqnarray}{rl}
C_{BME-DPC}^{low}=&\max_{0\leq t_1,t_2,t_1+t_2=1} \min \left(t_{1}I\left(X_{0}^{(1)},X_{2}^{(1)};Y_{1}^{(1)}\right),
t_1I\left(X_0^{(1)};Y_1^{(1)}\mid X_2^{(1)}\right)+\right.\nonumber\\
&\left.t_2\left(I\left(U_0^{(2)};Y_2^{(2)}\right)-I\left(U_0^{(2)};X_1^{(2)}\right)\right),
t_1I\left(X_2^{(1)};Y_3^{(1)}\right)+t_2I\left(X_1^{(2)};Y_3^{(2)}\right),\right.\nonumber\\
&\left. t_2\left(I\left(U_0^{(2)};Y_2^{(2)}\right)-I\left(U_0^{(2)};X_1^{(2)}\right)\right) + t_2I\left(X_1^{(2)};Y_3^{(2)}\right)\right).
\end{IEEEeqnarray}
\end{Theorem}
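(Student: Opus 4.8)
The plan is to combine the Gelfand-Pinsker (dirty-paper) analysis of Appendix A with the block-Markov binning and backward-decoding analysis of Appendix B, specializing the latter so that only \emph{one} relay performs binning. Label the two message streams $R^{(1)}$ (source $\to$ relay 1 $\to$ destination, carried in the odd slots) and $R^{(2)}$ (source $\to$ relay 2 $\to$ destination, carried in the even slots). First I would fix the codebook generation exactly as in the BME scheme for the odd-slot / relay-1 side: a superposition codebook together with the binning function $f_{Bin}^{(1)}(\cdot)$ of rate $r_{Bin}$, so that in slot 2 relay 1 transmits $\textbf{x}_1^{(2)}$ carrying both its own message $w^{(b)}$ and the bin index $s_1^{(b-1)}$ of the relay-2 message it decoded in slot 1. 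On the even-slot / relay-2 side I would instead follow Appendix A: since relay 2 will \emph{not} cooperate with relay 1 it need not decode $\textbf{x}_1^{(2)}$, and the source --- which knows $\textbf{x}_1^{(2)}$ because both $w^{(b)}$ and $s_1^{(b-1)}$ are functions of quantities it has generated --- applies the Gelfand-Pinsker construction with auxiliary $U_0^{(2)}$ to pre-cancel this interference when transmitting to relay 2.

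Next I would collect the decoding constraints, each established as in the two appendices. At relay 1, joint (MAC) decoding of $(\textbf{x}_0^{(1)},\textbf{x}_2^{(1)})$ from $\textbf{y}_1^{(1)}$ yields the joint bound $R^{(1)}+R^{(2)}\le t_1 I(X_0^{(1)},X_2^{(1)};Y_1^{(1)})$ together with the single-user bound $R^{(1)}\le t_1 I(X_0^{(1)};Y_1^{(1)}\mid X_2^{(1)})$. At relay 2, the Gelfand-Pinsker argument of Appendix A gives $R^{(2)}\le t_2\!\left(I(U_0^{(2)};Y_2^{(2)})-I(U_0^{(2)};X_1^{(2)})\right)$. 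The delicate part is the destination, which uses backward decoding: from the slot-2 observation $\textbf{y}_3^{(2)}$ it jointly decodes $w^{(b)}$ and the bin index $s_1^{(b-1)}$, giving $R^{(1)}+r_{Bin}\le t_2 I(X_1^{(2)};Y_3^{(2)})$; and it recovers the relay-2 message from the slot-1 observation $\textbf{y}_3^{(1)}$ restricted to the already-known bin, giving $R^{(2)}\le r_{Bin}+t_1 I(X_2^{(1)};Y_3^{(1)})$.

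Finally I would eliminate the free binning rate $r_{Bin}\ge 0$ by Fourier-Motzkin. Adding the two destination inequalities cancels $r_{Bin}$ and produces the third bound $R^{(1)}+R^{(2)}\le t_1 I(X_2^{(1)};Y_3^{(1)})+t_2 I(X_1^{(2)};Y_3^{(2)})$; dropping $r_{Bin}$ from the slot-2 inequality via $r_{Bin}\ge0$ and adding the relay-2 bound gives the fourth bound $R^{(1)}+R^{(2)}\le t_2\!\left(I(U_0^{(2)};Y_2^{(2)})-I(U_0^{(2)};X_1^{(2)})\right)+t_2 I(X_1^{(2)};Y_3^{(2)})$; adding the relay-1 single-user bound to the relay-2 bound gives the second bound; and the relay-1 joint bound is the first. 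Intersecting these four sum-rate constraints and maximizing over $0\le t_1,t_2$ with $t_1+t_2=1$ yields exactly the claimed $C_{BME-DPC}^{low}$.

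I expect the main obstacle to be the destination analysis: verifying that backward decoding still succeeds when one stream is aided by a forwarded bin index while the interference on the other stream has been removed by dirty-paper coding rather than decoded. Concretely, I must check that the error-probability analysis of Appendix B carries over unchanged after replacing relay 2's decode-and-bin step by the source's Gelfand-Pinsker step at relay 2, and that the achievable region is governed precisely by the two constraints on $r_{Bin}$ above, so that the elimination reproduces all four terms and no spurious constraint survives.
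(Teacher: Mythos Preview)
Your proposal is correct and follows essentially the same route as the paper. The paper's proof is a one-line reference --- set $r_{Bin}^{(1)}=0$, invoke Theorems~\ref{thm:Th2} and~\ref{thm:Th1}, and repeat the Appendix~B argument --- and your sketch is precisely a fleshed-out version of that: BME decoding at relay~1, Gelfand--Pinsker at relay~2, backward decoding at the destination with the two $r_{Bin}$ constraints, then Fourier--Motzkin to recover the four min-terms. One small remark: the paper notes (after Corollary~\ref{cor:cor5}) that backward decoding is actually unnecessary here --- with only one active bin stream, the destination can decode forward by recovering $(w^{(b-1)},s^{(b-2)})$ jointly in each even slot and then resolving the odd-slot message with the bin index --- but your backward-decoding analysis yields the same region, so this is a matter of taste rather than a gap.
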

\begin{proof}
Assuming $r_{Bin}^{(1)}=0$, and using Theorem~\ref{thm:Th2} and Theorem~\ref{thm:Th1} along with a similar argument as in Appendix B, Theorem~\ref{thm:Th21} is immediate.
\end{proof}

\begin{corollary}\label{cor:cor5}
For the Gaussian case, the composite BME-DPC scheme achieves the following rate $C_{BME-DPC}^{low}$. Furthermore, $C_{BME-DPC}^{low}\geq C_{BME_{back}}^{low}$. In other words, the composite BME-DPC scheme always achieves a better rate than the BME scheme for the Gaussian scenario.
\begin{IEEEeqnarray}{rl}
C_{BME-DPC}^{low} = &R^{(1)} + R^{(2)} \leq\nonumber\\
&\max \min\left(
t_{1}C\left(\frac{h_{01}^2P_{0}^{(1)}+h_{12}^2 P_2
+2h_{01}h_{12}\sqrt{\bar{\alpha}P_{0}^{(1)} P_2}}{t_{1}}\right),\right.\nonumber\\
&\left.t_1C\left(\frac{h_{01}^2\alpha P_0^{(1)}}{t_1}\right)+
t_2C\left(\frac{h_{02}^2P_0^{(2)}}{t_2}\right),\right.\nonumber\\
&\left.t_1C\left(\frac{h_{23}^2P_2}{t_1}\right)+
t_2C\left(\frac{h_{13}^2P_1}{t_2}\right),t_2C\left(\frac{h_{02}^2P_0^{(2)}}{t_2}\right)+
t_2C\left(\frac{h_{13}^2P_1}{t_2}\right)\right).\label{eq:nco_cop}\\
&\text{subject to:}\nonumber\\
&~~~~~P_0^{(1)} + P_0^{(2)} = P_0,\nonumber\\
&~~~~~t_1 + t_2 = 1,\nonumber\\
&~~~~~0\leq t_{1}, t_{2}, P_0^{(1)}, P_0^{(2)},\nonumber\\
&~~~~~0\leq \alpha \leq 1.\nonumber
\end{IEEEeqnarray}
where $\bar{\alpha} = 1 - \alpha$.
\end{corollary}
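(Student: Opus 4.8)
The plan is to prove the two assertions of the corollary separately: first the closed-form Gaussian rate (\ref{eq:nco_cop}), obtained by evaluating Theorem~\ref{thm:Th21} at a specific jointly Gaussian input, and then the domination $C_{BME-DPC}^{low}\geq C_{BME_{back}}^{low}$, obtained by mapping the optimizer of (\ref{eq:R_BM_b}) into the feasible set of (\ref{eq:nco_cop}). For the rate, I would reuse the two input constructions already validated in Corollaries~\ref{cor:cor1} and~\ref{cor:cor2}. On the relay-$1$ (block-Markov) side I take $X_0^{(1)}=V_0^{(1)}+\sqrt{\bar\alpha P_0^{(1)}/P_2}\,X_2^{(1)}$ with $V_0^{(1)}\sim\mathcal N(0,\alpha P_0^{(1)})$ and $X_2^{(1)}\sim\mathcal N(0,P_2)$ independent, exactly the backward-decoding correlation; on the relay-$2$ (dirty-paper) side I take Costa's auxiliary $U_0^{(2)}=X_0^{(2)}+\frac{h_{02}h_{12}P_0^{(2)}}{h_{02}^2P_0^{(2)}+t_2}X_1^{(2)}$. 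Evaluating the four terms of Theorem~\ref{thm:Th21}: the joint term $t_1I(X_0^{(1)},X_2^{(1)};Y_1^{(1)})$ produces the coherent-combining expression (first line of (\ref{eq:nco_cop})); the conditional term $t_1I(X_0^{(1)};Y_1^{(1)}\mid X_2^{(1)})$ collapses to $t_1C(h_{01}^2\alpha P_0^{(1)}/t_1)$ since conditioning on $X_2^{(1)}$ strips the correlated part; the two direct links give $t_1C(h_{23}^2P_2/t_1)$ and $t_2C(h_{13}^2P_1/t_2)$; and, crucially, Costa's identity forces $I(U_0^{(2)};Y_2^{(2)})-I(U_0^{(2)};X_1^{(2)})=C(h_{02}^2P_0^{(2)}/t_2)$, i.e. the interference $X_1^{(2)}$ is fully removed. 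Substituting these reproduces (\ref{eq:nco_cop}) line by line, a routine Gaussian computation.

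For the domination I would start from the parameters $(\beta_1,\beta_2,P_0^{(1)},P_0^{(2)},t_1,t_2)$ attaining $C_{BME_{back}}^{low}$ in (\ref{eq:R_BM_b}) and feed the same $t$'s and $P$'s into (\ref{eq:nco_cop}) with $\alpha=\beta_1$ (so $\bar\alpha=\bar\beta_1$). Under this map three of the four constraints dominate at once: the first term of (\ref{eq:nco_cop}) coincides with the first term of (\ref{eq:R_BM_b}); the third term of (\ref{eq:nco_cop}) coincides with the fourth term of (\ref{eq:R_BM_b}); and the second term of (\ref{eq:nco_cop}), namely $t_1C(h_{01}^2\alpha P_0^{(1)}/t_1)+t_2C(h_{02}^2P_0^{(2)}/t_2)$, dominates the third term of (\ref{eq:R_BM_b}) because dirty-paper coding lets the relay-$2$ message use its full power, i.e. $C(h_{02}^2P_0^{(2)}/t_2)\geq C(h_{02}^2\beta_2P_0^{(2)}/t_2)$ as $\beta_2\leq1$. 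This is the conceptual heart of the statement: replacing the power-reduced backward-decoding rate by the full dirty-paper rate can only help.

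The main obstacle is the fourth constraint of (\ref{eq:nco_cop}), $t_2C(h_{02}^2P_0^{(2)}/t_2)+t_2C(h_{13}^2P_1/t_2)$, which has no counterpart among the three dominated terms and is the only genuinely new bound the composite scheme introduces. I would show it cannot pull the minimum below $C_{BME_{back}}^{low}$ by comparing it against the second and fourth terms of (\ref{eq:R_BM_b}): using $C(x)+C(y)=C(x+y+xy)\geq C(x+y)$ to collapse the sum, and invoking the standing ordering $h_{01}\geq h_{02}$ with the reciprocity $h_{12}=h_{21}$, I expect to argue that whenever this term is small the relay-$2$ multiple-access bound and the relay-to-destination sum bound of (\ref{eq:R_BM_b}) are already at least as small, so the minimum is unaffected. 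I anticipate this is exactly where the identical-parameter map may fail in a corner case, and the fallback is to re-optimize the source power split $P_0^{(1)}+P_0^{(2)}=P_0$ and the time split; this is legitimate because $C_{BME-DPC}^{low}$ is itself a maximum over these variables, and it is the step I expect to require the most care.
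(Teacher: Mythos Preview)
Your derivation of the Gaussian rate expression \eqref{eq:nco_cop} from Theorem~\ref{thm:Th21} is fine and matches what the paper does implicitly.

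The gap is in your domination argument, precisely at the fourth term where you flag trouble. Your plan is to start from the \emph{collapsed} backward-decoding expression \eqref{eq:R_BM_b} and compare term by term. But \eqref{eq:R_BM_b} contains only sum-rate constraints; the individual bounds on $R^{(1)}$ and $R^{(2)}$ that were present in the underlying system \eqref{eq:constr3}--\eqref{eq:constr2} have been eliminated by the Fourier--Motzkin step in Appendix~B. Those individual bounds are exactly what is needed to control the fourth term. Your attempted workaround---bounding Term~4 below by the minimum of the second and fourth terms of \eqref{eq:R_BM_b} via $C(x)+C(y)\geq C(x+y)$ and $h_{01}\geq h_{02}$---fails in general: take $h_{12}$ large with $h_{13}$ and $h_{02}$ small, so that $t_2C(h_{02}^2P_0^{(2)}/t_2)+t_2C(h_{13}^2P_1/t_2)$ is small while both $t_2C\big((h_{02}^2P_0^{(2)}+h_{12}^2P_1+\cdots)/t_2\big)$ and $t_1C(h_{23}^2P_2/t_1)+t_2C(h_{13}^2P_1/t_2)$ are large. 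The fallback of re-optimizing the power and time split is not developed, and there is no reason to expect it to rescue the same-parameter comparison.

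The paper avoids this entirely by \emph{not} starting from \eqref{eq:R_BM_b}. It uses the result established in Appendix~B that the backward-decoding optimum is attained with $r_{Bin}^{(1)}=0$ (or $r_{Bin}^{(2)}=0$), and then compares the \emph{pre-simplified} constraint set \eqref{eq:constr1}--\eqref{eq:constr2} directly against \eqref{eq:nco_cop}. With $r_{Bin}^{(1)}=0$ and $r_{Bin}^{(2)}\geq 0$, constraint \eqref{eq:constr2} gives $R^{(1)}\leq t_2C(h_{13}^2P_1/t_2)$, and constraint \eqref{eq:constr6} gives $R^{(2)}\leq t_2C(h_{02}^2\beta_2P_0^{(2)}/t_2)\leq t_2C(h_{02}^2P_0^{(2)}/t_2)$; adding these yields the fourth term of \eqref{eq:nco_cop} immediately. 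The second term of the minimization in \eqref{eq:constr1}---the relay-$2$ MAC bound $t_2I(X_0^{(2)},X_1^{(2)};Y_2^{(2)})$---simply disappears in the composite scheme because relay~$2$ no longer decodes relay~$1$'s message. So the fix is not to re-optimize, but to go back one step and retain the individual $R^{(1)},R^{(2)}$ constraints before they are summed away.
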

\begin{proof}
As in Theorem~\ref{thm:Th21}, we assume that $r_{Bin}^{(1)}=0$. Now, we show that every rate pairs $\left(R^{(1)},R^{(2)}\right)$ satisfying (\ref{eq:constr1})-(\ref{eq:constr2}) satisfy (\ref{eq:nco_cop}). After specializing (\ref{eq:constr1})-(\ref{eq:constr2}) for the Gaussian case and comparing with (\ref{eq:nco_cop}), one observes that the second term in minimization (\ref{eq:constr1}) does not exist. Substituting $r_{Bin}^{(1)}=0$ in (\ref{eq:constr3})-(\ref{eq:constr2}), one can obtain the other three corresponding terms. Comparing those terms with (\ref{eq:nco_cop}), it can be readily seen that $C_{BME-DPC}^{low}\geq C_{BME_{back}}^{low}$.
\end{proof}
\begin{Remark}
Assuming $r_{Bin}^{(1)} = 0$, as in Theorem \ref{thm:Th21} and corollary \ref{cor:cor5}, the destination jointly decodes the current message and the bin index of the next message at the end of even time slots and then it can decode the next message at the end of odd time slots. Therefore, using backward decoding is not necessary in the BME-DPC scheme.
\end{Remark}
\subsection{Simultaneous Relaying Protocol}\label{sec:SimRate}
Figure~\ref{fig: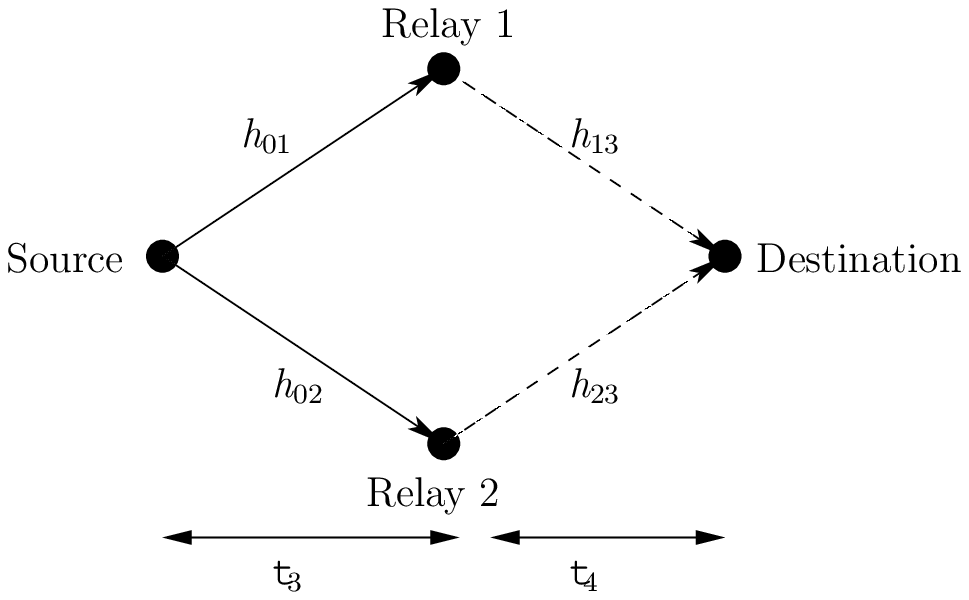} shows simultaneous relaying protocol. In simultaneous relaying, in one time slot of duration $t_3$ the source transmits its signal simultaneously to the two relays. In the next time slot of duration $t_4$, two relays transmit their signal coherently to the destination. Hence, in this protocol, $t_1 = t_2 = 0$ and our system model follows from (\ref{eq:Sys3}) and (\ref{eq:Sys4}).
\begin{figure}[tp]
\centering
\includegraphics[angle=0,scale=.75]{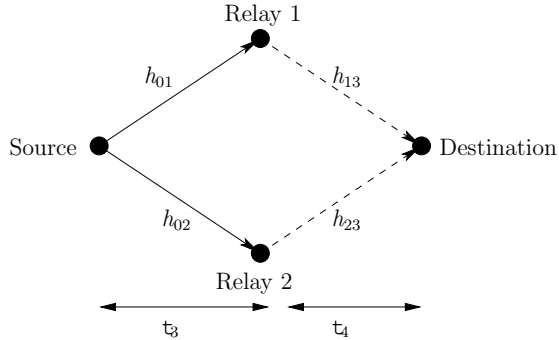}\\
\caption{\small{Simultaneous relaying protocol for two relays.}}\label{fig:Ch3_6.eps}
\end{figure}
\subsubsection{Dynamic Decode-and-Forward (DDF)}
 In DDF scheme each relay decodes the transmitted message from the source in time slot $t_3$ (Broadcast (BC) State), and forwards its re-encoded version in time slot $t_4$ (Multiple Access (MAC) State). The following Theorem gives the achievable rate of the DDF scheme for the general discrete memoryless channels.
 \begin{Theorem}
 For the half-duplex parallel relay channel, assuming simultaneous relaying and the fact that what the second relay receives is a degraded version of what the first relay receives, the following rate $C_{DDF}^{low}$ is achievable:
\begin{IEEEeqnarray}{rl}
C_{DDF}^{low} = &\max_{0\leq t_{3},t_{4},t_{3}+t_{4}=1} R_p + R_c,\label{eq:DDF1}\\
&\text{subject to:}\nonumber\\
&~~~~~R_p\leq \min\left(t_3I(X_{0}^{(3)};Y_{1}^{(3)}\mid U_{0}^{(3)}),
t_4I(X_{1}^{(4)};Y_{3}^{(4)}\mid X_{2}^{(4)})\right),\label{eq:DDF2}\\
&~~~~~R_c\leq t_3I(U_{0}^{(3)};Y_{2}^{(3)}),\label{eq:DDF3}\\
&~~~~~R_{p} + R_{c}\leq t_4I(X_{1}^{(4)},X_{2}^{(4)};Y_{3}^{(4)}).\label{eq:DDF4}
\end{IEEEeqnarray}
with probabilities:
\begin{IEEEeqnarray}{rl}
&p(u_{0}^{(3)},x_{0}^{(3)}) = p(u_{0}^{(3)})p(x_{0}^{(3)}|u_{0}^{(3)}),\nonumber\\
&p(x_{1}^{(4)},x_{2}^{(4)}) = p(x_{1}^{(4)})p(x_{2}^{(4)}|x_{1}^{(4)}).\nonumber
\end{IEEEeqnarray}
 \end{Theorem}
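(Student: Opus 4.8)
The plan is to realize $C_{DDF}^{low}$ as the achievable rate of a two-phase, single-block decode-and-forward scheme in which the source message $w$ of rate $R=R_p+R_c$ is split into a common part $w_c$ of rate $R_c$ and a private part $w_p$ of rate $R_p$. First I would fix an input distribution $p(u_0^{(3)})p(x_0^{(3)}\mid u_0^{(3)})\,p(x_1^{(4)})p(x_2^{(4)}\mid x_1^{(4)})$ and generate two independent codebooks, one per phase. For the broadcast phase (of length $nt_3$), I generate $2^{nR_c}$ cloud centers $\mathbf{u}_0^{(3)}(w_c)$ i.i.d. from $p(u_0^{(3)})$, and for each of them $2^{nR_p}$ satellite codewords $\mathbf{x}_0^{(3)}(w_c,w_p)$ from $\prod p(x_0^{(3)}\mid u_0^{(3)})$. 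For the multiple-access phase (of length $nt_4$), I generate the relay codewords $\mathbf{x}_2^{(4)}(w_c)$ (carrying only the common index, since relay 2 knows only $w_c$) and $\mathbf{x}_1^{(4)}(w_c,w_p)$ (carrying the full message) under the joint law $p(x_1^{(4)},x_2^{(4)})$, so that the two relay signals add coherently at the destination; the factorization in the statement is merely one representation of this joint law, and the mutual-information constraints depend only on the joint distribution.

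Next, for relay decoding in the broadcast phase, the source transmits $\mathbf{x}_0^{(3)}(w_c,w_p)$. The weaker relay (relay 2) decodes only the cloud center, which succeeds with high probability provided $R_c\le t_3 I(U_0^{(3)};Y_2^{(3)})$, yielding \eqref{eq:DDF3}. The stronger relay (relay 1) decodes both $w_c$ and $w_p$; decoding $w_p$ given $w_c$ requires $R_p\le t_3 I(X_0^{(3)};Y_1^{(3)}\mid U_0^{(3)})$, the first term of \eqref{eq:DDF2}. Here the degradedness hypothesis --- that $Y_2^{(3)}$ is a degraded version of $Y_1^{(3)}$, consistent with $h_{01}\ge h_{02}$ --- does precisely the work of guaranteeing $I(U_0^{(3)};Y_1^{(3)})\ge I(U_0^{(3)};Y_2^{(3)})$, so relay 1 can also recover $w_c$ whenever relay 2 can, and no separate constraint $R_c\le t_3 I(U_0^{(3)};Y_1^{(3)})$ is needed.

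Then, in the multiple-access phase, relay 1 transmits $\mathbf{x}_1^{(4)}(w_c,w_p)$ and relay 2 transmits $\mathbf{x}_2^{(4)}(w_c)$, and the destination (which receives nothing in the first phase) performs joint-typicality decoding of $(w_c,w_p)$ from $\mathbf{y}_3^{(4)}$ using $A_\epsilon^{(n)}$. The only relevant error events are: (i) correct $w_c$ but incorrect $w_p$, where $\mathbf{x}_2^{(4)}$ is fixed and only $\mathbf{x}_1^{(4)}$ changes, controlled by $R_p\le t_4 I(X_1^{(4)};Y_3^{(4)}\mid X_2^{(4)})$, the second term of \eqref{eq:DDF2}; and (ii) incorrect $w_c$, where both $\mathbf{x}_1^{(4)}$ and $\mathbf{x}_2^{(4)}$ change, controlled by $R_p+R_c\le t_4 I(X_1^{(4)},X_2^{(4)};Y_3^{(4)})$, which is \eqref{eq:DDF4}. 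Taking the private rate to satisfy the minimum of its broadcast-phase and multiple-access-phase bounds reproduces \eqref{eq:DDF2}.

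Finally, combining the phases, the total error probability is bounded by the union of the relay decoding errors in phase one and the destination error in phase two, each of which vanishes as $n\to\infty$ whenever the rate pair lies strictly inside the region \eqref{eq:DDF2}--\eqref{eq:DDF4}; optimizing over admissible input distributions and over the split $t_3+t_4=1$ then gives $C_{DDF}^{low}=\max(R_p+R_c)$ as in \eqref{eq:DDF1}. The main obstacle I anticipate is handling the coherent multiple-access phase carefully: because $w_c$ is known to both relays while $w_p$ is known only to relay 1, phase two is a multiple-access code with a common message, and one must verify that superimposing $\mathbf{x}_1^{(4)}$ and $\mathbf{x}_2^{(4)}$ on a shared index $w_c$ produces only the two error events above --- in particular that no independent constraint on $R_c$ arises from the MAC phase, since $w_c$ is already protected in the broadcast phase. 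Making the degradedness assumption eliminate the would-be common-message constraint at relay 1 is the companion point requiring care.
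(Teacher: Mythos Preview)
Your proposal is correct and takes essentially the same approach as the paper: the paper's proof simply invokes the degraded broadcast superposition region for the first hop and the extended MAC (MAC with common message, citing Han and Prelov) for the second hop, while you spell out the codebook construction and error events explicitly. Your careful treatment of the two delicate points---that degradedness removes the would-be constraint $R_c\le t_3 I(U_0^{(3)};Y_1^{(3)})$ at relay~1, and that the common-message MAC produces no stand-alone $R_c$ constraint---is exactly what underlies the paper's citations, so there is no substantive difference in strategy.
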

 \begin{proof}
 The achievable rate of DDF is equal to $C_{DDF}^{low} = R_{p} + R_{c}$, where ($R_{p}$, $R_{c}$) should be both in the capacity region of BC (corresponding to the BC state) and MAC (corresponding to the MAC state).
 Applying the superposition coding of the degraded BC~\cite{CoverThomas} the following rates are achievable for the first hop:
 \begin{IEEEeqnarray}{rl}\label{eq:SimRate2}
&R_{p}\leq t_3I(X_{0}^{(3)};Y_{1}^{(3)}\mid U_{0}^{(3)}),\nonumber\\
&R_{c}\leq t_3I(U_{0}^{(3)};Y_{2}^{(3)}).
\end{IEEEeqnarray}
with probability $p(u_{0}^{(3)},x_{0}^{(3)}) = p(u_{0}^{(3)})p(x_{0}^{(3)}|u_{0}^{(3)})$.

And using the superposition coding of the extended MAC (See \cite{Han},
\cite{Prelov}) the following rates are achievable for the second hop:
\begin{IEEEeqnarray}{rl}\label{eq:SimRate3}
&R_{p}\leq t_4I(X_{1}^{(4)};Y_{3}^{(4)}\mid X_{2}^{(4)}),\nonumber\\
&R_{p} + R_{c}\leq t_4I(X_{1}^{(4)},X_{2}^{(4)};Y_{3}^{(4)}).
\end{IEEEeqnarray}
with probability $p(x_{1}^{(4)},x_{2}^{(4)}) = p(x_{1}^{(4)})p(x_{2}^{(4)}|x_{1}^{(4)})$.
\end{proof}

In the Gaussian case (assuming $h_{01}\geq h_{02}$), the source splits its total
available power $P_{0}$ to $P_{0,p}^{(3)}$ and $P_{0,c}^{(3)}$ associated with
the \emph{``Private"} and the \emph{``Common"} messages,
respectively. Letting $X_{0}^{(3)}\sim\mathcal{N}\left(0,P_{0}\right)$, $U_{0}^{(3)}\sim\mathcal{N}\left(0,P_{0,c}^{(3)}\right)$, and $X_{1}^{(4)}\sim\mathcal{N}\left(0,P_{1}\right)$, assuming that relay 1 and relay 2 transmit their codewords associated with the common message with $\mathcal{N}\left(0,P_{1,c}^{(4)}\right)$ and $\mathcal{N}\left(0,P_{2}\right)$, and using
(\ref{eq:SimRate2}) and (\ref{eq:SimRate3}) we have the following corollary.
\newpage
\begin{corollary}\label{cor:cor4}
For the half-duplex Gaussian parallel relay channel, assuming
simultaneous relaying protocol with power constraints at the source and
at each relay, DDF achieves the following rate
\begin{IEEEeqnarray}{rl}
C_{DDF}^{low}= &R_{p} + R_{c},\label{eq:Sim1_1}\\
\text{subject to:}
&~~~~~R_{p}\leq \min \left( t_3 C\left(\frac{h_{01}^{2}P_{0,p}^{(3)}}{t_3}\right), t_4 C\left(\frac{h_{13}^{2}P_{1,p}^{(4)}}{t_4}\right) \right),\nonumber\\
&~~~~~R_{c}\leq t_3 C\left(\frac{h_{02}^{2}P_{0,c}^{(3)}}{t_3+h_{02}^{2}P_{0,p}^{(3)}}\right),\nonumber\\
&~~~~~R_{p} + R_{c}\leq t_4 C\left(\frac{h_{13}^{2}P_{1,p}^{(4)}+\left(h_{13}\sqrt{P_{1,c}^{(4)}}+ h_{23}\sqrt{P_2}\right)^{2}}{t_4}\right),\nonumber\\
&~~~~~P_{0,p}^{(3)}+P_{0,c}^{(3)}=P_0,~ P_{1,p}^{(4)}+P_{1,c}^{(4)}=P_1,~ t_3+t_4=1,\nonumber\\
&~~~~~0\leq t_3,~ t_4,~ P_{0,p}^{(3)},~ P_{0,c}^{(3)},~ P_{1,p}^{(4)},~ P_{1,c}^{(4)}.\nonumber
\end{IEEEeqnarray}
\end{corollary}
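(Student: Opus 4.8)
The plan is to evaluate each of the four mutual-information constraints of the preceding Theorem under the Gaussian superposition choice described above, so that the abstract region collapses to the explicit one stated here. First I would fix the input structure precisely. For the broadcast state, write $X_{0}^{(3)}=U_{0}^{(3)}+V_{0}^{(3)}$ with $U_{0}^{(3)}\sim\mathcal{N}(0,P_{0,c}^{(3)})$ carrying the common message and $V_{0}^{(3)}\sim\mathcal{N}(0,P_{0,p}^{(3)})$ independent of $U_{0}^{(3)}$ carrying the private message, so that $X_{0}^{(3)}\sim\mathcal{N}(0,P_0)$ and $P_{0,p}^{(3)}+P_{0,c}^{(3)}=P_0$. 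For the multiple-access state, write relay~1's codeword as $X_{1}^{(4)}=X_{1,p}^{(4)}+X_{1,c}^{(4)}$ with $X_{1,p}^{(4)}\sim\mathcal{N}(0,P_{1,p}^{(4)})$ private and $X_{1,c}^{(4)}\sim\mathcal{N}(0,P_{1,c}^{(4)})$ carrying the common message \emph{coherently} with relay~2's codeword $X_{2}^{(4)}\sim\mathcal{N}(0,P_2)$; that is, $X_{1,c}^{(4)}$ and $X_{2}^{(4)}$ are driven by the same common codeword and hence fully correlated, while $X_{1,p}^{(4)}$ is independent of $X_{2}^{(4)}$. These choices respect the factorizations $p(u_{0}^{(3)})p(x_{0}^{(3)}\mid u_{0}^{(3)})$ and $p(x_{1}^{(4)})p(x_{2}^{(4)}\mid x_{1}^{(4)})$ required by the Theorem.

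Next I would substitute the channel relations (\ref{eq:Sys3}) and (\ref{eq:Sys4}) and compute the four terms, using per-symbol power $P_a^{(b)}/t_b$ in each slot of fractional length $t_b$: the factor $t_b$ out front converts a per-symbol information quantity into a rate over the slot, and the $1/t_b$ inside $C(\cdot)$ reflects that the per-block energy $P_a^{(b)}$ is spread over $t_b n$ symbols. In the broadcast state, conditioning on $U_{0}^{(3)}$ in $Y_{1}^{(3)}=h_{01}X_{0}^{(3)}+z_{1}^{(3)}$ leaves only the private component $h_{01}V_{0}^{(3)}$, giving $t_3C(h_{01}^2P_{0,p}^{(3)}/t_3)$; at the weaker relay the private signal is treated as noise in $Y_{2}^{(3)}=h_{02}X_{0}^{(3)}+z_{2}^{(3)}$, giving $t_3C\!\left(\frac{h_{02}^2P_{0,c}^{(3)}}{t_3+h_{02}^2P_{0,p}^{(3)}}\right)$. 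The hypothesis $h_{01}\ge h_{02}$ guarantees that relay~2 sees a degraded version of relay~1's observation, which is exactly the degradedness invoked by the Theorem for the superposition rate pair to be valid.

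For the multiple-access state I would exploit the coherence of the common parts. In $I(X_{1}^{(4)};Y_{3}^{(4)}\mid X_{2}^{(4)})$ with $Y_{3}^{(4)}=h_{13}X_{1}^{(4)}+h_{23}X_{2}^{(4)}+z_{3}^{(4)}$, conditioning on $X_{2}^{(4)}$ also fixes the fully correlated component $X_{1,c}^{(4)}$, so only the independent private part $h_{13}X_{1,p}^{(4)}$ remains, yielding $t_4C(h_{13}^2P_{1,p}^{(4)}/t_4)$. For the sum term $I(X_{1}^{(4)},X_{2}^{(4)};Y_{3}^{(4)})$ the received signal decomposes into the independent private component $h_{13}X_{1,p}^{(4)}$ of power $h_{13}^2P_{1,p}^{(4)}$ and the coherently combined common component $h_{13}X_{1,c}^{(4)}+h_{23}X_{2}^{(4)}$; since the latter two are perfectly correlated their amplitudes add, contributing power $\left(h_{13}\sqrt{P_{1,c}^{(4)}}+h_{23}\sqrt{P_2}\right)^2$, so the term equals $t_4C\!\left(\frac{h_{13}^2P_{1,p}^{(4)}+\left(h_{13}\sqrt{P_{1,c}^{(4)}}+h_{23}\sqrt{P_2}\right)^2}{t_4}\right)$.

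Finally I would collect the four evaluated bounds, which reproduce exactly the constraints stated, and observe that maximizing $R_p+R_c$ over $t_3,t_4$ and the power split $(P_{0,p}^{(3)},P_{0,c}^{(3)},P_{1,p}^{(4)},P_{1,c}^{(4)})$ subject to $P_{0,p}^{(3)}+P_{0,c}^{(3)}=P_0$, $P_{1,p}^{(4)}+P_{1,c}^{(4)}=P_1$, and $t_3+t_4=1$ yields the claimed rate. I expect the main obstacle to be the bookkeeping of the coherent-combining cross term: one must take $X_{1,c}^{(4)}$ and $X_{2}^{(4)}$ fully correlated (correlation coefficient $+1$) so that the cross term $2h_{13}h_{23}\sqrt{P_{1,c}^{(4)}P_2}$ survives in the sum bound, while simultaneously this same correlation is what annihilates the common part in $I(X_{1}^{(4)};Y_{3}^{(4)}\mid X_{2}^{(4)})$. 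Both computations must use the single joint law $p(x_{1}^{(4)})p(x_{2}^{(4)}\mid x_{1}^{(4)})$, and checking that this one correlation structure yields both the private-only conditional term and the coherent sum term is the only nontrivial point.
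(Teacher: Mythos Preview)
Your proposal is correct and follows essentially the same approach as the paper: choose Gaussian superposition inputs for the broadcast and extended-MAC states and evaluate the four mutual-information constraints of the preceding Theorem. The paper's own justification is terse---it merely fixes $U_{0}^{(3)}\sim\mathcal{N}(0,P_{0,c}^{(3)})$, $X_{0}^{(3)}\sim\mathcal{N}(0,P_0)$, $X_{1}^{(4)}\sim\mathcal{N}(0,P_1)$ with relay~1's common part $\sim\mathcal{N}(0,P_{1,c}^{(4)})$ coherent with relay~2's $\mathcal{N}(0,P_2)$, and invokes (\ref{eq:SimRate2})--(\ref{eq:SimRate3})---whereas you actually spell out each computation, including the only delicate point (that full correlation between $X_{1,c}^{(4)}$ and $X_{2}^{(4)}$ simultaneously produces the coherent cross term in the sum-rate bound and annihilates the common part in the conditional term).
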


Interestingly, successive decoding at the destination does not degrade the performance of DDF scheme in the Gaussian scenario as shown in the following Proposition.
\begin{Proposition}\label{pro:pro1}
The rate of DDF scheme is achievable by successive decoding of the
common and private messages at the destination.
\end{Proposition}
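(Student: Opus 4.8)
The claim is that the achievable rate of the DDF scheme (Corollary \ref{cor:cor4}) can be attained using successive decoding at the destination, rather than requiring joint decoding of the common and private messages. Let me think through what this actually requires.

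In the DDF scheme, the destination observes the signal $\textbf{y}_3^{(4)} = h_{13}\textbf{x}_1^{(4)} + h_{23}\textbf{x}_2^{(4)} + \textbf{z}_3^{(4)}$ during the MAC state, where relay 1 sends both private and common components and relay 2 sends (coherently) the common component. The achievable region in Corollary \ref{cor:cor4} is characterized by the private constraint $R_p \le t_4 C\!\left(h_{13}^2 P_{1,p}^{(4)}/t_4\right)$ together with the sum constraint $R_p + R_c \le t_4 C\!\left((h_{13}^2 P_{1,p}^{(4)} + (h_{13}\sqrt{P_{1,c}^{(4)}} + h_{23}\sqrt{P_2})^2)/t_4\right)$. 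There is no separate constraint on $R_c$ from the MAC side; the common-message rate $R_c$ is limited only through the broadcast constraint $R_c \le t_3 C(\cdots)$ and the sum constraint.

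The plan is as follows. First I would observe that the MAC constraints in the Gaussian DDF region are NOT the full three-inequality pentagon of a standard two-user MAC, because both relays cooperate in transmitting the common message while only relay 1 carries the private message. Effectively the channel is a degraded superposition structure: the common codeword is received with the combined coherent power $(h_{13}\sqrt{P_{1,c}^{(4)}} + h_{23}\sqrt{P_2})^2$, and the private codeword sits "on top." For such a superposition-coded degraded structure, successive decoding is known to be optimal. Concretely, I would show that the destination can first decode the common message treating the private signal as noise, then subtract it and decode the private message cleanly. The rate achieved by decoding the common message first is
\begin{IEEEeqnarray}{rl}
R_c \le t_4 C\!\left(\frac{(h_{13}\sqrt{P_{1,c}^{(4)}} + h_{23}\sqrt{P_2})^2}{h_{13}^2 P_{1,p}^{(4)} + t_4}\right),\nonumber
\end{IEEEeqnarray}
and after cancellation the private message is decoded at rate $R_p \le t_4 C\!\left(h_{13}^2 P_{1,p}^{(4)}/t_4\right)$. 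The key algebraic step is to verify that the sum $R_p + R_c$ of these two successive-decoding rates equals exactly the sum-rate bound in Corollary \ref{cor:cor4}, which follows from the multiplicativity inside $C(\cdot)=\frac{1}{2}\log_2(1+\cdot)$, namely $(1 + a/(b+t_4))(1 + b/t_4) = 1 + (a+b)/t_4$ with $a = (h_{13}\sqrt{P_{1,c}^{(4)}} + h_{23}\sqrt{P_2})^2$ and $b = h_{13}^2 P_{1,p}^{(4)}$.

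The remaining point is to confirm that this successive-decoding operating point is not merely \emph{a} point in the region but recovers the \emph{entire} achievable region, i.e. that the $R_c$ attained by decoding the common message first is always at least as large as whatever $R_c$ the broadcast state permits, so that the common-message rate is never the MAC bottleneck. I would argue that since the overall rate is governed by the minimum of the broadcast constraint on $R_c$ and the sum constraint, and since successive decoding meets the sum constraint with equality while leaving the private rate at its individual maximum, no joint-decoding advantage is lost. The main obstacle I anticipate is the bookkeeping needed to show this simultaneously for every feasible power split $(P_{0,p}^{(3)}, P_{0,c}^{(3)}, P_{1,p}^{(4)}, P_{1,c}^{(4)})$ rather than at a single point — in particular, ensuring that the successive-decoding common-rate bound above dominates the broadcast-state common-rate bound so that decoding order causes no rate penalty. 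This reduces to comparing the two $C(\cdot)$ expressions and invoking the degradedness assumption $h_{01} \ge h_{02}$ that underlies the broadcast superposition coding.
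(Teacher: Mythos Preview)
Your setup is correct: with a fixed power split $(P_{1,p}^{(4)}, P_{1,c}^{(4)})$, decoding the common message first (treating private as noise) and then the private message achieves the corner point of the two-constraint MAC region, and the algebraic identity you state is exactly what makes the sum match. The gap is in your last paragraph. You propose to close the argument by comparing the successive-decoding $R_c$ bound to the broadcast-side $R_c$ bound and invoking $h_{01}\ge h_{02}$. This does not work: the broadcast degradedness is irrelevant to the MAC state, and there is no reason the inequality
\[
t_3\,C\!\left(\frac{h_{02}^{2}P_{0,c}^{(3)}}{t_3+h_{02}^{2}P_{0,p}^{(3)}}\right)\;\le\;t_4\,C\!\left(\frac{(h_{13}\sqrt{P_{1,c}^{(4)}}+h_{23}\sqrt{P_2})^{2}}{h_{13}^{2}P_{1,p}^{(4)}+t_4}\right)
\]
should hold in general. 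When the broadcast links are strong the DDF optimum can sit strictly inside the sum-rate face of the MAC region with $R_p$ well below $t_4 C(h_{13}^{2}P_{1,p}^{(4)}/t_4)$, and then the joint-decoding region allows a larger $R_c$ than your fixed-power-split corner.

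The missing idea, which the paper uses, is that the power split at the relays is itself an optimization variable, so you are free to \emph{change} it when you switch to successive decoding. Given any target pair $(R_p,R_c)$ in the DDF region with some split $(P_{1,p}^{(4)},P_{1,c}^{(4)})$, choose $\acute P_{1,p}^{(4)}\le P_{1,p}^{(4)}$ so that $R_p = t_4 C(h_{13}^{2}\acute P_{1,p}^{(4)}/t_4)$ exactly. The key monotonicity fact is that, under $P_{1,p}^{(4)}+P_{1,c}^{(4)}=P_1$, the sum-rate expression $t_4 C\bigl((h_{13}^{2}P_{1,p}^{(4)}+(h_{13}\sqrt{P_{1,c}^{(4)}}+h_{23}\sqrt{P_2})^{2})/t_4\bigr)$ is \emph{decreasing} in $P_{1,p}^{(4)}$. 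Hence shrinking $P_{1,p}^{(4)}$ to $\acute P_{1,p}^{(4)}$ only enlarges the sum-rate bound, so $(R_p,R_c)$ now sits at the successive-decoding corner of the MAC region with the new split $(\acute P_{1,p}^{(4)},\acute P_{1,c}^{(4)})$. No appeal to the broadcast constraints or to $h_{01}\ge h_{02}$ is needed.
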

\begin{proof}
Consider the sum rate for both the common message and the private
message for the extended multiple access channel from relays to the
destination,
\begin{equation}\label{eq:R_P_C}
R_{p} + R_{c}\leq t_4 C\left(\frac{h_{13}^{2}P_{1,p}^{(4)}+(h_{13}\sqrt{P_{1,c}^{(4)}}+ h_{23}\sqrt{P_2})^{2}}{t_4}\right).
\end{equation}
It can be readily verified that subject to the constraint $P_{1,p}^{(4)} + P_{1,c}^{(4)} = P_1$, the right-hand side of (\ref{eq:R_P_C}) is a decreasing
function of $P_{1,p}^{(4)}$ or equivalently an increasing function of $P_{1,c}^{(4)}$.
Now, let us equate $R_{p}$ in~(\ref{eq:R_P_C}) with the private rate
$\acute{R}_{p}$ of another MAC which is achieved by successive
decoding of common and private messages. Therefore, we have
\begin{eqnarray}
R_{p} = \acute{R}_{p} = t_4 C\left(\frac{h_{13}^{2}\acute{P}_{1,p}^{(4)}}{t_4}\right)\leq t_4 C\left(\frac{h_{13}^{2}P_{1,p}^{(4)}}{t_4}\right).\label{eq:Rp}
\end{eqnarray}
According to (\ref{eq:Rp}), we have (See Fig. \ref{fig: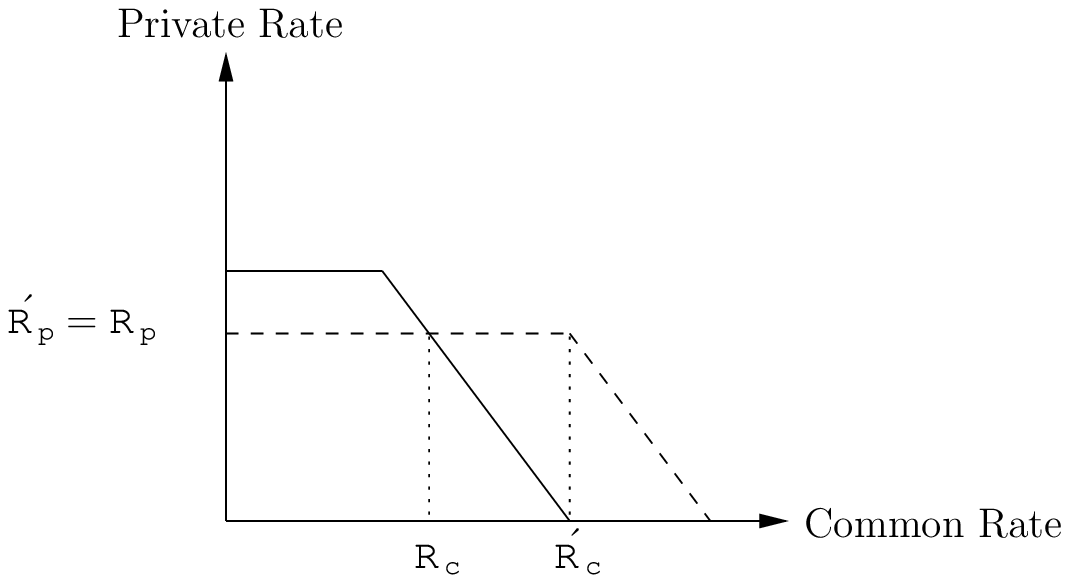})
\begin{eqnarray}
\acute{P}_{1,p}^{(4)}&\leq& P_{1,p}^{(4)}\Longrightarrow \nonumber\\
R_{p} + R_{c} &\leq& \acute{R}_{p} + \acute{R}_{c},\nonumber\\
R_{c}&\leq& \acute{R}_{c}.\nonumber
\end{eqnarray}
Hence, $(R_p,R_c)$ lies in the corner point of the extended MAC with parameters $(\acute{P}_{1,p}^{(4)},\acute{P}_{1,c}^{(4)})$, i.e. successive decoding of common and private messages achieves the DF rate.
\end{proof}

\begin{figure}[hbtp]
\centering
\includegraphics[angle=0,scale=.75]{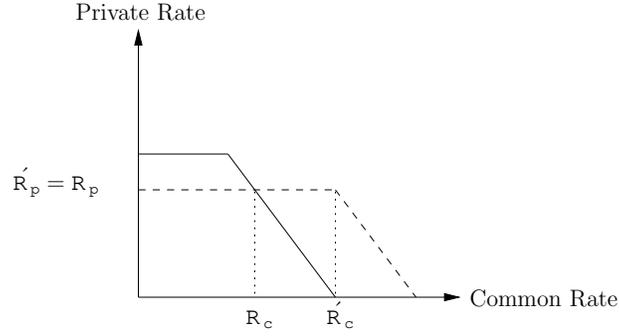}\\
\caption{\small{The order of decoding \emph{``Common"} and \emph{``Private"} messages.}}\label{fig:R_P_C.eps}
\end{figure}
\subsection{Simultaneous-Successive Relaying Protocol based on Dirty paper coding (SSRD)}
\begin{figure}[hbtp]
\centering
\includegraphics[angle=0,scale=.75]{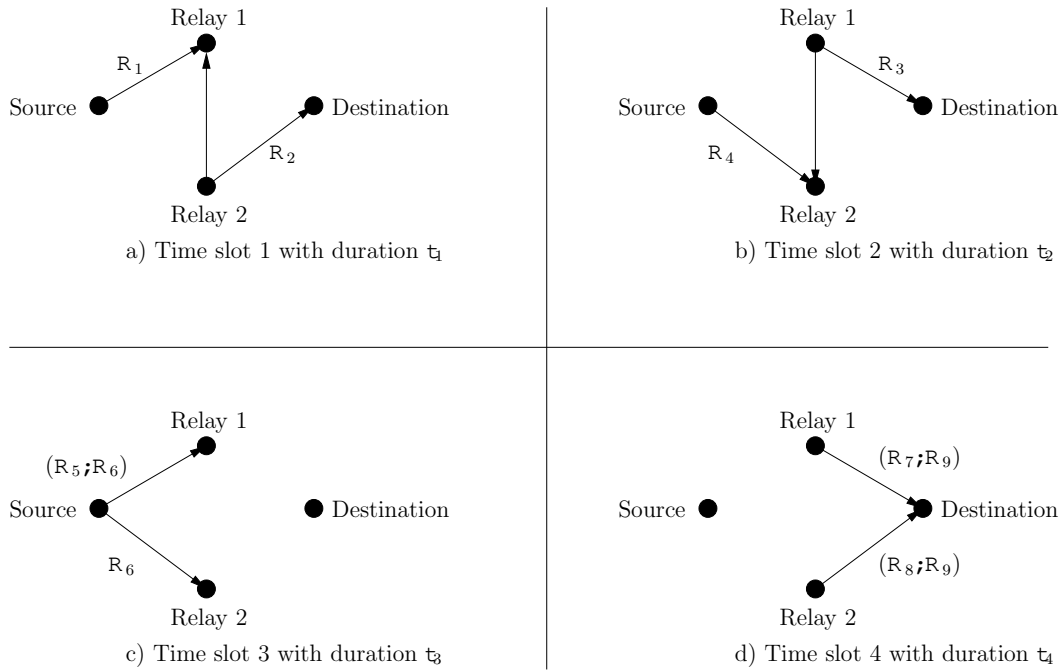}\\
\caption{\small{SSRD Scheme for the Half-Duplex Parallel Relay Channel.}}\label{fig:GeneralAchm.eps}
\end{figure}
In this section, we propose an achievable rate for the half-duplex parallel relay channel. Our achievable scheme is based on the combination of the successive relaying protocol based on DPC scheme and simultaneous relaying protocol based on DDF (SSRD scheme).
Hence, we have the following Theorem.

\begin{Theorem}\label{thm:ThSSRD}
Considering Fig.~\ref{fig:GeneralAchm.eps}, for the half-duplex parallel relay channel, SSRD scheme achieves the following rate $C_{SSRD}^{low}$:
\begin{IEEEeqnarray}{rl}\label{eq:Ach4}
C_{SSRD}^{low}=&\min\left(R_1+R_4+R_5+R_6,R_2+R_3+R_7+R_8+R_9\right),\\
&\text{subject to:}\nonumber\\
&~~~~~R_9\leq R_6,R_1+R_5\leq R_3+R_7,R_4\leq R_2 + R_8.\label{eq:Ach4_1}
\end{IEEEeqnarray}
\end{Theorem}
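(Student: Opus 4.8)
The plan is to construct the SSRD codebook by concatenating the DPC-based successive scheme of Theorem~\ref{thm:Th2} with the DDF-based simultaneous scheme (Corollary~\ref{cor:cor4}), and to read the rates $R_1,\dots,R_9$ as labels for the individual sub-streams and the per-segment mutual-information quantities that the two protocols generate across the four time slots of Fig.~\ref{fig:GeneralAchm.eps}. First I would split the source message into independent components and route each component through a particular (relay, time-slot) pair: the components carried by the successive phase are encoded with Gelfand--Pinsker binning so that the source pre-cancels the relay codeword it already knows (exactly as in the proof of Theorem~\ref{thm:Th2}), while the components carried by the simultaneous phase are superimposed into a common/private pair and forwarded coherently by the two relays in the MAC state (as in the DDF achievability theorem and Corollary~\ref{cor:cor4}). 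Standard random-coding, superposition, and binning constructions generate all codebooks, and block-Markov encoding links consecutive blocks so that each relay forwards in its transmit slot what it decoded in the previous receive slot.

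Next I would carry out the relay-side error analysis. In its receiving slot each relay must jointly decode the new source sub-message together with the previous cooperative sub-message of the other relay; applying the joint-typicality/AEP argument precisely as in Appendices~A and~B, the error probability vanishes provided the rate a relay must absorb does not exceed the conditional mutual information its incoming links can carry. I expect these relay-decodability events to produce exactly the two internal constraints $R_1+R_5\le R_3+R_7$ and $R_4\le R_2+R_8$, each balancing the information a relay must decode against the capacity of its incoming cut, while the cooperation requirement that the helping relay be able to convey the other relay's bin index (as in the binning discussion preceding Theorem~\ref{thm:Th21}) yields the remaining constraint $R_9\le R_6$.

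Then I would treat decoding at node~$3$. The destination faces a multiple-access cut fed by the two relays across the two transmit slots; applying joint typicality with backward (or successive) decoding, justified by the BME analysis of Theorem~\ref{thm:Th1} and by Proposition~\ref{pro:pro1}, the error probability vanishes as long as the total rate crossing each of the two relevant cuts stays below the sum of the corresponding mutual informations. These two cut conditions are exactly the two arguments of the outer minimum, $R_1+R_4+R_5+R_6$ and $R_2+R_3+R_7+R_8+R_9$, so taking the smaller of the two and imposing the three internal inequalities above delivers the claimed $C_{SSRD}^{low}$. In effect the argument is a max-flow/min-cut statement for the layered network that the protocol induces: the two sums are the values of the two bottleneck cuts, and the three side constraints are the intermediate feasibility conditions guaranteeing the relays can route the flow consistently.

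The main obstacle will be the simultaneous coexistence of the three mechanisms. The DPC pre-cancellation requires the source to know the interfering relay codeword; the block-Markov cooperation requires that same codeword to depend causally on a message decoded in the previous block; and the DDF superposition requires a coherent common component shared by both relays. I expect the delicate bookkeeping to lie in verifying that these requirements are mutually consistent across the four slots — that the interference each node sees is either DPC-cancelled, jointly decoded, or treated as noise without contradiction — and that the resulting typicality error events decompose cleanly into the nine-rate inequalities. Once the combined scheme is shown to reduce to Theorem~\ref{thm:Th2} and the DDF achievability theorem on the appropriate sub-blocks, the two cut sums and the three internal constraints follow directly from the relay and destination error analyses, and taking the smaller cut value yields the stated rate.
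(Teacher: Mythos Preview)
Your high-level plan --- glue the DPC successive scheme onto the DDF simultaneous scheme and extract a min-cut expression --- is exactly what the paper does, and the paper's argument is in fact much simpler than you anticipate. But you have the roles of several of the $R_i$ reversed, and this leads you to invoke machinery (BME, inter-relay binning) that is absent from SSRD.

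In the paper, $R_1, R_4, R_5, R_6$ are the rates \emph{into} the relays (what they decode from the source in slots $t_1, t_2, t_3$), while $R_2, R_3, R_7, R_8, R_9$ are the rates \emph{out of} the relays (what the destination decodes in slots $t_1, t_2, t_4$). The two arguments of the outer $\min$ are therefore the source-side cut and the destination-side cut, not two destination MAC constraints as you write. The three side inequalities are not relay-decodability conditions but forward-flow feasibility: relay~1 can only forward what it has already decoded, so its outgoing capacity must cover its incoming rate ($R_1+R_5\le R_3+R_7$); likewise for relay~2 ($R_4\le R_2+R_8$); and the coherent common message sent in the MAC state of slot $t_4$ cannot exceed the common message both relays received in the BC state of slot $t_3$ ($R_9\le R_6$). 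This last constraint is purely a DDF flow condition and has nothing to do with bin indices or one relay conveying the other's message in the BME sense --- SSRD contains no inter-relay cooperation of that kind. Once the roles are straightened out, the ``delicate bookkeeping'' you worry about disappears: the achievability of each individual $R_i$ is inherited directly from Theorem~\ref{thm:Th2} and Corollary~\ref{cor:cor4} on the corresponding slots, and the theorem is simply the statement that the end-to-end rate equals the smaller of the two cuts subject to per-relay flow feasibility.
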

\begin{proof}
SSRD scheme is illustrated in Fig.~\ref{fig:GeneralAchm.eps}. As indicated in the figure, transmission is performed in 4 time slots. Relay 1 transmits its private message which was received in time slots $t_1$ and $t_3$ (corresponding to rates $R_1$ and $R_5$) in time slots $t_2$ and $t_4$ (corresponding to rates $R_3$ and $R_7$). On the other hand, relay 2 transmits its private message which has been received in time slot $t_2$ (corresponding to rate $R_4$) in time slots $t_1$ and $t_4$ (corresponding to rates $R_2$ and $R_8$). Furthermore, the two relays send the common message they have already received in time slot $t_3$ (corresponding to rate $R_6$) coherently in time slot $t_4$ (corresponding to rate $R_9$). As observed, here we consider the private rate for both relays in the MAC state, i.e. time slot $t_4$. This is due to the reason that relay 2 also receives the private message in time slot $t_2$. Hence, from the above description and Fig.~\ref{fig:GeneralAchm.eps}, we have
\begin{IEEEeqnarray}{rl}\label{eq:Ach4}
C_{SSRD}^{low}=&\min\left(R_1+R_4+R_5+R_6,R_2+R_3+R_7+R_8+R_9\right),\\
&\text{subject to:}\nonumber\\
&~~~~~R_9\leq R_6,~R_1+R_5\leq R_3+R_7,~R_4\leq R_2 + R_8.\label{eq:Ach4_1}
\end{IEEEeqnarray}
\end{proof}

Using corollaries~\ref{cor:cor2}, \ref{cor:cor4}, and Proposition~\ref{pro:pro1}, for the Gaussian case we have

\begin{IEEEeqnarray}{rl}
C_{SSRD}^{low} =& \min\left(t_{1}C\left(\frac{h_{01}^2P_{0}^{(1)}}{t_{1}}\right)+
t_{2}C\left(\frac{h_{02}^2P_{0}^{(2)}}{t_{2}}\right)+
t_{3}C\left(\frac{h_{01}^2P_{0,p}^{(3)}}{t_{3}}\right)+
t_3C\left(\frac{h_{02}^2P_{0,c}^{(3)}}{t_3+h_{02}^2P_{0,p}^{(3)}}\right),\right.\nonumber\\
&\left.t_{1}C\left(\frac{h_{23}^2P_{2}^{(1)}}{t_1}\right)
+t_{2}C\left(\frac{h_{13}^2P_{1}^{(2)}}{t_{2}}\right)+\right.\nonumber\\
&\left.t_4C\left(\frac{h_{13}^2P_{1,p}^{(4)}+h_{23}^2P_{2,p}^{(4)}}{t_4}\right)+
t_4C\left(\frac{\left(h_{13}\sqrt{P_{1,c}^{(4)}}+h_{23}\sqrt{P_{2,c}^{(4)}}\right)^2}
{t_4+h_{13}^2P_{1,p}^{(4)}+h_{23}^2P_{2,p}^{(4)}}\right)
\right),\\
&\text{subject to:}\nonumber\\
&~~~~~t_4C\left(\frac{\left(h_{13}\sqrt{P_{1,c}^{(4)}}+h_{23}\sqrt{P_{2,c}^{(4)}}\right)^2}
{t_4+h_{13}^2P_{1,p}^{(4)}+h_{23}^2P_{2,p}^{(4)}}\right)\leq
t_3C\left(\frac{h_{02}^2P_{0,c}^{(3)}}{t_3+h_{02}^2P_{0,p}^{(3)}}\right),\nonumber\\
&~~~~~t_{1}C\left(\frac{h_{01}^2P_{0}^{(1)}}{t_{1}}\right)
+t_{3}C\left(\frac{h_{01}^2P_{0,p}^{(3)}}{t_{3}}\right)\leq
t_{2}C\left(\frac{h_{13}^2P_{1}^{(2)}}{t_{2}}\right) + t_{4}C\left(\frac{h_{13}^2P_{1,p}^{(4)}}{t_{4}}\right),\nonumber\\
&~~~~~t_{2}C\left(\frac{h_{02}^2P_{0}^{(2)}}{t_{2}}\right)\leq
t_{1}C\left(\frac{h_{23}^2P_{2}^{(1)}}{t_1}\right)+
t_4C\left(\frac{h_{23}^2P_{2,p}^{(4)}}{t_4}\right),\nonumber\\
&~~~~~P_{0}^{(1)}+P_{0}^{(2)}+P_{0,p}^{(3)}+P_{0,c}^{(3)} = P_0,\nonumber\\
&~~~~~P_{1}^{(2)} + P_{1,p}^{(4)} + P_{1,c}^{(4)} = P_1,\nonumber\\
&~~~~~P_{2}^{(1)} + P_{2,p}^{(4)} + P_{2,c}^{(4)} = P_2,\nonumber\\
&~~~~~t_1 + t_2 + t_3 + t_4 = 1,\nonumber\\
&~~~~~0\leq t_1,~t_2,~t_3,~t_4,~P_0^{(1)},~P_{0}^{(2)},~P_{0,p}^{(3)},~P_{0,c}^{(3)},
~P_{1}^{(2)},~P_{1,p}^{(4)},~P_{1,c}^{(4)},~P_{2}^{(1)},~P_{2,p}^{(4)},~P_{2,c}^{(4)}.\nonumber
\end{IEEEeqnarray}

According to corollary~\ref{cor:cor5}, another combined simultaneous-successive relaying protocol based on BME is not necessary. However, a ``Simultaneous-Successive Relaying protocol based on BME-DPC", can be easily derived. Assuming the first relay decodes the second one's message, the achievable rate of this new scheme would be the same as $C_{SSRD}^{low}$. However, since the messages for the second relay are common, $R_8$ in the expression of the achievable rate is zero. Furthermore, the following constraints instead of (\ref{eq:Ach4_1}) should be satisfied:
\begin{IEEEeqnarray}{rl}\label{eq:Ach5}
&R_9\leq R_4 + R_6,~R_1+R_5\leq R_3+R_7,~R_1 + R_4\leq t_1I\left(X_0^{(1)},X_2^{(1)};Y_1^{(1)}\right).\label{eq:Ach5_1}
\end{IEEEeqnarray}

\section{Optimality Results}
In this section, an upper bound for the half-duplex parallel relay channel is derived and investigated. The authors in~\cite{Aazhang1}
proposed some upper bounds on the achievable rate for general
half-duplex multi-terminal networks. Here, we explain their results
briefly and apply them to our half-duplex parallel relay network.

Authors in~\cite{Aazhang1} define the concept of \emph{state} for a half-duplex network with $\emph{N}$ nodes. The state of the
network is a \emph{valid partitioning of its nodes into two sets of the
``sender nodes" and the ``receiver nodes" such that
there is no active link that arrives at a sender node}, and $\hat{t}_m$ is the
portion of the time that network is used in state $\emph{m}$ where
$\emph{m}\in\{1,2,\ldots,\emph{M}\}$. The following Theorem for the upper bound of the information flow from the subset $S_1$ to the subset $S_2$ of the nodes, where $S_1$ and $S_2$ are disjoint is
proved in~\cite{Aazhang1}.
\begin{Theorem}\label{thm:Th3}
For a general half-duplex network with $N$ nodes and a finite number of states, $M$, the maximum achievable
information rates $\{R^{ij}\}$ from a node set $S_{1}$ to a disjoint
node set $S_{2}$, $S_{1},S_{2}\subset\{0,1,\ldots,N-1\}$, is bounded by
\begin{equation}
\sum_{i\in S_{1},j\in S_{2}}{R^{ij}}\leq \sup_{p(x_{0}^{(m)},x_{2}^{(m)},\ldots,x_{N-1}^{(m)}),\hat{t}_{m}}\min_{S}{\sum_{m=1}^{M}{\hat{t}_{m}
I\left(X_{S}^{(m)};Y_{S}^{(m)}\mid X_{S^{c}}^{(m)}\right)}}.
\end{equation}
for some joint probability distribution
$p(x_{0}^{(m)},x_{2}^{(m)},\ldots,x_{N-1}^{(m)})$ when the minimization is
over all the sets $S\subset\{0,1,\ldots,N-1\}$ subject to $S\bigcap
S_{1} = S_{1}$, $S\bigcap S_{2} = \emptyset$ and the supremum is
over all the non-negative $\hat{t}_{m}$ subject to
$\sum_{i=1}^{M}{\hat{t}_{m}}=1$. Here, $x_S^{(m)}$, $y_S^{(m)}$, and $x_{S^{c}}^{(m)}$ denote the signals transmitted and received by nodes in set $S$, and transmitted by nodes in set $S^{c}$, during state $m$, respectively.
\end{Theorem}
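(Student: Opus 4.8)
The plan is to prove this cut-set-type upper bound by adapting the standard converse argument for memoryless networks to the half-duplex setting, where the key new ingredient is the time-sharing over the finite set of network states. The statement being proved is a general max-flow/min-cut bound: for any cut $S$ separating $S_1$ from $S_2$, the total rate crossing the cut cannot exceed the time-averaged conditional mutual information across that cut, and since this holds for every valid cut, the rate is bounded by the minimum over all cuts (and optimized by the supremum over input distributions and state time-allocations $\hat t_m$). Because this Theorem is attributed to reference~\cite{Aazhang1}, I would present the proof as a specialization of their general machinery to our $N=4$ parallel relay network, but I would still want to reconstruct the core argument.

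First I would fix an arbitrary admissible cut $S$ with $S\cap S_1=S_1$ and $S\cap S_2=\emptyset$, and consider a length-$n$ block code achieving the rates $\{R^{ij}\}$ with vanishing error probability. Let $W$ denote the aggregate message indexed by the source nodes in $S_1$ destined for the sink nodes in $S_2$. By Fano's inequality, $H(W\mid Y_{S_2}^n)\leq n\epsilon_n$ with $\epsilon_n\to 0$, so $n\sum_{i\in S_1,j\in S_2}R^{ij}\leq I(W;Y_{S^c}^n)+n\epsilon_n$, where I group the received signals at the receiver side $S^c$ of the cut. The next step is the chain-rule expansion: I would write $I(W;Y_{S^c}^n)=\sum_{k=1}^{n}I(W;Y_{S^c,k}\mid Y_{S^c}^{k-1})$ and upper-bound each term by $I(X_{S,k};Y_{S^c,k}\mid X_{S^c,k})$, using that the channel is memoryless and that the inputs $X_{S^c,k}$ at the receiver side are functions of the past (or of independent messages) so conditioning on them and on $X_{S,k}$ captures all the dependence on $W$. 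This is the standard cut-set step.

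The half-duplex structure enters through the state variable. I would introduce the state index $m\in\{1,\dots,M\}$ as a (deterministic or time-sharing) label of each channel use $k$, and let $\hat t_m$ be the fraction of the $n$ uses spent in state $m$. Partitioning the per-symbol bound according to the active state and introducing a time-sharing random variable gives
\begin{equation}
\sum_{i\in S_1,j\in S_2}R^{ij}\leq \sum_{m=1}^{M}\hat t_m\, I\!\left(X_S^{(m)};Y_S^{(m)}\mid X_{S^c}^{(m)}\right)+\epsilon_n,
\end{equation}
where $X_S^{(m)},Y_S^{(m)},X_{S^c}^{(m)}$ are the single-letter variables induced by the code conditioned on being in state $m$. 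Letting $n\to\infty$ kills $\epsilon_n$. Since the left side is independent of the chosen cut while the right side depends on $S$, the bound holds with the minimum over all admissible $S$; and since it holds for the particular input distribution and time-allocation induced by any code, it certainly holds after taking the supremum over all joint input distributions $p(x_0^{(m)},\dots,x_{N-1}^{(m)})$ and all valid $\{\hat t_m\}$, yielding exactly the claimed inequality.

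\textbf{The main obstacle} I expect is making the state-conditioning rigorous in the half-duplex model: one must verify that in each state the partition into sender and receiver nodes is such that no active incoming link terminates at a sender node, so that the per-state term $I(X_S^{(m)};Y_S^{(m)}\mid X_{S^c}^{(m)})$ is well-defined and that the receiver-side inputs really are conditionally independent of $W$ given the past, which is what legitimizes discarding them via conditioning. A secondary subtlety is justifying the interchange that lets the single per-symbol mutual informations be collected into a convex combination over states with weights $\hat t_m$ summing to one; this is a standard time-sharing/concavity argument, but it relies on the state schedule being fixed and known, which holds here by the half-duplex state formalism of~\cite{Aazhang1}. Given that the result is quoted from prior work, I would keep these verifications brief and defer the detailed state-bookkeeping to the cited reference.
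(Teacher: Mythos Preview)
The paper does not prove this theorem at all: it is stated verbatim as a quotation of the result from~\cite{Aazhang1}, with the explicit remark that ``the following Theorem \ldots\ is proved in~\cite{Aazhang1},'' and the paper then immediately applies it to obtain~(\ref{eq:UpperBound}). So there is no in-paper proof to compare against; your proposal goes well beyond what the authors actually do.

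That said, your reconstruction is the standard Fano-plus-chain-rule cut-set argument with a state/time-sharing variable layered on top, which is indeed the approach of the cited reference. You correctly anticipated this situation by noting the attribution to~\cite{Aazhang1} and proposing to keep the verification brief and defer the detailed bookkeeping to that source. For the purposes of this paper, the appropriate ``proof'' is simply the citation, and your sketch would be more than sufficient as supporting commentary if any were needed.
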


From Theorem~\ref{thm:Th3}, the maximum achievable rate
$C^{low}$ is upper bounded as
\begin{IEEEeqnarray}{rl}\label{eq:UpperBound}
C^{low}\leq\emph{C}^{up}\triangleq\min&\left(\hat{t}_{1}I\left(X_0^{(1)};Y_1^{(1)}
\mid X_2^{(1)}\right)+\hat{t}_{2}I\left(X_0^{(2)}
;Y_2^{(2)}\mid X_1^{(2)}\right)+
\hat t_{3}I\left(X_0^{(3)};Y_1^{(3)},Y_2^{(3)}\right),\right.\nonumber\\
&\left.\hat t_{2}I\left(X_0^{(2)}, X_1^{(2)};Y_2^{(2)},Y_3^{(2)}\right)
+\hat t_{3}I\left(X_{0}^{(3)};Y_{2}^{(3)}\right) + \hat t_{4}I\left(X_{1}^{(4)};Y_{3}^{(4)}\mid X_{2}^{(4)}\right),\right.\nonumber\\
&\left.\hat t_{1}I\left(X_0^{(1)}, X_2^{(1)};Y_1^{(1)},Y_3^{(1)}\right)
+\hat t_{3}I\left(X_{0}^{(3)};Y_{1}^{(3)}\right)+ \hat t_{4}I\left(X_{2}^{(4)};Y_{3}^{(4)}\mid X_{1}^{(4)}\right),\right.\nonumber\\
&\left.\hat t_{1}I\left(X_2^{(1)};Y_3^{(1)}\right)
+\hat t_{2}I\left(X_1^{(2)};Y_3^{(2)}\right)
+\hat t_{4}I\left(X_1^{(4)},X_2^{(4)};Y_3^{(4)}\right)\right),\\
&\text{subject to}\nonumber\\
&~~~~~\hat t_1+\hat t_2+\hat t_3+\hat t_4=1.\nonumber
\end{IEEEeqnarray}
By setting $\hat{t}_3 = \hat{t}_4 = 0$ in (\ref{eq:UpperBound}), we obtain an upper bound on the successive relaying protocol which we call it \emph{successive cut-set bound} in the sequel.
\begin{Theorem}
In a degraded half-duplex parallel relay channel where the destination receives a degraded version of the received signals at relays, i.e. $X_2^{(1)}\longrightarrow Y_1^{(1)}\longrightarrow Y_3^{(1)}$ and $X_1^{(2)}\longrightarrow Y_2^{(2)}\longrightarrow Y_3^{(2)}$, BME based on backward decoding achieves the successive cut-set bound.
\end{Theorem}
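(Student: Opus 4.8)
The plan is to show that the achievable rate $C_{BME_{back}}^{low}$ in (\ref{eq:Back1}) coincides, term by term, with the \emph{successive cut-set bound} obtained from (\ref{eq:UpperBound}) by setting $\hat t_3=\hat t_4=0$. Under that substitution the upper bound reduces to the minimum of the four expressions
\begin{IEEEeqnarray}{rl}
&\hat t_1I\left(X_0^{(1)};Y_1^{(1)}\mid X_2^{(1)}\right)+\hat t_2I\left(X_0^{(2)};Y_2^{(2)}\mid X_1^{(2)}\right),\nonumber\\
&\hat t_2I\left(X_0^{(2)},X_1^{(2)};Y_2^{(2)},Y_3^{(2)}\right),\quad \hat t_1I\left(X_0^{(1)},X_2^{(1)};Y_1^{(1)},Y_3^{(1)}\right),\nonumber\\
&\hat t_1I\left(X_2^{(1)};Y_3^{(1)}\right)+\hat t_2I\left(X_1^{(2)};Y_3^{(2)}\right),\nonumber
\end{IEEEeqnarray}
optimized over $p(x_0^{(1)},x_2^{(1)})$, $p(x_0^{(2)},x_1^{(2)})$ and $\hat t_1+\hat t_2=1$. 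Comparing with (\ref{eq:Back1}), the first and fourth of these expressions are literally identical to the third and fourth terms of $C_{BME_{back}}^{low}$, so it remains only to reconcile the two cut terms $\hat t_1I(X_0^{(1)},X_2^{(1)};Y_1^{(1)},Y_3^{(1)})$ and $\hat t_2I(X_0^{(2)},X_1^{(2)};Y_2^{(2)},Y_3^{(2)})$ with the first two terms $t_1I(X_0^{(1)},X_2^{(1)};Y_1^{(1)})$ and $t_2I(X_0^{(2)},X_1^{(2)};Y_2^{(2)})$ of (\ref{eq:Back1}).

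First I would establish, for the first time slot, the Markov chain $\left(X_0^{(1)},X_2^{(1)}\right)\longrightarrow Y_1^{(1)}\longrightarrow Y_3^{(1)}$. Two facts are combined. The hypothesis of the theorem supplies the degradation $X_2^{(1)}\longrightarrow Y_1^{(1)}\longrightarrow Y_3^{(1)}$, i.e. $I(X_2^{(1)};Y_3^{(1)}\mid Y_1^{(1)})=0$. The assumption that there is no direct source--destination link supplies the structural fact, visible from (\ref{eq:Sys1_2}), that $Y_3^{(1)}$ depends on the transmitted signals only through $X_2^{(1)}$; since the additive noises are independent across nodes, this yields $Y_3^{(1)}\perp\left(X_0^{(1)},Y_1^{(1)}\right)\mid X_2^{(1)}$ and in particular $I(X_0^{(1)};Y_3^{(1)}\mid Y_1^{(1)},X_2^{(1)})=0$. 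Expanding by the chain rule,
\begin{IEEEeqnarray}{rl}
I\left(X_0^{(1)},X_2^{(1)};Y_3^{(1)}\mid Y_1^{(1)}\right)=&\,I\left(X_2^{(1)};Y_3^{(1)}\mid Y_1^{(1)}\right)\nonumber\\
&+\,I\left(X_0^{(1)};Y_3^{(1)}\mid Y_1^{(1)},X_2^{(1)}\right)=0,\nonumber
\end{IEEEeqnarray}
which is precisely the desired Markov chain. Hence $I(X_0^{(1)},X_2^{(1)};Y_1^{(1)},Y_3^{(1)})=I(X_0^{(1)},X_2^{(1)};Y_1^{(1)})$, and the identical argument for the second time slot (using $X_1^{(2)}\longrightarrow Y_2^{(2)}\longrightarrow Y_3^{(2)}$ together with (\ref{eq:Sys2_2})) gives $I(X_0^{(2)},X_1^{(2)};Y_2^{(2)},Y_3^{(2)})=I(X_0^{(2)},X_1^{(2)};Y_2^{(2)})$.

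With these reductions the four terms of the successive cut-set bound become exactly the four terms of (\ref{eq:Back1}). It then remains to observe that the two optimizations are over the same objects: the bound of Theorem~\ref{thm:Th3} optimizes each state distribution independently, so for the two active slots it ranges over $p(x_0^{(1)},x_2^{(1)})$ and $p(x_0^{(2)},x_1^{(2)})$ separately, which is exactly the product form assumed for $C_{BME_{back}}^{low}$ in Theorem~\ref{thm:Th1}; likewise $\hat t_1,\hat t_2$ play the role of $t_1,t_2$. Since the two expressions are equal pointwise in the time allocation and the input distributions, their optimal values coincide, and because $C_{BME_{back}}^{low}$ is always upper bounded by the successive cut-set bound, equality means that BME with backward decoding achieves it. I expect the main obstacle to be the Markov-chain step: the theorem only postulates the degradation with respect to $X_2^{(1)}$ (resp. $X_1^{(2)}$), whereas the cut term involves the full input $\left(X_0^{(1)},X_2^{(1)}\right)$. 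The key is to notice that the missing piece $I(X_0^{(1)};Y_3^{(1)}\mid Y_1^{(1)},X_2^{(1)})=0$ is not an extra assumption but is forced by the absence of a source--destination link, so the degradation hypothesis suffices once combined with the channel's memoryless structure; everything else is routine chain-rule bookkeeping and a check that the two optimization domains match.
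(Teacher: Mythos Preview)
Your proposal is correct and follows the same approach as the paper, which simply states ``Setting $\hat{t}_3 = \hat{t}_4 = 0$ in (\ref{eq:UpperBound}) and comparing the result with (\ref{eq:Back1}) the Theorem is proved.'' You have fleshed out the one non-trivial step the paper leaves implicit, namely the Markov-chain argument showing that $I(X_0^{(1)},X_2^{(1)};Y_1^{(1)},Y_3^{(1)})=I(X_0^{(1)},X_2^{(1)};Y_1^{(1)})$ (and similarly for slot~2) by combining the degradation hypothesis with the absence of a source--destination link.
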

\begin{proof}
Setting $\hat{t}_3 = \hat{t}_4 = 0$ in (\ref{eq:UpperBound}) and comparing the result with (\ref{eq:Back1}) the Theorem is proved.
\end{proof}
In high SNR scenarios, we have the following Theorem.
\begin{Theorem} \label{thm:Th4}
In high SNR scenarios, assuming non-zero source-relay and relay-destination links, when power available for the source and each
relay tends to infinity, time slots $\hat t_{3}$ and $\hat t_{4}$
in~(\ref{eq:UpperBound}) tend to zero as
$O\left(\frac{1}{\log P_0}\right)$. Furthermore, the
upper bound on the capacity of the half-duplex parallel relay channel
in high SNR scenarios is
\begin{eqnarray}
C^{up}=C_{DPC}^{low}+O\left(\frac{1}{\log P_0}\right).\nonumber
\end{eqnarray}
In other words, DPC achieves the capacity of a
half-duplex Gaussian parallel relay channel as SNR goes to infinity.
\end{Theorem}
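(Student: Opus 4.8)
The plan is to sandwich the upper bound $C^{up}$ of (\ref{eq:UpperBound}) between $C_{DPC}^{low}$ and $C_{DPC}^{low}+O(1/\log P_0)$, exploiting that an upper bound never falls below an achievable rate, so $C_{DPC}^{low}\le C^{up}$ holds automatically. First I would record the high-SNR order of the achievable rate: evaluating Corollary~\ref{cor:cor2} at $t_1=t_2=\tfrac12$ with $P_0^{(1)}=P_0^{(2)}=P_0/2$, each of the four point-to-point links $A_1=t_1C(h_{01}^2P_0^{(1)}/t_1)$, $B_1=t_2C(h_{13}^2P_1/t_2)$, $A_2=t_2C(h_{02}^2P_0^{(2)}/t_2)$, $B_2=t_1C(h_{23}^2P_2/t_1)$ contributes prelog $\tfrac12$, so $C_{DPC}^{low}=\tfrac12\log_2P_0+O(1)$ (dirty-paper coding removes the inter-relay interference, so no degrees of freedom are lost). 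This pins the target: any cut whose degrees of freedom drop below $1$ cannot be the minimizing cut once $P_0$ is large.

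Next I would show $\hat t_3,\hat t_4=O(1/\log P_0)$ at the optimizing configuration of (\ref{eq:UpperBound}). Write the four cuts as $T_1,\dots,T_4$ in the order listed. The source-to-relays cut $T_1$ is active only in slots $1,2,3$ and each of its mutual informations has a single transmitting source, hence is a scalar Gaussian term bounded by $\tfrac12\log_2P_0+O(1)$ per unit time; therefore $T_1\le(\hat t_1+\hat t_2+\hat t_3)\tfrac12\log_2P_0+O(1)=(1-\hat t_4)\tfrac12\log_2P_0+O(1)$ (the $\tfrac{\hat t_m}{2}\log(1/\hat t_m)$ slivers are uniformly $O(1)$). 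Symmetrically, the relays-to-destination cut $T_4$ involves only slots $1,2,4$ and a single receiving destination, giving $T_4\le(1-\hat t_3)\tfrac12\log_2P_0+O(1)$. Since $C^{up}\le\min(T_1,T_4)$ while $C^{up}\ge C_{DPC}^{low}=\tfrac12\log_2P_0+O(1)$, we obtain $\hat t_4\cdot\tfrac12\log_2P_0\le O(1)$ and likewise for $\hat t_3$, i.e. $\hat t_3,\hat t_4=O(1/\log P_0)$, which is the first assertion.

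For the quantitative gap I would fold the two vanishing slots back into the successive slots. Starting from the upper-bound optimizer, build a feasible DPC allocation by adding the slot-$3$ time and source power to a source-relay slot and the slot-$4$ time and relay powers to the relay-destination slots. Because the perspective map $(t,P)\mapsto t\,C(h^2P/t)$ is jointly concave and homogeneous of degree one, it is superadditive, so each folded successive term dominates the sum of the pieces it absorbs; moreover the broadcast term $I(X_0^{(3)};Y_1^{(3)},Y_2^{(3)})$, the coherent MAC term $I(X_1^{(4)},X_2^{(4)};Y_3^{(4)})$, and the $2\times2$ MIMO terms in $T_2,T_3$ each exceed their point-to-point successive counterparts only by a bounded amount (respectively by $\tfrac12\log(1+h_{02}^2/h_{01}^2)$ using $h_{02}\le h_{01}$, by the coherent-combining gain, and by Hadamard's inequality). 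Hence each cut obeys $T_i\le(\text{a pairwise sum of }A_1,A_2,B_1,B_2)+\hat t_m\cdot O(1)$ with $\hat t_m\cdot O(1)=O(1/\log P_0)$, so $(T_1,T_2,T_3,T_4)$ upper-bound $(A_1+A_2,\,A_2+B_1,\,A_1+B_2,\,B_1+B_2)$ up to $O(1/\log P_0)$. Invoking the elementary identity $\min(A_1+A_2,A_2+B_1,A_1+B_2,B_1+B_2)=\min(A_1,B_1)+\min(A_2,B_2)$ then gives $C^{up}\le C_{DPC}^{low}+O(1/\log P_0)$, and with $C_{DPC}^{low}\le C^{up}$ the equality $C^{up}=C_{DPC}^{low}+O(1/\log P_0)$ follows.

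The \textbf{main obstacle} is precisely this last accounting. Naively each of slots $3,4$ contributes $\hat t_m\times I=O(1/\log P_0)\times O(\log P_0)=O(1)$, which would only close the gap to $O(1)$; the point is that the broadcast/MAC contributions must be \emph{charged against the productive successive slots they displace}. The superadditive folding cancels the leading $\tfrac12\log_2P_0$ term exactly, leaving only the bounded excess of the broadcast, coherent-MAC and MIMO mutual informations over scalar point-to-point links, which survives as $\hat t_m\cdot O(1)=O(1/\log P_0)$. Making this cancellation rigorous—in particular splitting the relay powers correctly when folding the coherent MAC cut $T_4$ and the MIMO cuts $T_2,T_3$, and checking that the high-SNR approximations hold uniformly because the optimal $t_1,t_2$ stay bounded away from $0$—is where the substantive work lies.
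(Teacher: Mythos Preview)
Your overall architecture matches the paper's: lower-bound $C_{DPC}^{low}$ at equal split to get $\tfrac12\log P_0+O(1)$, use the source cut $T_1$ and destination cut $T_4$ to force $\hat t_4,\hat t_3=O(1/\log P_0)$, then bound each cut against the corresponding DPC term. The superadditivity of $(t,P)\mapsto tC(h^2P/t)$ and the identity $\min(A_1{+}A_2,A_2{+}B_1,A_1{+}B_2,B_1{+}B_2)=\min(A_1,B_1)+\min(A_2,B_2)$ are clean devices the paper does not use explicitly; they tidy up the accounting for $T_1$ and $T_4$ nicely.

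The gap is in your treatment of the mixed cuts $T_2,T_3$. The $2\times2$ term $\hat t_2\,I(X_0^{(2)},X_1^{(2)};Y_2^{(2)},Y_3^{(2)})$ carries weight $\hat t_2\approx\tfrac12$, not $\hat t_3$ or $\hat t_4$, so an $O(1)$-per-unit-time excess there costs $O(1)$, not $O(1/\log P_0)$. And Hadamard's inequality \emph{only} gives an $O(1)$ excess: it bounds the determinant by $(1+h_{02}^2\Sigma_{00}+2h_{02}h_{12}\Sigma_{01}+h_{12}^2\Sigma_{11})(1+h_{13}^2\Sigma_{11})$, and the first factor genuinely exceeds $1+h_{02}^2\Sigma_{00}$ by a constant multiplicative factor $(1+\tfrac{h_{12}}{h_{02}}\sqrt{\Sigma_{11}/\Sigma_{00}})^2=\Theta(1)$ since $P_1/P_0=\gamma_1$. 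So your argument as written yields $T_2\le A_2+B_1+O(1)$, which is too weak.

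What actually makes the MIMO excess vanish is the structure $h_{03}=0$: for the $2\times2$ channel one has $\det(I+H\Sigma H^T)=1+\mathrm{tr}(H\Sigma H^T)+h_{02}^2h_{13}^2\det\Sigma$, so every $h_{12}$-dependent piece lives in the trace (order $P_0$) and is dominated by $h_{02}^2h_{13}^2\Sigma_{00}\Sigma_{11}=\Theta(P_0^2)$. Hence the MIMO term equals $\hat t_2 C(h_{02}^2\hat P_0^{(2)}/\hat t_2)+\hat t_2 C(h_{13}^2\hat P_1^{(2)}/\hat t_2)+O(1/P_0)$, and from there your folding/superadditivity argument for the slot-$3$ and slot-$4$ remainders goes through exactly as you outline. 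This is precisely what the paper does in its step~$(a)$ for the second and third cuts; replace ``Hadamard'' by this determinant identity and your proof closes.
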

\begin{proof}
Throughout the proof, we assume the power of the relays goes to
infinity as $P_1=\gamma_1 P_0,~P_2=\gamma_2 P_0$ where
$\gamma_1, \gamma_2$ are constants independent of the SNR. Substituting $X_{0}^{(1)}\sim\mathcal{N}(0,\hat{P}_{0}^{(1)})$, $X_{0}^{(2)}\sim\mathcal{N}(0,\hat{P}_{0}^{(2)})$, $X_{0}^{(3)}\sim\mathcal{N}(0,\hat{P}_{0}^{(3)})$, $X_{1}^{(2)}\sim\mathcal{N}(0,\hat{P}_{1}^{(2)})$, $X_{1}^{(4)}\sim\mathcal{N}(0,\hat{P}_{1}^{(4)})$, $X_{2}^{(1)}\sim\mathcal{N}(0,\hat{P}_{2}^{(1)})$, and $X_{2}^{(4)}\sim\mathcal{N}(0,\hat{P}_{2}^{(4)})$ in~(\ref{eq:UpperBound}), and assuming complete cooperation between the transmitting and receiving nodes for each cut in~(\ref{eq:UpperBound}), we have
\begin{IEEEeqnarray}{rl}\label{eq:GaussianUpperBound1}
C^{up}\leq&\min\left(\hat t_{1}C\left(\frac{h_{01}^{2}\hat{P}_{0}^{(1)}}{\hat t_{1}}\right)+
\hat t_{2}C\left(\frac{h_{02}^{2}\hat{P}_{0}^{(2)}}{\hat t_{2}}\right)+
\hat t_{3}C\left(\frac{(h_{01}^2 + h_{02}^2)\hat{P}_{0}^{(3)}}{\hat t_{3}}\right),\right.\nonumber\\
&\left.
\hat t_{2}C\left(\frac{h_{02}^2\hat{P}_{0}^{(2)}}{\hat t_{2}} + \frac{(h_{12}^2 + h_{13}^2)\hat{P}_{1}^{(2)}}{\hat t_{2}}
+ \frac{2h_{02}h_{12}\sqrt{\hat{P}_{0}^{(2)}\hat{P}_{1}^{(2)}}}{\hat t_{2}} +
\frac{h_{02}^2h_{13}^2\hat{P}_{0}^{(2)}\hat{P}_{1}^{(2)}}{\hat t_{2}^{2}}\right)+\right.\nonumber\\
&\left.\hat t_{3}C\left(\frac{h_{02}^{2}\hat{P}_{0}^{(3)}}{\hat t_{3}}\right)
+\hat t_{4}C\left(\frac{h_{13}^2\hat{P}_{1}^{(4)}}{\hat t_{4}}\right),\right.\nonumber\\
&\left.
\hat t_{1}C\left(\frac{h_{01}^2\hat{P}_{0}^{(1)}}{\hat t_{1}} + \frac{(h_{12}^2 + h_{23}^2)\hat{P}_{2}^{(1)}}{\hat t_{1}}
+ \frac{2h_{01}h_{12}\sqrt{\hat{P}_{0}^{(1)}\hat{P}_{2}^{(1)}}}{\hat t_{1}} +
\frac{h_{01}^2h_{23}^2\hat{P}_{0}^{(1)}\hat{P}_{2}^{(1)}}{\hat t_{1}^{2}}\right)+\right.\nonumber\\
&\left.\hat t_{3}C\left(\frac{h_{01}^{2}\hat{P}_{0}^{(3)}}{\hat t_{3}}\right)
+\hat t_{4}C\left(\frac{h_{23}^2\hat{P}_{2}^{(4)}}{\hat t_{4}}\right),\right.\nonumber\\
&\left.
\hat t_{1}C\left(\frac{h_{23}^2\hat{P}_{2}^{(1)}}{\hat t_{1}}\right)+
\hat t_{2}C\left(\frac{h_{13}^2\hat{P}_{1}^{(2)}}{\hat t_{2}}\right)+\right.\nonumber\\
&\left.\hat t_{4}C\left(\frac{h_{13}^2\hat{P}_{1}^{(4)}+h_{23}^2\hat{P}_{2}^{(4)}+2 h_{13}h_{23}\sqrt{\hat{P}_{1}^{(4)}\hat{P}_{2}^{(4)}}}{\hat t_{4}}\right)
\right).\\
&\text{subject to:}\nonumber\\
&~~~~~\hat P_{0}^{(1)}+\hat P_{0}^{(2)}+\hat P_{0}^{(3)}=P_0,\nonumber\\
&~~~~~\hat P_{1}^{(2)}+\hat P_{1}^{(4)}=P_1,\nonumber\\
&~~~~~\hat P_{2}^{(1)}+\hat P_{2}^{(4)}=P_2,\nonumber\\
&~~~~~\hat t_{1}+\hat t_{2}+\hat t_{3}+\hat t_{4}=1,\nonumber\\
&~~~~~0\leq \hat t_{1},~\hat t_{2},~\hat t_{3},~\hat t_{4},~\hat P_{0}^{(1)},~\hat P_{0}^{(2)},~\hat P_{0}^{(3)},~\hat P_{1}^{(2)},~\hat P_{1}^{(4)},~\hat P_{2}^{(1)},~\hat P_{2}^{(4)}.\nonumber
\end{IEEEeqnarray}
Furthermore, from corollary~\ref{cor:cor2}, the achievable rate of
the DPC scheme can be expressed as
\begin{IEEEeqnarray}{rl}\label{eq:R_DPC}
C_{DPC}^{low}=\min&\left(
t_{1}C\left(\frac{h_{01}^2{P_{0}^{(1)}}}{t_{1}}\right)+
t_{2}C\left(\frac{h_{02}^2P_{0}^{(2)}}{t_{2}}\right),\right.\nonumber\\
&\left. t_{2}C\left(\frac{h_{02}^2P_{0}^{(2)}}{t_{2}}\right) +
t_{2}C\left(\frac{h_{13}^2P_1}{t_{2}}\right),\right.\nonumber\\
&\left. t_{1}C\left(\frac{h_{01}^2P_{0}^{(1)}}{t_{1}}\right)+
 t_{1}C\left(\frac{h_{23}^2P_2}{t_{1}}\right),\right.\nonumber\\
&\left.t_{1}C\left(\frac{h_{23}^2P_2}{t_{1}}\right)+
t_{2}C\left(\frac{h_{13}^2P_1}{t_{2}}\right)\right).
\end{IEEEeqnarray}
By setting $P_{0}^{(1)} = P_{0}^{(2)} = \frac{P_0}{2}$ and $t_{1} =
t_{2} = 0.5$ in (\ref{eq:R_DPC}),
expression~(\ref{eq:R_DPC}) can be simplified as
\begin{eqnarray}
C_{DPC}^{low} \geq \frac{1}{2}\ln P_0 + c. \label{eq:T6_R_DPC}
\end{eqnarray}
where $c$ is some constant which depends on channel coefficients.
Knowing that the term corresponding to each cut-set in
(\ref{eq:GaussianUpperBound1}) for the optimum values of
$\hat t_1,\cdots,\hat t_4$ is indeed an upper-bound for $C_{DPC}^{low}$, and by
setting $\hat P_{0}^{(1)} = \hat P_{0}^{(2)} = \hat P_{0}^{(3)} = P_0$ in
(\ref{eq:GaussianUpperBound1}), we have the following inequality
between (\ref{eq:T6_R_DPC}) and the first cut of
(\ref{eq:GaussianUpperBound1}).
\begin{IEEEeqnarray}{rl}
\frac{1}{2}\ln P_0 + c\leq
&\frac{\hat t_{1}}{2}\ln\left(\frac{h_{01}^{2}P_0}{\hat t_{1}}\right)+
\frac{\hat t_{2}}{2}\ln\left(\frac{h_{02}^{2}P_0}{\hat t_{2}}\right)+
\frac{\hat t_{3}}{2}\ln\left(\frac{(h_{01}^2 + h_{02}^2)P_0}{\hat t_{3}}\right)+\nonumber\\
&\frac{\hat t_{1}^2}{2h_{01}^2P_0}+\frac{\hat t_{2}^2}{2h_{02}^2P_0}+\frac{\hat t_{3}^2}{2(h_{01}^2+h_{02}^2)P_0}\nonumber\\
&=\frac{\left(1 - \hat t_{4}\right)}{2}\ln P_0 + \frac{\hat t_{1}}{2}\ln h_{01}^2 + \frac{\hat t_{2}}{2}\ln h_{02}^2 + \frac{\hat t_{3}}{2}\ln\left(h_{01}^2 + h_{02}^2\right)\nonumber\\
&- \frac{\hat t_{1}}{2}\ln \hat t_{1} - \frac{\hat t_{2}}{2}\ln \hat t_{2} - \frac{\hat t_{3}}{2}\ln \hat t_{3} + \frac{\hat t_{1}^2}{2h_{01}^2P_0} +
\frac{\hat t_{2}^2}{2h_{02}^2P_0} + \frac{\hat t_{3}^2}{2\left(h_{01}^2 + h_{02}^2\right)P_0}.\label{eq:UP3}
\end{IEEEeqnarray}
Note that in deriving (\ref{eq:T6_R_DPC}) and (\ref{eq:UP3}), the
following inequality is applied to lower/upper-bound the
corresponding terms:
\begin{equation}\label{eq:ineq1}
\ln(x)\leq\ln(1+x)\leq\ln(x)+\frac{1}{x}, \forall x>0.
\end{equation}
Consequently, we have
\begin{eqnarray}
\hat t_{4} &\leq& \frac{1}{\ln P_0}\left( 2c + \hat t_{1}\ln h_{01}^2
+ \hat t_{2}\ln h_{02}^2 + \hat t_{3}\ln\left(h_{01}^2 + h_{02}^2\right)
- \hat t_{1}\ln \hat t_{1} - \hat t_{2}\ln \hat t_{2} - \hat t_{3}\ln \hat t_{3}\right)\nonumber\\
&+& \frac{1}{\ln P_0}\left(\frac{\hat t_{1}^2}{h_{01}^2P_0}
+ \frac{\hat t_{2}^2}{h_{02}^2P_0} + \frac{\hat t_{3}^2}{\left(h_{01}^2
+ h_{02}^2\right)P_0}\right).\nonumber
\end{eqnarray}

Hence, we can bound the optimum value of $\hat t_4$ in (\ref{eq:GaussianUpperBound1}) as
\begin{eqnarray}\label{eq:t4Bound}
0 \leq \hat t_{4} \leq O\left(\frac{1}{\log P_0}\right).
\end{eqnarray}
Similarly, by considering the fourth cut
in~(\ref{eq:GaussianUpperBound1}), we can derive another bound on
the optimum value of $\hat t_{3}$ as follows:
\begin{eqnarray}\label{eq:t3Bound}
0 \leq \hat t_{3} \leq O\left(\frac{1}{\log P_0}\right).
\end{eqnarray}
Applying the inequality between (\ref{eq:T6_R_DPC}) and the term corresponding
to the second cut in (\ref{eq:GaussianUpperBound1}), knowing (from (\ref{eq:t4Bound}) and (\ref{eq:t3Bound})) the
fact that $\hat t_{3}\leq\frac{c_{3}}{\ln P_0}$, and
$\hat t_{4}\leq\frac{c_{4}}{\ln P_0}$ (where $c_{3}$ and $c_{4}$ are
constants), and using inequalities (\ref{eq:ineq1}), and
\begin{eqnarray}\label{eq:ineq2}
\ln(1 + x) \leq x, \forall x\geq0,
\end{eqnarray}
we obtain
\begin{IEEEeqnarray}{rl}
&\frac{1}{2}\ln P_0 + c \leq\nonumber\\
&\frac{\hat t_{2}}{2} \ln \left(\frac{h_{02}^2h_{13}^2\gamma_{1}P_0^2}{\hat t_{2}^2}
\left(1 + \frac{\hat t_{2}}{\gamma_{1}h_{13}^2P_0} + \frac{\hat t_{2}\left(h_{12}^2 + h_{13}^2\right)}{h_{02}^2h_{13}^2P_0} + \frac{\hat t_{2}h_{12}}{h_{13}^2h_{02}\sqrt{\gamma_{1}}P_0}\right)\right) + \nonumber\\
&\frac{\hat t_{3}}{2} \ln\left(\frac{h_{02}^2P_0}{\hat t_{3}}\right) + \frac{\hat t_{4}}{2} \ln\left(\frac{h_{13}^2\gamma_{1}P_0}{\hat t_{4}}\right)+ \nonumber\\
&\frac{\hat t_{2}^3}{2\left(\hat t_{2}h_{02}^2P_0 + \hat t_{2}\gamma_{1}\left(h_{12}^2 + h_{13}^2\right)P_0 + 2\hat t_{2}h_{02}h_{12}\sqrt{\gamma_{1}}P_0 + h_{02}^2h_{13}^2\gamma_{1}P_0^2\right)}+\nonumber\\
&\frac{\hat t_{3}^2}{2h_{02}^2P_0} + \frac{\hat t_{4}^2}{2\gamma_{1}h_{13}^2P_0}\nonumber\\
&\leq \hat t_{2} \ln P_0 + \frac{\hat t_{2}}{2}\ln\left(\frac{h_{02}^2h_{13}^2\gamma_{1}}{\hat t_{2}^2}\right)
+ \frac{\hat t_{2}^2}{2\gamma_{1}h_{13}^2P_0} + \frac{\hat t_{2}^2\left(h_{12}^2 + h_{13}^2\right)}{2h_{02}^2h_{13}^2P_0} + \frac{\hat t_{2}^2h_{12}}{2h_{13}^2h_{02}\sqrt{\gamma_{1}}P_0} + \nonumber\\
&\frac{c_{3}}{2\ln P_0}\ln h_{02}^2 - \frac{c_{3}}{2 \ln P_0}\ln \hat t_{3} + \frac{c_{3}}{2} + \frac{c_{4}}{2\ln P_0}\ln \gamma_{1}h_{13}^2 - \frac{c_{4}}{2 \ln P_0}\ln \hat t_{4} + \frac{c_{4}}{2}+\nonumber\\
&\frac{\hat t_{2}^3}{2\left(\hat t_{2}h_{02}^2P_0 + \hat t_{2}\gamma_{1}\left(h_{12}^2 + h_{13}^2\right)P_0 + 2\hat t_{2}h_{02}h_{12}\sqrt{\gamma_{1}}P_0 + h_{02}^2h_{13}^2\gamma_{1}P_0^2\right)}+\nonumber\\
&\frac{\hat t_{3}^2}{2h_{02}^2P_0} + \frac{\hat t_{4}^2}{2\gamma_{1}h_{13}^2P_0}\nonumber
\end{IEEEeqnarray}
Therefore, we have
\begin{IEEEeqnarray}{rl}
\frac{1}{2}\ln P_0 + c &\leq \hat t_{2}\ln P_0 + \acute{c} \nonumber\\ &+O\left(\frac{1}{\ln P_0}\right) + O\left(\frac{1}{P_0}\right).\nonumber
\end{IEEEeqnarray}
Hence,
\begin{eqnarray}\label{eq:t2}
\frac{1}{2} - \frac{c_{2}}{\log P_0}\leq \hat t_{2}.
\end{eqnarray}
Similarly, from the third cut of~(\ref{eq:GaussianUpperBound1}), for $\hat t_{1}$ we have
\begin{eqnarray}\label{eq:t1}
\frac{1}{2} - \frac{c_{1}}{\log P_0}\leq \hat t_{1}.
\end{eqnarray}
From~(\ref{eq:t2}) and~(\ref{eq:t1}), and also the fact that
$\hat t_{1}+\hat t_{2}+\hat t_{3}+\hat t_{4} = 1$, we obtain
\begin{IEEEeqnarray}{rl}
&\frac{1}{2} - \frac{c_{2}}{\log P_0}\leq \hat t_{2}\leq\frac{1}{2} + \frac{c_{1}}{\log P_0},\label{eq:t2Bound}\\
&\frac{1}{2} - \frac{c_{1}}{\log P_0}\leq \hat t_{1}\leq\frac{1}{2} + \frac{c_{2}}{\log P_0}.\label{eq:t1Bound}
\end{IEEEeqnarray}
Hence, from (\ref{eq:t4Bound}), (\ref{eq:t3Bound}), (\ref{eq:t2Bound}), and (\ref{eq:t1Bound}) as $P_0\to \infty$, $\hat t_{3}$, $\hat t_{4}\to 0$ and $\hat t_{1}$,
$\hat t_{2} \to 0.5$. This proves the first part of the Theorem.

Moreover, knowing that each term corresponding to the four cuts in (\ref{eq:GaussianUpperBound1}) is greater than $0.5 \ln (P_0) + c$ and as  $\hat t_1,\hat t_2$ are strictly above zero (approaching $0.5$), we can easily conclude that
\begin{equation}
\hat P_{0}^{(1)},\hat P_{0}^{(2)}, \hat P_{1}^{(2)}, \hat P_{2}^{(1)}  \sim \Theta \left( P_0 \right).
\end{equation}

Now, we prove that the DPC scheme with the parameters
$t_1= \hat t_1 + \frac{\hat t_3 + \hat t_4}{2}$, $t_2= \hat t_2 + \frac{\hat t_3
+ \hat t_4}{2}$, $P_{0}^{(1)}= \hat P_{0}^{(1)}$ and $P_{0}^{(2)}= \hat P_{0}^{(2)}$, where
$\hat t_1,\cdots, \hat t_4, \hat P_{0}^{(1)}, \hat P_{0}^{(2)}$ are the parameters
corresponding to the maximum value of
(\ref{eq:GaussianUpperBound1}), achieves the capacity with a gap no
more than $O \left( \frac{1}{\log P_0} \right)$. To prove
this, we show that each of the four terms in (\ref{eq:R_DPC}) is no
more than $O \left( \frac{1}{\log P_0} \right)$ below the
corresponding term (from the same cut) in
(\ref{eq:GaussianUpperBound1}). To show this, for the first cut we
have
\begin{IEEEeqnarray}{rl}
&\hat t_{1}C\left(\frac{h_{01}^{2}\hat P_{0}^{(1)}}{\hat t_{1}}\right)+
\hat t_{2}C\left(\frac{h_{02}^{2}\hat P_{0}^{(2)}}{\hat t_{2}}\right)+
\hat t_{3}C\left(\frac{(h_{01}^2 + h_{02}^2)\hat P_{0}^{(3)}}{\hat t_{3}}\right) - t_{1}C\left(\frac{h_{01}^2 {P_{0}^{(1)}}}{t_{1}}\right) -
t_{2}C\left(\frac{h_{02}^2{P_{0}^{(2)}}}{t_{2}}\right) \stackrel{(a)}{\leq } \nonumber \\
&\frac{\hat t_1}{2} \ln \left( \frac{h_{01}^{2}\hat P_{0}^{(1)}}{\hat t_{1}} \right) + \frac{\hat t_2}{2} \ln \left(\frac{h_{02}^{2}\hat P_{0}^{(2)}}{\hat t_{2}}\right) + \hat t_3 C \left( \frac{(h_{01}^2 + h_{02}^2)\hat P^{(3)}_0}{\hat t_{3}} \right) - \left(\frac{\hat t_1}{2} + \frac{\hat t_3 + \hat t_4}{4}\right) \ln \left(\frac{h_{01}^2\hat P_{0}^{(1)}}{t_{1}}\right) \nonumber \\&  - \left(\frac{\hat t_2}{2} + \frac{\hat t_3 + \hat t_4}{4}\right) \ln \left(\frac{h_{02}^2\hat P_{0}^{(2)}}{t_{2}}\right)  + \frac{\hat t_1^2}{2h_{01}^{2}\hat P_{0}^{(1)}} + \frac{\hat t_2^2}{2h_{02}^{2}\hat P_{0}^{(2)}}  \stackrel{(b)}{\lesssim }
\nonumber \\
&\frac{\hat t_1}{2} \ln \left( \frac{h_{01}^{2}\hat P_{0}^{(1)}}{\hat t_{1}} \right) + \frac{\hat t_2}{2} \ln \left(\frac{h_{02}^{2}\hat P_{0}^{(2)}}{\hat t_{2}}\right) + \frac{\hat t_3}{2} \ln \left( \frac{(h_{01}^2 + h_{02}^2) P_0}{\hat t_{3}+\hat t_{1}} \right) - \left(\frac{\hat t_1}{2} + \frac{\hat t_3 + \hat t_4}{4}\right) \ln \left(\frac{h_{01}^2\hat P_{0}^{(1)}}{t_{1}}\right) \nonumber \\&  - \left(\frac{\hat t_2}{2} + \frac{\hat t_3 + \hat t_4}{4}\right) \ln \left(\frac{h_{02}^2\hat P_{0}^{(2)}}{t_{2}}\right)  + O\left( \frac{1}{\log P_0} \right) \stackrel{(c)}{\lesssim }
\nonumber \\
& \frac{\hat t_3}{2} \ln \left( \frac{P_0}{\sqrt {\hat P_{0}^{(1)} \hat P_{0}^{(2)}}} \right) - \frac{\hat t_4}{4} \ln \left( \hat P_{0}^{(1)} \hat P_{0}^{(2)} \right) + O \left( \frac{1}{\log P_0} \right) \stackrel{(d)}{\lesssim}  O \left( \frac{1}{\log P_0} \right) . \label{eq:T6_p1}
\end{IEEEeqnarray}
Here, $(a)$ follows from (\ref{eq:ineq1}), noting the function $\hat t_1 \ln (P_0 - x - y) + \hat t_2 \ln (y) + \hat t_3 \ln \left(\hat t_3 + \left( h_{01}^2 + h_{02}^2 \right)x \right)$ takes its maximum value at $x \leq \frac{\hat t_3}{ \hat t_3 + \hat t_1}P_0$ and hence substituting $\hat P_0^{(3)}=\frac{\hat t_3}{ \hat t_3 + \hat t_1}P_0$ and finally noting $\hat P_0^{(1)}, \hat P_0^{(2)} \sim \Theta (P_0)$ result in $(b)$ , $(c)$ follows from
$\hat t_3, \hat t_4 \sim O \left( \frac{1}{\log P_0} \right) $ and
$\ln \left( \frac{t_1}{\hat t_1} \right) \sim O \left(
\frac{1}{\log P_0} \right)$, and finally $(d)$ follows from $
\hat P_{0}^{(1)}, \hat P_{0}^{(2)} \sim \Theta ( P_0 )$.

Next, we bound the difference between the terms in the fourth cut of
(\ref{eq:GaussianUpperBound1}) and the fourth term in $C_{DPC}^{low}$ 
\begin{IEEEeqnarray}{rl}
& \hat t_{1}C\left(\frac{h_{23}^2\hat P_{2}^{(1)}}{\hat t_{1}}\right)+
\hat t_{2}C\left(\frac{h_{13}^2\hat P_{1}^{(2)}}{\hat t_{2}}\right)+
\hat t_{4}C\left(\frac{h_{13}^2\hat P_{1}^{(4)}+h_{23}^2\hat P_{2}^{(4)}+2 h_{13}h_{23}\sqrt{\hat P_{1}^{(4)}\hat P_{2}^{(4)}}}{\hat t_{4}}\right)
\nonumber \\ & -t_{1}C\left(\frac{h_{23}^2P_2}{t_{1}}\right) -
t_{2}C\left(\frac{h_{13}^2P_1}{t_{2}}\right) \stackrel{(a)}{\lesssim}  \nonumber \\
& \frac{\hat t_{1}}{2} \ln \left(\frac{h_{23}^2\hat P_{2}^{(1)}}{\hat t_{1}}\right)+
\frac{\hat t_{2}}{2} \ln \left(\frac{h_{13}^2\hat P_{1}^{(2)}}{\hat t_{2}}\right)+
\hat t_{4}C\left(\frac{h_{13}^2\hat P_{1}^{(4)}+h_{23}^2\hat P_{2}^{(4)}+2 h_{13}h_{23}\sqrt{\hat P_{1}^{(4)}\hat P_{2}^{(4)}}}{\hat t_{4}}\right)
 \nonumber \\ & - \left(\frac{\hat t_1}{2} + \frac{\hat t_3 + \hat t_4}{4} \right) \ln \left(\frac{h_{23}^2P_2}{t_{1}}\right) -
\left(\frac{\hat t_2}{2} + \frac{\hat t_3 + \hat t_4}{4} \right) \ln \left(\frac{h_{13}^2P_1}{t_{2}}\right) + O \left( \frac{1}{P_0} \right) \stackrel{(b)}{\lesssim} \nonumber \\
& \frac{\hat t_{1}}{2} \ln \left(\frac{h_{23}^2 P_2}{\hat t_{1}}\right)+
\frac{\hat t_{2}}{2} \ln \left(\frac{h_{13}^2 P_1}{\hat t_{2}}\right)+
\hat t_{4} \ln \left(h_{13} \sqrt {\frac {P_1}{\hat t_2 + \hat t_4}}+h_{23} \sqrt {\frac {P_2}{\hat t_1 + \hat t_4}}\right)
 \nonumber \\ & - \left(\frac{\hat t_1}{2} + \frac{\hat t_3 + \hat t_4}{4} \right) \ln \left(\frac{h_{23}^2P_2}{t_{1}}\right) -
\left(\frac{\hat t_2}{2} + \frac{\hat t_3 + \hat t_4}{4} \right) \ln \left(\frac{h_{13}^2P_1}{t_{2}}\right) + O \left( \frac{1}{P_0} \right) \stackrel{(c)}{\lesssim} \nonumber \\
& \frac{\hat t_4}{2} \ln  \left( \frac{2}{\sqrt {(\hat t_1 + \hat t_4)(\hat t_2 + \hat t_4)} } + \frac{h_{13}}{h_{23}(\hat t_2 + \hat t_4)} \sqrt{\frac{P_1}{P_2}}+ \frac{h_{23}}{(\hat t_1 + \hat t_4)h_{13}} \sqrt{\frac{P_2}{P_1}} \right) - \frac{\hat t_3}{4} \ln \left( P_1 P_2 \right) + O \left( \frac{1}{\log P_0} \right) \stackrel{(d)}{\lesssim} \nonumber \\
& O \left( \frac{1}{\log P_0} \right). \label{eq:T6_p2}
\end{IEEEeqnarray}
Here, $(a)$ follows from (\ref{eq:ineq1}) and noting $\hat P_{1}^{(2)} , \hat P_{2}^{(1)} \sim \Theta (P_0)$, noting the function $\hat t_1 \ln (P_2 - y) + \hat t_2 \ln (P_1 - x) + \hat t_4 \ln \left( \hat t_4 + \left( h_{13} \sqrt x + h_{23} \sqrt y \right)^2 \right)$ takes its maximum value at $x \leq \frac{\hat t_4}{\hat t_4 + \hat t_2} P_1, y \leq \frac{\hat t_4}{\hat t_4 + \hat t_1} P_2$ and hence substituting $\hat P_1^{(4)} = \frac{\hat t_4}{\hat t_4 + \hat t_2} P_1$ and $\hat P_2^{(4)} = \frac{\hat t_4}{\hat t_4 + \hat t_1} P_2$ result in $(b)$, $(c)$ follows from $\hat t_3, \hat t_4 \sim  O \left(
\frac{1}{\log P_0} \right)$ and $\hat t_1, \hat t_2 \sim  0.5 + O
\left( \frac{1}{\log P_0} \right)$, and finally $(d)$ follows from the facts that $\frac{P_1}{P_2} \sim \Theta (1)$, $\hat t_1 + \hat t_4, \hat t_2 + \hat t_4 \sim \Theta (1)$, and $\hat t_4 \sim O (\frac{1}{ \log P_0})$.

Next, we bound the difference between the terms in the second cut of
(\ref{eq:GaussianUpperBound1}) and the second term in $C_{DPC}^{low}$ 
\begin{IEEEeqnarray}{l}
 \hat t_{2}C\left(\frac{h_{02}^2\hat P_{0}^{(2)}}{\hat t_{2}} + \frac{(h_{12}^2 + h_{13}^2)\hat P_{1}^{(2)}}{\hat t_{2}}
+ \frac{2h_{02}h_{12}\sqrt{\hat P_{0}^{(2)}\hat P_{1}^{(2)}}}{\hat t_{2}}
+ \frac{h_{02}^2h_{13}^2\hat P_{0}^{(2)}\hat P_{1}^{(2)}}{\hat t_{2}^{2}}\right)  + \hat t_{3}C\left(\frac{h_{02}^{2}\hat P_{0}^{(3)}}{\hat t_{3}}\right) \nonumber \\
+\hat t_{4}C\left(\frac{h_{13}^2\hat P_{1}^{(4)}}{\hat t_{4}}\right)-
t_{2}C\left(\frac{h_{02}^2P_{0}^{(2)}}{t_{2}}\right) -
t_{2}C\left(\frac{h_{13}^2P_1}{t_{2}}\right) \stackrel{(a)}{\lesssim} \nonumber \\
 \frac{\hat t_2}{2} \ln \left( \frac{h_{02}^2h_{13}^2\hat P_{0}^{(2)} \hat P_1^{(2)}}{\hat t_{2}^{2}} \right) + \hat t_{3} C \left(\frac{h_{02}^{2}\hat P_0^{(3)}}{\hat t_{3}}\right)
+\hat t_{4} C \left(\frac{h_{13}^2 \hat P_1^{(4)}}{\hat t_{4}}\right) - \left( \frac{\hat t_2}{2} + \frac{\hat t_3 + \hat t_4}{4} \right) \ln \left(\frac{h_{02}^2 h_{13}^2  \hat P_{0}^{(2)} P_1}{t_2^2} \right)  + O
\left( \frac{1}{P_0} \right) \stackrel{(b)}{\lesssim} \nonumber \\
 \frac{\hat t_2}{2} \ln \left( \frac{h_{02}^2h_{13}^2\hat P_{0}^{(2)} P_1}{\hat t_{2}^{2}} \right) + \frac{\hat t_{3}}{2} \ln \left(\frac{h_{02}^{2} P_0}{\hat t_{3} + \hat t_{2}}\right)
+\frac{\hat t_{4}}{2} \ln \left(\frac{h_{13}^2 P_1}{\hat t_{4} + \hat t_{2}}\right) \nonumber \\
 - \left( \frac{\hat t_2}{2} + \frac{\hat t_3 + \hat t_4}{4} \right) \ln \left(\frac{h_{02}^2  \hat P_{0}^{(2)}}{t_{2}}\right)  - \left( \frac{\hat t_2}{2} + \frac{\hat t_3 + \hat t_4}{4} \right)
\ln \left(\frac{h_{13}^2P_1}{t_{2}}\right) + O
\left( \frac{1}{P_0} \right) \stackrel{(c)}{\lesssim} \nonumber \\
 \frac{\hat t_3}{4} \ln \left( \frac{P_0^2}{\hat P_{0}^{(2)} P_1} \right) + \frac{\hat t_4}{4} \ln \left( \frac{P_1}{\hat P_{0}^{(2)} } \right) +  O
\left( \frac{1}{ \log P_0} \right) \stackrel{(d)}{\lesssim} O \left( \frac{1}{ \log P_0} \right).\label{eq:T6_p3}
\end{IEEEeqnarray}
Here, $(a)$ follows from (\ref{eq:ineq1}), the fact that $P_{0}^{(2)} =
\hat P_{0}^{(2)} \sim \Theta \left( P_0 \right)$ and upper-bounding
$\hat P_{0}^{(3)} \leq P_0$, $\hat P_{1}^{(4)} \leq P_1$, noting the facts that $\hat P_0^{(2)} + \hat P_0^{(3)} \leq P_0$ and $\hat P_1^{(2)} + \hat P_1^{(4)} = P_1$, the functions $\hat t_2 \ln (P_0 - x)+ \hat t_3 \ln \left(\hat t_3 + h_{02}^2 x \right)$ and $\hat t_2 \ln (P_1 - y)+ \hat t_4 \ln \left(\hat t_4 + h_{13}^2 y \right)$ are maximized at $x \leq \frac{\hat t_3}{\hat t_2 + \hat t_3} P_0$ and $y \leq \frac{\hat t_4}{\hat t_2 + \hat t_4} P_1$, hence, substituting $\hat P_0^{(3)}=\frac{\hat t_3}{\hat t_2 + \hat t_3} P_0$ and $\hat P_1^{(4)}=\frac{\hat t_4}{\hat t_2 + \hat t_4} P_1$ upper-bounds the expression which results in $(b)$, $(c)$ follows from $\hat t_3,\hat t_4 \sim
O \left( \frac{1}{\log P_0} \right), \hat t_1, \hat t_2 \sim 0.5 +
O \left( \frac{1}{\log P_0} \right)$, and finally $(d)$
follows from the fact that $\hat P_{0}^{(2)}, P_1 \sim \Theta
\left( P_0 \right)$ and also $\hat t_3,\hat t_4 \sim O \left(
\frac{1}{\log P_0} \right)$.

Noting that the second and the third cuts are the same, and using
the same argument as in (\ref{eq:T6_p3}), we can bound the
difference between the terms in the third cut of
(\ref{eq:GaussianUpperBound1}) and the third term in $C_{DPC}^{low}$ as
\begin{IEEEeqnarray}{rl}
&\hat t_{1}C\left(\frac{h_{01}^2\hat P_{0}^{(1)}}{\hat t_{1}} + \frac{(h_{12}^2 + h_{23}^2)\hat P_{2}^{(1)}}{\hat t_{1}}
+ \frac{2h_{01}h_{12}\sqrt{\hat P_{0}^{(1)}\hat P_{2}^{(1)}}}{\hat t_{1}} + \frac{h_{01}^2h_{23}^2\hat P_{0}^{(1)}\hat P_{2}^{(1)}}{\hat t_{1}^{2}}\right) \nonumber \\
&+
\hat t_{3}C\left(\frac{h_{01}^{2}\hat P_{0}^{(3)}}{\hat t_{3}}\right)
+\hat t_{4}C\left(\frac{h_{23}^2\hat P_{2}^{(4)}}{\hat t_{4}}\right)  - t_{1}C\left(\frac{h_{01}^2P_{0}^{(1)}}{t_{1}}\right) -
 t_{1}C\left(\frac{h_{23}^2P_2}{t_{1}}\right) \leq O \left( \frac{1}{\log P_0} \right). \label{eq:T6_p4}
\end{IEEEeqnarray}
Observing (\ref{eq:T6_p1}), (\ref{eq:T6_p2}), (\ref{eq:T6_p3}) and
(\ref{eq:T6_p4}), completes the proof of the Theorem.
\end{proof}
\begin{Theorem} \label{thm:Th5}
In low SNR scenarios, assuming $P_1=\gamma_1 P_0,~P_2=\gamma_2 P_0$ with $\gamma_1, \gamma_2$ constants independent of the SNR, when the power available for the source and each relay tends to zero and $\left(h_{13}\sqrt{\gamma_1}+h_{23}\sqrt{\gamma_2}\right)^2\leq\min\left(h_{01}^2,h_{02}^2\right)$, the ratio of the achievable rate of the simultaneous relaying protocol based on DDF to cut-set upper bound goes to 1. In this scenario $t_3 = t_4 = \frac{1}{2}$, and no private messages should be transmitted.
\end{Theorem}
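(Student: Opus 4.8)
The plan is to establish the low-SNR asymptotic optimality of the DDF scheme by comparing its achievable rate (Corollary~\ref{cor:cor4}) with the cut-set upper bound (\ref{eq:GaussianUpperBound1}) specialized to simultaneous relaying, and showing their ratio tends to $1$ as $P_0\to 0$. The central analytic tool is the low-SNR expansion $C(x)=\tfrac{1}{2}\log_2(1+x)\approx \tfrac{x}{2\ln 2}$, valid because all SNR arguments $h^2 P/t$ scale linearly in $P_0$ (recall $P_1=\gamma_1 P_0$, $P_2=\gamma_2 P_0$) and hence vanish. Under this linearization every mutual-information term becomes, to first order, proportional to the corresponding received signal power, so both the achievable rate and the upper bound reduce to $\min$'s of \emph{linear} functions of $P_0$ whose leading coefficients I can compare directly.

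First I would argue that in the DDF achievable region of Corollary~\ref{cor:cor4}, setting the private power to zero is optimal in this regime. With $R_p=0$ the scheme collapses to pure common-message transmission: the first (BC) constraint becomes $R_c\le t_3\,C\!\left(h_{02}^2 P_0/t_3\right)$ and the MAC constraint becomes $R_c\le t_4\,C\!\left((h_{13}\sqrt{P_1}+h_{23}\sqrt{P_2})^2/t_4\right)$. Linearizing, the per-unit-bandwidth slopes are $h_{02}^2 P_0$ from the broadcast cut and $(h_{13}\sqrt{\gamma_1}+h_{23}\sqrt{\gamma_2})^2 P_0$ from the MAC cut. The stated hypothesis $\left(h_{13}\sqrt{\gamma_1}+h_{23}\sqrt{\gamma_2}\right)^2\le\min\left(h_{01}^2,h_{02}^2\right)$ guarantees that the MAC (coherent-combining) cut is the binding one, so the optimizing time split equalizes the two linear terms at $t_3=t_4=\tfrac12$, yielding $C_{DDF}^{low}\sim \tfrac{1}{2}\,C\!\big(2(h_{13}\sqrt{\gamma_1}+h_{23}\sqrt{\gamma_2})^2 P_0\big)$ to leading order. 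I must also check that allocating power to the private stream cannot help: the private rate is limited by the point-to-point cut $t_4 C(h_{13}^2 P_{1,p}^{(4)}/t_4)$, whose slope $h_{13}^2\gamma_1$ is strictly smaller than the coherent slope $(h_{13}\sqrt{\gamma_1}+h_{23}\sqrt{\gamma_2})^2$, so diverting power from common to private strictly decreases the leading coefficient.

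Next I would linearize the four cuts of the upper bound (\ref{eq:GaussianUpperBound1}) after forcing $\hat t_1=\hat t_2=0$ (the successive time slots carry no benefit when only simultaneous relaying is active, and one checks that the optimizer indeed concentrates on $\hat t_3,\hat t_4$). The fourth cut gives the leading term $\hat t_4 (h_{13}\sqrt{\gamma_1}+h_{23}\sqrt{\gamma_2})^2 P_0/(2\ln2)$, the first (source-broadcast) cut gives $\hat t_3(h_{01}^2+h_{02}^2)P_0/(2\ln2)$, and the second and third cuts, after linearization and using $h_{01}\ge h_{02}$, contribute broadcast-plus-MAC slopes that are no smaller than the fourth cut's under the stated channel condition. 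The upper bound is therefore the $\min$ of linear functions whose binding pair, again by the hypothesis, forces $\hat t_3=\hat t_4=\tfrac12$ and produces a leading coefficient identical to that of $C_{DDF}^{low}$. Hence $C_{DDF}^{low}/C^{up}\to 1$.

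\textbf{The main obstacle} I anticipate is the bookkeeping needed to verify that the \emph{same} min-term is active on both sides, i.e.\ that the coherent-combining MAC cut dominates in both the achievable region and every one of the four upper-bound cuts simultaneously, and that the cross terms carrying factors like $h_{02}^2 h_{13}^2 \hat P_0^{(2)}\hat P_1^{(2)}/\hat t_2^2$ in the second and third cuts are genuinely second order in $P_0$ (they scale as $P_0^2$) and so vanish in the ratio. Establishing the channel-condition hypothesis $\left(h_{13}\sqrt{\gamma_1}+h_{23}\sqrt{\gamma_2}\right)^2\le\min(h_{01}^2,h_{02}^2)$ as exactly the threshold at which the MAC cut becomes binding on both sides — thereby making the leading coefficients coincide — is the crux, and I would handle it by writing each side's leading-order rate as an explicit concave function of $(t_3,t_4)$ (respectively $(\hat t_3,\hat t_4)$) and confirming that the constrained maximizer sits at the equal-split point where the broadcast and MAC linear terms meet.
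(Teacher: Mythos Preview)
Your linearization idea and the identification of the coherent MAC cut as the binding constraint are both right, and would lead to a valid proof. However, the paper takes a considerably shorter route that avoids the bookkeeping you flag as your main obstacle.

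For the upper bound, the paper does \emph{not} analyze all four cuts of~(\ref{eq:GaussianUpperBound1}) or attempt to locate the optimizing $(\hat t_1,\ldots,\hat t_4)$. Since $C^{up}=\sup\min_{\text{cuts}}\{\cdot\}\le \sup\{\text{fourth cut}\}$, it simply drops every cut except the fourth (relays-to-destination), applies $\ln(1+x)\le x$ term-by-term, and uses the power constraints $\hat P_1^{(2)}+\hat P_1^{(4)}=P_1$, $\hat P_2^{(1)}+\hat P_2^{(4)}=P_2$ together with $\sqrt{\hat P_1^{(4)}\hat P_2^{(4)}}\le\sqrt{P_1P_2}$ to obtain the uniform bound $C^{up}\le (h_{13}\sqrt{\gamma_1}+h_{23}\sqrt{\gamma_2})^2 P_0/(2\ln 2)$, valid for \emph{every} time/power allocation. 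This sidesteps entirely your concern about verifying which cut is active and about the $O(P_0^2)$ cross terms in cuts two and three.

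For the lower bound, the paper does not argue optimality of $t_3=t_4=\tfrac12$ or of common-only transmission a priori; it simply \emph{evaluates} $C_{DDF}^{low}$ at that particular operating point, obtaining $\min\bigl(\tfrac12 C(2h_{02}^2 P_0),\,\tfrac12 C(2(h_{13}\sqrt{\gamma_1}+h_{23}\sqrt{\gamma_2})^2 P_0)\bigr)$, and lower-bounds it via $\ln(1+x)\ge x-x^2/2$. The channel hypothesis then makes the second term the minimum, and the ratio to the upper bound tends to $1$. The optimality of $t_3=t_4=\tfrac12$ and of zero private power is thus a \emph{consequence} of matching the upper bound, not something that needs independent verification.

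Your approach buys a more explicit picture of why the equal split and common-only choice emerge, but at the cost of the four-cut case analysis; the paper's approach is terser and circumvents that case analysis altogether.
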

\begin{proof}
By the same argument as in
Theorem~\ref{thm:Th4} and considering only the fourth cut, we obtain another upper bound on the capacity.
By the following inequality
\begin{eqnarray}
\ln(1+x)\leq x.
\end{eqnarray}
we can bound the upper bound on the capacity as
\begin{eqnarray}\label{eq:lowSNR0}
C^{up}&\leq&\frac{\left(h_{13}\sqrt{\gamma_1}+h_{23}\sqrt{\gamma_2}\right)^2P_0}{2\ln2}.
\end{eqnarray}

Now, assuming $t_1=t_2=0, t_3=t_4=\frac{1}{2}$, and transmitting just the common message, we can achieve the following rate $C_{DDF}^{low}$:
\begin{eqnarray}
C_{DDF}^{low}=\min\left(\frac{1}{2}C\left(2h_{02}^2P_0\right),
\frac{1}{2}C\left(2\left(h_{13}\sqrt{\gamma_1}+h_{23}\sqrt{\gamma_2}\right)^2P_0\right)\right).
\end{eqnarray}
According to the Taylor expansion of $\ln (1+x)$ at $x=0$, we have
\begin{eqnarray}
x - \frac{x^2}{2}\leq\ln\left(1+x\right),
\end{eqnarray}
Hence,
\begin{eqnarray}\label{eq:lowSNR2}
\frac{1}{\ln2}\min\left(\frac{h_{02}^2P_0}{2}-\frac{h_{02}^4P_0^2}{2},
\frac{\left(h_{13}\sqrt{\gamma_1}+h_{23}\sqrt{\gamma_2}\right)^2P_0}{2}-
\frac{\left(h_{13}\sqrt{\gamma_1}+h_{23}\sqrt{\gamma_2}\right)^4P_0^2}{2}\right)\leq C_{DDF}^{low}.
\end{eqnarray}
By~(\ref{eq:lowSNR0}),~(\ref{eq:lowSNR2}), and $\left(h_{13}\sqrt{\gamma_1}+h_{23}\sqrt{\gamma_2}\right)^2\leq\min\left(h_{01}^2,h_{02}^2\right)$, we have
\begin{eqnarray}
\lim_{P_0\rightarrow0}\frac{C_{DDF}^{low}}{C^{up}}\rightarrow1.
\end{eqnarray}
\end{proof}

\section{Simulation Result}
In this section, the achievable rate of different proposed schemes, i.e., SSRD, DPC, BME, and BME-DPC are compared with each other and with the upper bound in different channel conditions.

Figure \ref{fig: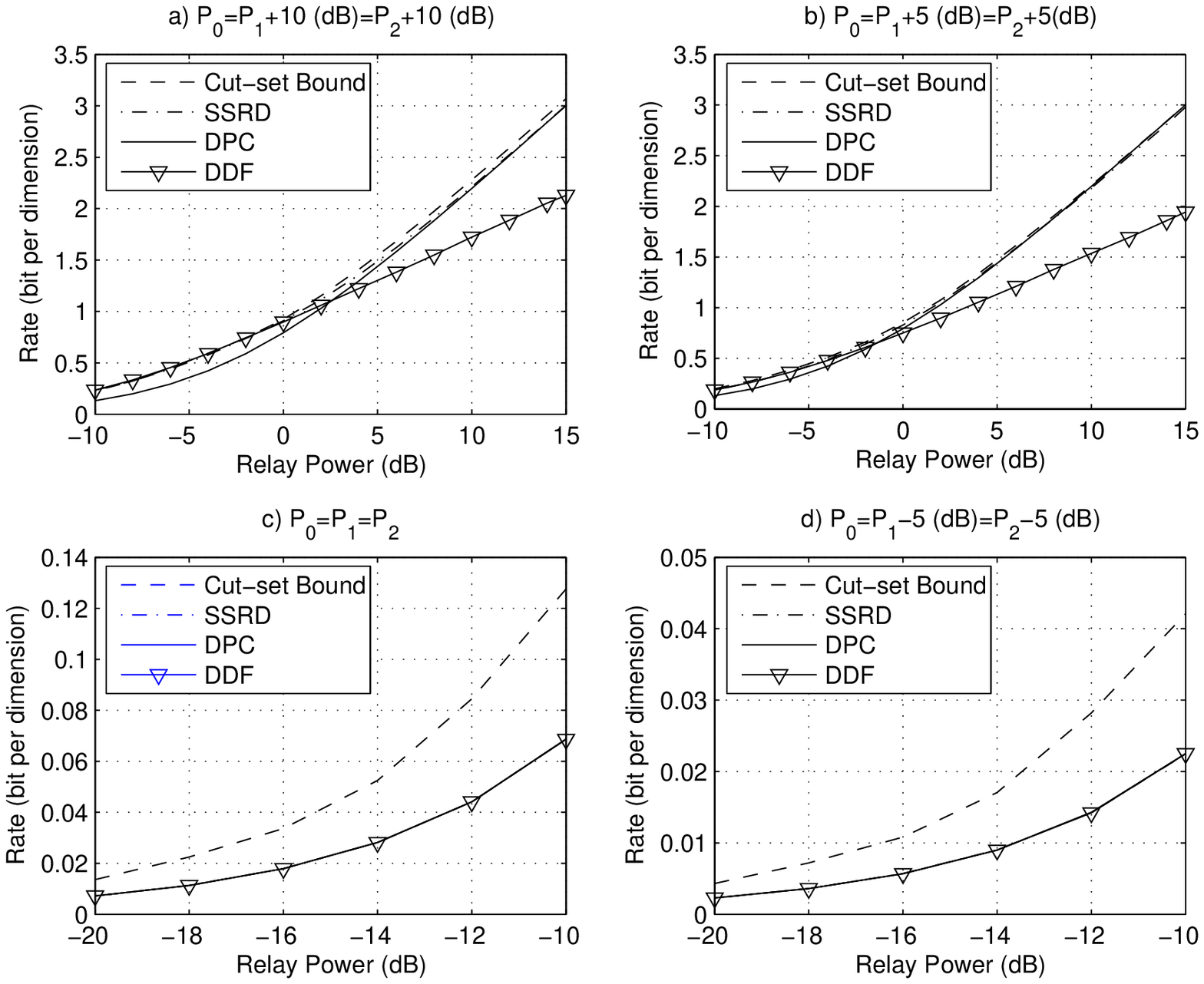} compares the achievable rate of the SSRD scheme with that of the DPC scheme for successive relaying and the DDF scheme for simultaneous relaying protocols. Here the symmetric scenario in which $P_1=P_2$ and $h_{01} = h_{02}= h_{12} = h_{13} = h_{23} = 1$ is considered. The upper bound is also included in the figure.

In order to satisfy the condition in Theorem  \ref{thm:Th5}, i.e., $\left(h_{13}\sqrt{\gamma_1}+h_{23}\sqrt{\gamma_2}\right)^2\leq\min\left(h_{01}^2,h_{02}^2\right)$, in Figs.~\ref{fig:SSRDmm.eps}a and b, we also assume $P_0 = P_1 + 10(dB) = P_2 + 10(dB)$ and $P_0 = P_1 + 5(dB) = P_2 + 5(dB)$, respectively.  As the Figs.~\ref{fig:SSRDmm.eps}a and b show, SSRD achievable rate almost coincides with the upper bound over all ranges of SNR. As proved in the previous section, in high SNR scenario, SSRD scheme coincides with DPC and the successive relaying protocol becomes optimum, while in low SNR scenario it coincides with DDF and the simultaneous relaying protocol is optimum.

On the other hand, in Figs.~\ref{fig:SSRDmm.eps}c and d we assume that $P_0 = P_1 = P_2$ and $P_0 = P_1 - 5(dB) = P_2 - 5(dB)$. In this situation, the condition in Theorem~\ref{thm:Th5} is no longer satisfied. Therefore, as these figures show, the ratio of the achievable rate of the SSRD scheme to the cut-set bound, i.e.,  $\frac{C_{SSRD}^{low}}{C^{up}}$ does not tend to one. Furthermore, the achievable rates of the SSRD, DPC, and DDF schemes coincide with each other.
\begin{figure}[tp]
\centering
\includegraphics[scale = .75]{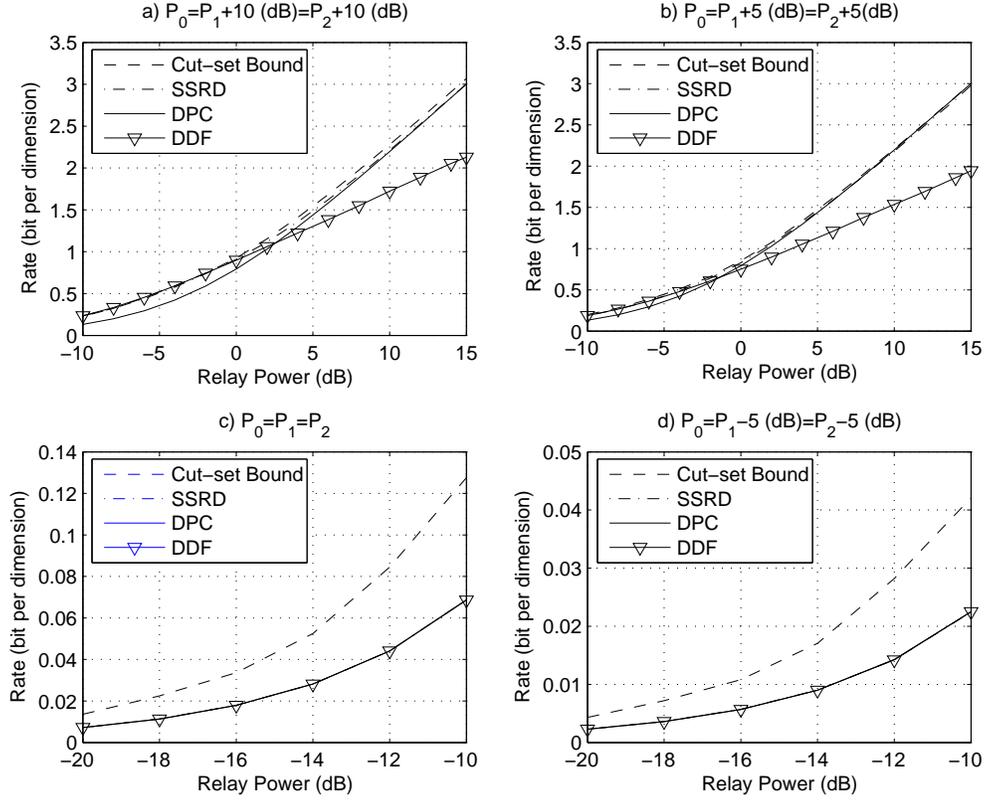}
\caption{\small{Rate versus relay power.}}\label{fig:SSRDmm.eps}
\end{figure}

Figure~\ref{fig: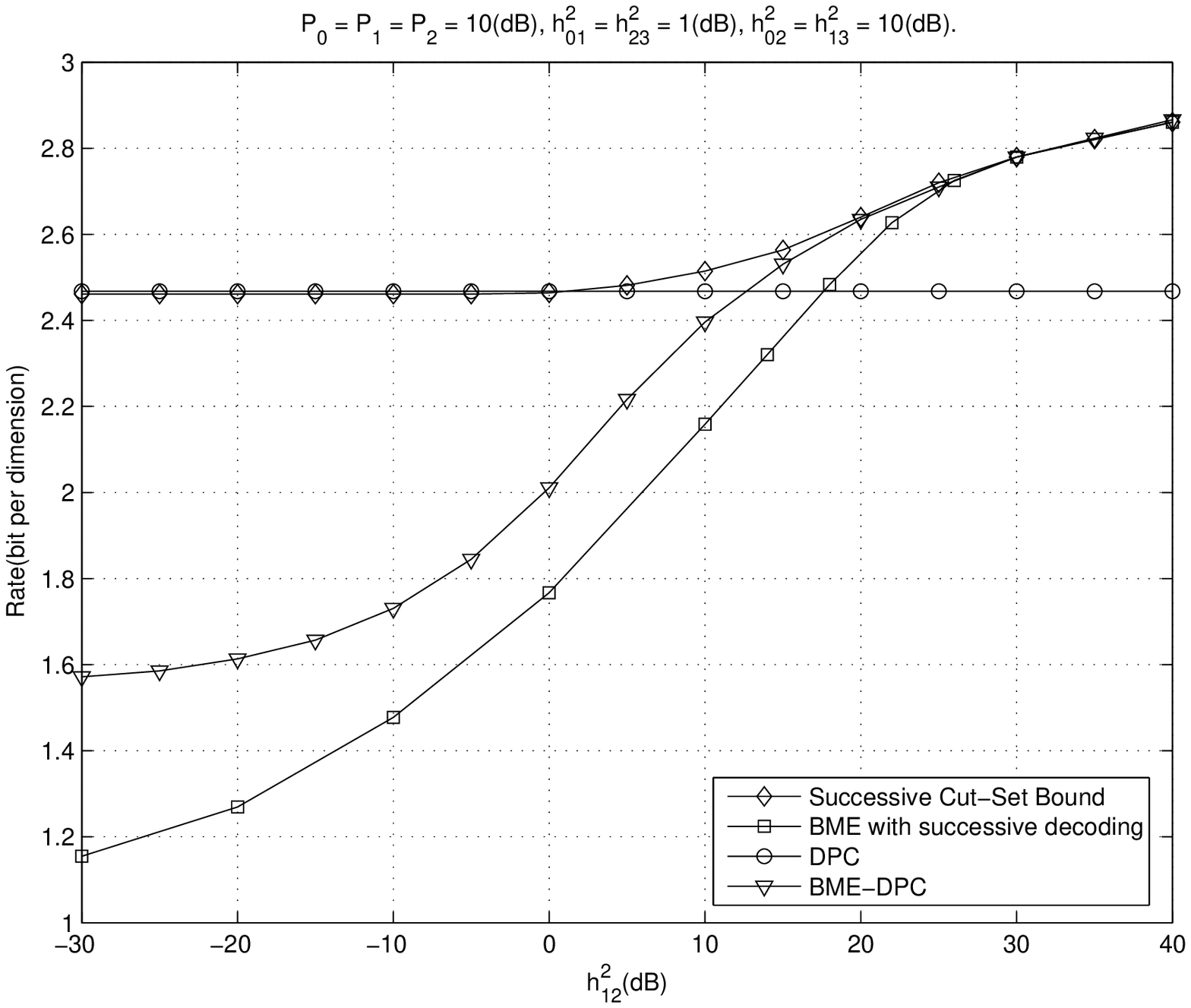} compares the achievable
rate of different successive schemes with each other and the
successive cut-set bound. It shows as the inter relay channel becomes
stronger, BME scheme can achieve the successive
cut-set bound, while the achievable rate of the DPC is
independent of that channel. Furthermore, this figure indicates BME-DPC gives a better achievable rate
with respect to BME with successive decoding which
was proposed in~\cite{Yong}.
\begin{figure}[tp]
\centering
\includegraphics[scale = .75]{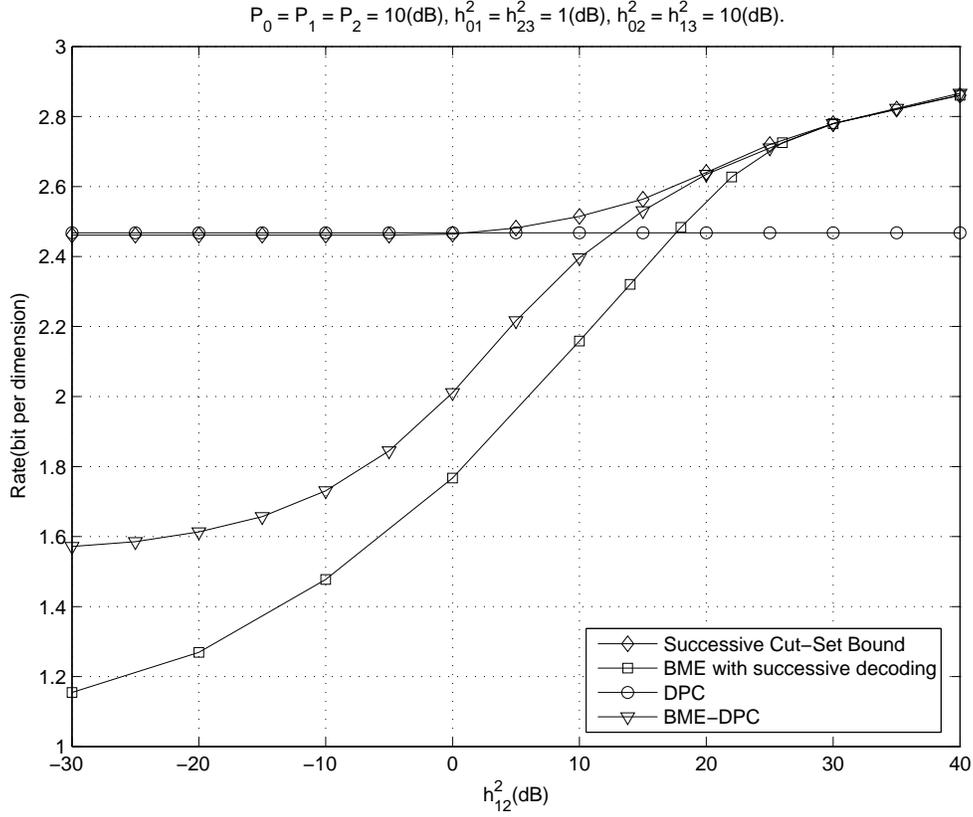}
\caption{\small{Rate versus inter relay gain.}}\label{fig:UpDPCBME.eps}
\end{figure}
\section{Conclusion}
In this paper, we investigated the problem of cooperative strategies for a half-duplex parallel relay channel with two relays. We derived the optimum relay ordering and hence the asymptotic capacity of the half-duplex Gaussian parallel relay channel in low and high SNR scenarios.

\emph{Simultaneous} and \emph{Successive} relaying protocols, associated with two possible relay orderings were proposed. For simultaneous
relaying, each relay employs \emph{DDF}. On the other hand, for successive relaying, we proposed a \emph{Non-Cooperative Coding} scheme based on DPC
and a \emph{Cooperative Coding} scheme based on BME. Moreover, a coding scheme based on the combination of DPC and BME, in which one of the relays uses DPC while the other one employs BME was proposed. We showed that this composite scheme achieves a better rate with respect to cooperative coding based on BME with backward or successive decoding in the Gaussian case.

We also proposed the SSRD scheme as a combination of the simultaneous and successive protocols based on DPC. In high SNR scenarios, we proved that our \emph{Non-Cooperative Coding} scheme based on \emph{DPC} asymptotically achieves the capacity. Hence, in the high SNR scenario, the optimum relay ordering is \emph{Successive}. On the other hand, in low SNR where $\left(h_{13}\gamma_1+h_{23}\gamma_2\right)^2\leq\min\left(h_{01}^2,h_{02}^2\right)$, \emph{DDF} achieves the capacity. Hence, in low SNR scenario and under the condition specified above for the channel coefficients, the optimum relay ordering is \emph{Simultaneous}.

\section*{Appendix A}\label{ap:a}
\subsection*{Proof of Theorem 1}
\underline{\emph{Codebook Construction:}}\\

Let us divide time slot number \emph{b}, $b = 1, 2, \cdots, B+1$ into odd and even numbers. At odd and even time slots, source generates $2^{nr_{AUX}^{(1)}}$ and $2^{nr_{AUX}^{(2)}}$ sequences $\textbf{u}_{0}^{(1)}\left(q_1\right)$ and $\textbf{u}_{0}^{(2)}\left(q_2\right)$ according to $\prod_{i=1}^{t_1n}p(u_{0,i}^{(1)})$ and $\prod_{i=1}^{t_{2}n}p(u_{0,i}^{(2)})$, respectively. Then, source throws $\textbf{u}_{0}^{(1)}$ and $\textbf{u}_{0}^{(2)}$ sequences uniformly into $2^{nR^{(1)}}$ and $2^{nR^{(2)}}$ bins, respectively. Let us denote $\mathcal B_1(w^{(b)})$ and $\mathcal B_2(w^{(b)})$ as the set of sequences at the odd or even time slot that belong to the $w^{(b)}$'th bin, respectively (for odd time slots, $w^{(b)} \leq 2^{nR^{(1)}}$, and for the even time slots, $w^{(b)} \leq 2^{nR^{(2)}}$).

Relay 1 and relay 2 generate $2^{nR^{(1)}}$ and $2^{nR^{(2)}}$ i.i.d $\textbf{x}_{1}^{(2)}$ and $\textbf{x}_{2}^{(1)}$ sequences according to probabilities $\prod_{i=1}^{t_{2}n}p\left(x_{1,i}^{(2)}\right)$ and $\prod_{i=1}^{t_{1}n}p\left(x_{2,i}^{(1)}\right)$.
Furthermore, for all $q_1$ and $q_2$, the source generates double indexed codebooks $\textbf{x}_{0}^{(1)}\left(w^{(b)}|w^{(b-1)},q_1\right)$ and $\textbf{x}_{0}^{(2)}\left(w^{(b)}|w^{(b-1)},q_2\right)$ according to $\prod_{i=1}^{t_{1}n}p(x_{0,i}^{(1)}\mid x_{2,i}^{(1)},u_{0,i}^{(1)})$ and $\prod_{i=1}^{t_{2}n}p(x_{0,i}^{(2)}\mid x_{1,i}^{(2)},u_{0,i}^{(2)})$, respectively.\\

\underline{\emph{Encoding:}}

\emph{Encoding at the source:}

At the odd time slot $b$, the source intends to send the message $w^{(b)}$ to the first relay. In order to do that,  since source knows what it has transmitted during the last time slot to the second relay, it chooses a codeword $\textbf{u}_{0}^{(1)}\left(q_1\right)$ such that $\textbf{u}_{0}^{(1)}\left(q_1\right)\in \mathcal B_1(w^{(b)})$ and $\left(\textbf{u}_{0}^{(1)}\left(q_1\right),\textbf{x}_{2}^{(1)}\left(w^{(b-1)}\right)\right)\in A_{\epsilon}^{(n)}$. Such a task can be done almost surely, if $r_{AUX}^{(1)}-R^{(1)}\geq t_{1}I\left(U_{0}^{(1)};X_{2}^{(1)}\right)$ (See \cite{CoverThomas}). Following that it sends $\textbf{x}_{0}^{(1)}(\textbf{u}_{0}^{(1)},\textbf{x}_{2}^{(1)})$.

At the even time slot $b$, the source sends the message $w^{(b)}$ to the second relay in the similar manner. Such a task can be done almost surely if $r_{AUX}^{(2)}-R^{(2)}\geq t_{2}I\left(U_{0}^{(2)};X_{1}^{(2)}\right)$.

\emph{Encoding at relay 1:}

At the even time slot $b$, relay 1 encodes $w^{(b-1)}\in\{1,\cdots,2^{nR^{(1)}}\}$ to $\textbf{x}_{1}^{(2)}\left(w^{(b-1)}\right)$.

\emph{Encoding at relay 2:}

At the odd time slot $b$, relay 2 encodes $w^{(b-1)}\in\{1,\cdots,2^{nR^{(2)}}\}$ to $\textbf{x}_{2}^{(1)}\left(w^{(b-1)}\right)$.\\

\underline{\emph{Decoding:}}

\emph{Decoding at relay 1:}

At the odd time slot $b$, relay 1 declares $\hat{w}^{(b)} = {w}^{(b)}$ iff all the sequences $\textbf{u}_{0}^{(1)}\left(q_1\right)$ which are jointly typical with $\textbf{y}_{1}^{(1)}$ belong to a unique bin $\mathcal B_1 (\hat{w}^{(b)})$. Therefore, in order to make the probability of error zero, from~\cite{CoverThomas}, we have
\begin{eqnarray}
r_{AUX}^{(1)}\leq t_{1}I\left(U_{0}^{(1)};Y_{1}^{(1)}\right).\label{eq:Dec1}
\end{eqnarray}
According to~(\ref{eq:Dec1}) and the encoding condition at source, we have
\begin{eqnarray}
R^{(1)}\leq t_{1}\left(I(U_{0}^{(1)};Y_{1}^{(1)}) - I(U_{0}^{(1)};X_{2}^{(1)})\right).
\end{eqnarray}

\emph{Decoding at relay 2:}

At the even time slot $b$, relay 2 declares $\hat{w}^{(b)} = {w}^{(b)}$ iff all the sequences $\textbf{u}_{0}^{(2)}\left(q_2\right)$ which are jointly typical with $\textbf{y}_{2}^{(2)}$ belong to a unique bin $\mathcal B_2(\hat{w}^{(b)})$. Therefore, in order to make the probability of error zero, from~\cite{CoverThomas}, we have
\begin{eqnarray}
r_{AUX}^{(2)}\leq t_{2}I\left(U_{0}^{(2)};Y_{2}^{(2)}\right).\label{eq:Dec2}
\end{eqnarray}
According to~(\ref{eq:Dec2}) and the encoding condition at source, we have
\begin{eqnarray}
R^{(2)}\leq t_{2}\left(I(U_{0}^{(2)};Y_{2}^{(2)}) - I(U_{0}^{(2)};X_{1}^{(2)})\right).
\end{eqnarray}

\emph{Decoding at the final destination:}

At the odd time slot $b$, destination declares $\hat{w}^{(b-1)} = w^{(b-1)}$ iff $\left(\textbf{x}_{2}^{(1)}\left(\hat{w}^{(b-1)}\right),\textbf{y}_{3}^{(1)}\right)\in A_{\epsilon}^{(n)}$. Hence, in order to make the probability of error zero, from~\cite{CoverThomas}, we have
\begin{eqnarray}
R^{(1)}\leq t_{1}I(X_{2}^{(1)};Y_{3}^{(1)}).
\end{eqnarray}

Similarly, at the even time slot $b$, we have
\begin{eqnarray}
R^{(2)}\leq t_{2}I(X_{1}^{(2)};Y_{3}^{(2)}).\label{eq:Dec3}
\end{eqnarray}
From the encoding at the source and (\ref{eq:Dec1})-(\ref{eq:Dec3}), we obtain (\ref{eq:DPC1})-(\ref{eq:DPC3}).
\section*{Appendix B}\label{ap:b}

\subsection*{Proof of Theorem 2}

\underline{\emph{Codebook Construction:}}\\

Let us divide the time slots \emph{b}, $b = 1, 2, \cdots, B+2$ into odd and even time slots.
The source generates two codebooks $\textbf{x}_{0}^{(1)}\left(w^{(b)}|w^{(b-1)},s_{1}^{(b-2)}\right)$ and $\textbf{x}_{0}^{(2)}\left(w^{(b)}|w^{(b-1)},s_{2}^{(b-2)}\right)$ of size $2^{nR^{(1)}}$ and $2^{nR^{(2)}}$ corresponding to even and odd time slots, respectively. The first codebook is generated according to the probability $p(\textbf{x}_{0}^{(1)},\textbf{x}_{2}^{(1)}, \textbf{u}_{2}^{(1)}) = \prod_{i = 1}^{t_1n}p(u_{2,i}^{(1)})p
(x_{2,i}^{(1)}|u_{2,i}^{(1)})p(x_{0,i}^{(1)}|x_{2,i}^{(1)},u_{2,i}^{(1)})$, and the second codebook is generated according to the
probability $p(\textbf{x}_{0}^{(2)},\textbf{x}_{1}^{(2)},
\textbf{u}_{1}^{(2)}) = \prod_{i = 1}^{t_{2}n}p(u_{1,i}^{(2)})p
(x_{1,i}^{(2)}|u_{1,i}^{(2)})p(x_{0,i}^{(2)}|x_{1,i}^{(2)},u_{1,i}^{(2)})$.

On the other hand, relay 2 generates $2^{nr_{Bin}^{(1)}}$ i.i.d codewords $\textbf{u}_{2}^{(1)}$ and $2^{nR^{(2)}}$ i.i.d codewords $\textbf{x}_{2}^{(1)}$ according to the probabilities $p(\textbf{u}_{2}^{(1)})=\prod_{i=1}^{t_{1}n}p(u_{2,i}^{(1)})$ and $p(\textbf{x}_{2}^{(1)}\mid\textbf{u}_{2}^{(1)})=\prod_{i=1}^{t_{1}n}
p(x_{2,i}^{(1)}\mid u_{2,i}^{(1)})$ at each odd time slot and relay 1 generates $2^{nr_{Bin}^{(2)}}$ i.i.d codewords $\textbf{u}_{1}^{(2)}$ and $2^{nR^{(1)}}$ i.i.d codewords $\textbf{x}_{1}^{(2)}$ according to the probabilities $p(\textbf{u}_{1}^{(2)})=\prod_{i=1}^{t_{2}n}p(u_{1,i}^{(2)})$ and $p(\textbf{x}_{1}^{(2)}\mid\textbf{u}_{1}^{(2)})=\prod_{i=1}^{t_{2}n}
p(x_{1,i}^{(2)}\mid u_{1,i}^{(2)})$ at each even time slot, respectively.\\

\underline{\emph{Encoding:}}

\emph{Encoding at the source:}

At the odd time slot $b$, source encodes $w^{(b)}\in\{1,\cdots, 2^{nR^{(1)}}\}$ to $\textbf{x}_{0}^{(1)}\left(w^{(b)}|w^{(b-1)},s_{1}^{(b-2)}\right)$ and at the even time slot $b$, it encodes $w^{(b)}\in\{1,\cdots, 2^{nR^{(2)}}\}$ to $\textbf{x}_{0}^{(2)}\left(w^{(b)}|w^{(b-1)},s_{2}^{(b-2)}\right)$ and sends them in odd and even time slots, respectively.

\emph{Encoding at relay 1:}

At the even time slot $b$, relay 1 encodes the bin index $s_{2}^{(b-2)}$ of the message $w^{(b-2)}$ it has received from relay 2 in the previous time slot to $\textbf{u}_{1}^{(2)}\left(s_{2}^{(b-2)}\right)$. Following that, it encodes $w^{(b-1)}$ which was received from the source in time slot $b-1$ to $\textbf{x}_{1}^{(2)}\left(w^{(b-1)}|s_{2}^{(b-2)}\right)$ and sends it.

\emph{Encoding at relay 2:}

At the odd time slot $b$, relay 2 encodes the bin index $s_{1}^{(b-2)}$ of the message $w^{(b-2)}$ it has received from relay 1 in the previous time slot to $\textbf{u}_{2}^{(1)}\left(s_{1}^{(b-2)}\right)$. Following that, it encodes $w^{(b-1)}$ which was received from the source in time slot $b-1$ to $\textbf{x}_{2}^{(1)}\left(w^{(b-1)}|s_{1}^{(b-2)}\right)$ and sends it.

\underline{\emph{Decoding:}}

\emph{Decoding at relay 1:}

Knowing $w^{(b-2)}$ and consequently $s_{1}^{(b-2)}$, at time slot $b$, relay 1 declares $(\hat{w}^{(b-1)},\hat{w}^{(b)}) = (w^{(b-1)},w^{(b)})$ iff there exits a unique $(\hat{w}^{(b-1)},\hat{w}^{(b)})$ such that
\begin{eqnarray}
\left(\textbf{x}_{0}^{(1)}\left(\hat{w}^{(b)}|\hat{w}^{(b-1)},s_{1}^{(b-2)}\right),\textbf{x}_{2}^{(1)}\left(\hat{w}^{(b-1)}|s_{1}^{(b-2)}\right),\textbf{u}_{2}^{(1)}(s_{1}^{(b-2)}),
\textbf{y}_{1}^{(1)}\right)\in A_{\epsilon}^{(n)}.\nonumber
\end{eqnarray}
Hence, in order to make probability of error zero, from the Extended MAC capacity region (See \cite{CoverThomas},~\cite{Slepian1},~\cite{Han}, and~\cite{Prelov}), we have
\begin{IEEEeqnarray}{rl}\label{eq:Ach1}
&R^{(1)}\leq t_{1}I\left(X_{0}^{(1)};Y_{1}^{(1)}\mid X_{2}^{(1)},U_{2}^{(1)}\right),\label{eq:relay1}\\
&R^{(1)} + R^{(2)}\leq
t_{1}I(X_{0}^{(1)},X_{2}^{(1)};Y_{1}^{(1)}\mid
U_{2}^{(1)}).\label{eq:relay11}
\end{IEEEeqnarray}

\emph{Decoding at relay 2:}

Knowing $w^{(b-2)}$ and consequently $s_{2}^{(b-2)}$, at time slot $b$, relay 2 declares $(\hat{w}^{(b-1)},\hat{w}^{(b)}) = (w^{(b-1)},w^{(b)})$ iff there exits a unique $(\hat{w}^{(b-1)},\hat{w}^{(b)})$ such that
\begin{eqnarray}
\left(\textbf{x}_{0}^{(2)}\left(\hat{w}^{(b)}|\hat{w}^{(b-1)},s_{2}^{(b-2)}\right),\textbf{x}_{1}^{(2)}\left(\hat{w}^{(b-1)}|s_{2}^{(b-2)}\right),\textbf{u}_{1}^{(2)}(s_{2}^{(b-2)}),
\textbf{y}_{2}^{(2)}\right)\in A_{\epsilon}^{(n)}. \nonumber
\end{eqnarray}
Hence, in order to make the probability of error zero, from Extended MAC capacity region (See \cite{CoverThomas},~\cite{Slepian1},~\cite{Han}, and~\cite{Prelov}), we have
\begin{IEEEeqnarray}{rl}\label{eq:Ach2}
&R^{(2)}\leq t_{2}I(X_{0}^{(2)};Y_{2}^{(2)}\mid
X_{1}^{(2)},U_{1}^{(2)}),\label{eq:relay2}\\
&R^{(1)} + R^{(2)}\leq
t_{2}I(X_{0}^{(2)},X_{1}^{(2)}
;Y_{2}^{(2)}\mid
U_{1}^{(2)}).\label{eq:relay22}
\end{IEEEeqnarray}

\emph{Decoding at the final destination:}

Decoding at the final destination can be done either \emph{Successively} or \emph{Backwardly} as follows.

\emph{1) Successive Decoding:}

At the end of odd time slot $b$, destination first declares the bin index $\hat{s}_{1}^{(b-2)} = s_{1}^{(b-2)}$ of the message $w^{(b-2)}$ iff
there exists a unique $\hat{s}_{1}^{(b-2)}$ such that $\left(\textbf{u}_{2}^{(1)}(\hat{s}_{1}^{(b-2)}),\textbf{y}_{3}^{(1)}\right)\in A_{\epsilon}^{(n)}$. Hence, in order to make the probability of error zero, from~\cite{CoverThomas} we have
\begin{IEEEeqnarray}{rl}\label{eq:Ach21}
&r_{Bin}^{(1)}\leq t_{1}I(U_{2}^{(1)};Y_{3}^{(1)}).
\end{IEEEeqnarray}
Having decoded the bin index $s_{1}^{(b-2)}$ of the message $w^{(b-2)}$, destination can resolve its uncertainty about the message $w^{(b-2)}$ and declares $\hat{w}^{(b-2)} = w^{(b-2)}$ iff there exists a unique $\hat{w}^{(b-2)}$ such that $\left(\textbf{x}_{1}^{(2)}(\hat{w}^{(b-2)}|s_{2}^{(b-3)}),
\textbf{u}_{1}^{(2)}(s_{2}^{(b-3)}),
\textbf{y}_{3}^{(2)}\right)\in A_{\epsilon}^{(n)}$. Hence, in order to make the probability of error zero, from~\cite{CoverThomas} we have
\begin{IEEEeqnarray}{rl}\label{eq:Ach22}
&R^{(1)}-r_{Bin}^{(1)}\leq t_{2}I(X_{1}^{(2)}
;Y_{3}^{(2)}\mid U_{1}^{(2)}).
\end{IEEEeqnarray}
Using the same argument for the even time slot \emph{b}, we have
\begin{IEEEeqnarray}{rl}
&r_{Bin}^{(2)}\leq t_{2}I(U_{1}^{(2)};
Y_{3}^{(2)}),\label{eq:Ach23}\\
&R^{(2)}-r_{Bin}^{(2)}\leq t_{1}I(X_{2}^{(1)}
;Y_{3}^{(1)}\mid U_{2}^{(1)}).\label{eq:Ach24}
\end{IEEEeqnarray}
From (\ref{eq:Ach21}), (\ref{eq:Ach22}), (\ref{eq:Ach23}), and (\ref{eq:Ach24}), $R^{(1)}$ and $R^{(2)}$ are bounded as follows
\begin{IEEEeqnarray}{rl}
&R^{(1)}\leq t_{2}I\left(X_{1}^{(2)}
;Y_{3}^{(2)}\mid U_{1}^{(2)}\right)+t_{1}I\left(U_{2}^{(1)};
Y_{3}^{(1)}\right),\label{eq:Ach31}\\
&R^{(2)}\leq t_{1}I(X_{2}^{(1)}
;Y_{3}^{(1)}\mid U_{2}^{(1)})+t_{2}I(U_{1}^{(2)};
Y_{3}^{(2)}).\label{eq:Ach32}
\end{IEEEeqnarray}
From (\ref{eq:relay1})-(\ref{eq:relay22}), (\ref{eq:Ach31}), and (\ref{eq:Ach32}), the achievable rate of BME scheme based on successive decoding is equal to
\begin{IEEEeqnarray}{rl}
&C_{BM_{succ}}^{low} = R^{(1)} + R^{(2)} \leq\max_{0\leq t_{1},t_{2},t_{1}+t_{2}=1}\min\left( \right.\\
&\left.
\min{\left(t_{1}I\left(X_{0}^{(1)};Y_{1}^{(1)}\mid X_{2}^{(1)},U_{2}^{(1)}\right), t_{2}I\left(X_{1}^{(2)}
;Y_{3}^{(2)}\mid U_{1}^{(2)}\right)+t_{1}I\left(U_{2}^{(1)};
Y_{3}^{(1)}\right)\right)}+\right.\nonumber\\
&\left.
\min{\left(t_{1}I\left(X_{2}^{(1)}
;Y_{3}^{(1)}\mid U_{2}^{(1)}\right)+t_{2}I\left(U_{1}^{(2)};
Y_{3}^{(2)}\right), t_{2}I\left(X_{0}^{(2)};Y_{2}^{(2)}\mid
X_{1}^{(2)},U_{1}^{(2)}\right)\right)},\right.\nonumber\\
&\left.
t_{1}I\left(X_{0}^{(1)},X_{2}^{(1)};Y_{1}^{(1)}\mid
U_{2}^{(1)}\right), t_{2}I\left(X_{0}^{(2)},X_{1}^{(2)}
;Y_{2}^{(2)}\mid
U_{1}^{(2)}\right)\right).\nonumber
\end{IEEEeqnarray}

\emph{2) Backward Decoding:}

Following receiving the sequence corresponding to the $B+2$'th time slot, destination starts decoding the messages in a backward manner, i.e. from $w^{(B)}$ back to $w^{(1)}$. At the end of odd time slot $b$, knowing the value $s_2^{(b-1)}$ from the received signal in time slot $b+1$, destination declares $\left(\hat{w}^{(b-1)},\hat{s}_{1}^{(b-2)}\right)= \left(w^{(b-1)},s_{1}^{(b-2)}\right)$ iff there exists a unique pair $\left(\hat{w}^{(b-1)},\hat{s}_{1}^{(b-2)}\right)$
such that $f_{Bin}^{(2)}\left(\hat{w}^{(b-1)}\right) = s_2^{(b-1)}$ and
$\left(\textbf{x}_{2}^{(1)}\left(\hat{w}^{(b-1)},\hat{s}_{1}^{(b-2)}\right),
\textbf{u}_{2}^{(1)}\left(\hat{s}_{1}^{(b-2)}\right),\textbf{y}_{3}^{(1)}\right)\in A_{\epsilon}^{(n)}$. Similarly, at the end of even time slot $b$, knowing the value $s_1^{(b-1)}$ for the received signal in time slot $b+1$, destination declares $\left(\hat{w}^{(b-1)},\hat{s}_{2}^{(b-2)}\right)= \left(w^{(b-1)},s_{2}^{(b-2)}\right)$ iff there exists a unique pair $\left(\hat{w}^{(b-1)},\hat{s}_{2}^{(b-2)}\right)$ such that $f_{Bin}^{(1)}\left(\hat w^{(b-1)}\right) = s_1^{(b-1)}$ and
$\left(\textbf{x}_{1}^{(2)}\left(\hat{w}^{(b-1)},\hat{s}_{1}^{(b-2)}\right),
\textbf{u}_{1}^{(2)}\left(\hat{s}_{2}^{(b-2)}\right),\textbf{y}_{3}^{(2)}\right)\in A_{\epsilon}^{(n)}$. Hence, in order to make the probability of error zero, from~\cite{CoverThomas} we have
\begin{IEEEeqnarray}{rl}
&r_{Bin}^{(1)}\leq R^{(1)},\label{eq:Back7}\\
&r_{Bin}^{(2)}\leq R^{(2)},\label{eq:Back8}\\
&R^{(2)}-r_{Bin}^{(2)}\leq t_1I\left(X_2^{(1)};Y_3^{(1)}\mid U_2^{(1)}\right),\label{eq:Back9}\\
&R^{(2)}-r_{Bin}^{(2)}+r_{Bin}^{(1)}\leq t_1I\left(X_{2}^{(1)},U_2^{(1)};Y_{3}^{(1)}\right),\label{eq:Back10}\\
&R^{(1)}-r_{Bin}^{(1)}\leq t_2 I\left(X_1^{(2)};Y_3^{(2)}\mid U_1^{(2)}\right),\label{eq:Back11}\\
&R^{(1)}-r_{Bin}^{(1)}+r_{Bin}^{(2)}\leq t_2I\left(X_{1}^{(2)},U_1^{(2)};Y_{3}^{(2)}\right).\label{eq:Back12}
\end{IEEEeqnarray}
Hence, by employing BME and Backward decoding, the following rate is achievable subject to (\ref{eq:relay1})-(\ref{eq:relay22}) and  (\ref{eq:Back7})-(\ref{eq:Back12}) constraints.
\begin{IEEEeqnarray}{rl}
&C_{BME_{back}}^{low}=R^{(1)} + R^{(2)}.\label{eq:R_Back}
\end{IEEEeqnarray}

\underline{\emph {Optimum input distributions}}

Now, we prove there exists input probability distributions ($p(x_0^{(1)},x_2^{(1)},u_2^{(1)})$ and $p(x_0^{(2)},x_1^{(2)},u_1^{(2)})$) which maximize \eqref{eq:R_Back} and have the following property: $u_2^{(1)}$ is independent from $(x_0^{(1)},x_2^{(1)})$ and $u_1^{(2)}$ is independent from $(x_0^{(2)},x_1^{(2)})$. To prove this, consider $p(x_0^{(1)},x_2^{(1)},u_2^{(1)})$ and $p(x_0^{(2)},x_1^{(2)},u_1^{(2)})$ along with $t_1, t_2$ which  maximize \eqref{eq:R_Back} subject to the required constraints. Let us define $\hat p(x_0^{(1)},x_2^{(1)},u_2^{(1)})$ and $\hat p(x_0^{(2)},x_1^{(2)},u_1^{(2)})$ as
\begin{IEEEeqnarray}{rl}
&\hat p(x_0^{(1)},x_2^{(1)},u_2^{(1)}) = p(u_2^{(1)})p(x_0^{(1)}, x_2^{(1)}),\label{eq:p1}\\
&\hat p(x_0^{(2)},x_1^{(2)},u_1^{(2)}) = p(u_1^{(2)})p(x_0^{(2)}, x_1^{(2)}),\label{eq:p2}
\end{IEEEeqnarray}
Now, we show that $\hat p(x_0^{(1)},x_2^{(1)},u_2^{(1)})$ and $\hat p(x_0^{(2)},x_1^{(2)},u_1^{(2)})$ along with $t_1, t_2$ achieve at least the same rate as the optimum one. Let us denote the values of mutual information and entropy with respect to the input distributions $p, \hat p$ by $I_{p}, H_{p}$ and $ I_{\hat p}, H_{\hat p}$, respectively. The right-hand sides of \eqref{eq:Back9}-\eqref{eq:Back12} with respect to $p$ can be upper-bounded by the ones corresponding to $\hat p$ as follows
\begin{IEEEeqnarray}{lcl}
t_1I_p\left(X_2^{(1)};Y_3^{(1)}\mid U_2^{(1)}\right) & \stackrel{(a)}{\leq} & t_1I_p \left(X_2^{(1)};Y_3^{(1)}\right) = t_1I_{\hat p} \left(X_2^{(1)};Y_3^{(1)}\right),\label{eq:khar}\\
t_1I_p\left(X_{2}^{(1)},U_2^{(1)};Y_{3}^{(1)}\right) & \stackrel{(a)}{=} & t_1 I_p\left(X_{2}^{(1)};Y_{3}^{(1)}\right) = t_1 I_{\hat p}\left(X_{2}^{(1)};Y_{3}^{(1)}\right),\\
t_2 I_p\left(X_1^{(2)};Y_3^{(2)}\mid U_1^{(2)}\right) & \stackrel{(b)}{\leq} &
t_2 I_p\left(X_1^{(2)};Y_3^{(2)}\right) = t_2 I_{\hat p}\left(X_1^{(2)};Y_3^{(2)}\right),\\
t_2I_p\left(X_{1}^{(2)},U_1^{(2)};Y_{3}^{(2)}\right)& \stackrel{(b)}{=} &
t_2I_p\left(X_{1}^{(2)};Y_{3}^{(2)}\right) = t_2I_{\hat p}\left(X_{1}^{(2)};Y_{3}^{(2)}\right).\label{eq:Inq1}
\end{IEEEeqnarray}
where $(a)$ follows from the fact that  $U_2^{(1)}\longrightarrow X_2^{(1)} \longrightarrow Y_3^{(1)}$ forms a Markov chain and $(b)$ follows from the fact that $U_1^{(2)}\longrightarrow X_1^{(2)} \longrightarrow Y_3^{(2)}$ forms a Markov chain. Moreover as in distribution $\hat p$, $u_2^{(1)}$ and $u_1^{(2)}$ are independent from $(x_0^{(1)},x_2^{(1)})$ and $(x_0^{(2)},x_1^{(2)})$, it can be easily verified that the right-hand sides of \eqref{eq:khar}-\eqref{eq:Inq1} are equal to the right-hand sides of \eqref{eq:Back9}-\eqref{eq:Back12} with the input distribution $\hat p$, respectively. Hence, by utilizing $\hat p$ instead of $p$, the region that satisfies \eqref{eq:Back9}-\eqref{eq:Back12} is enlarged. Now, let us consider the right-hand sides of (\ref{eq:relay1})-(\ref{eq:relay22}).
\begin{IEEEeqnarray}{lll}
t_{1}I_p\left(X_{0}^{(1)};Y_{1}^{(1)}\mid X_{2}^{(1)},U_{2}^{(1)}\right) & \stackrel{(a)}{\leq}  t_{1}I_p\left(X_{0}^{(1)};Y_{1}^{(1)}\mid X_{2}^{(1)}\right) & = t_{1}I_{\hat p}\left(X_{0}^{(1)};Y_{1}^{(1)}\mid X_{2}^{(1)}\right) \label{eq:gav}\\
t_1 I_p \left(X_0^{(1)},X_2^{(1)};Y_1^{(1)}\mid U_2^{(1)}\right) & \stackrel{(a)}{\leq}  t_1 I_{p}\left(X_0^{(1)},X_2^{(1)};Y_1^{(1)}\right) & = t_1 I_{\hat p}\left(X_0^{(1)},X_2^{(1)};Y_1^{(1)}\right) \\
t_2 I_p\left(X_{0}^{(2)};Y_{2}^{(2)}\mid X_{1}^{(2)},U_{1}^{(2)}\right) & \stackrel{(b)}{\leq}  t_2 I_{p}\left(X_0^{(2)};Y_2^{(2)}\mid X_1^{(2)}\right) & = t_2 I_{\hat p}\left(X_0^{(2)};Y_2^{(2)}\mid X_1^{(2)}\right) \\
t_2 I_p\left(X_0^{(2)},X_1^{(2)};Y_2^{(2)}\mid U_1^{(2)}\right) & \stackrel{(b)}{\leq} t_2 I_{p}\left(X_0^{(2)},X_1^{(2)};Y_2^{(2)}\right) & = t_2 I_{\hat p}\left(X_0^{(2)},X_1^{(2)};Y_2^{(2)}\right) \label{eq:gus}
\end{IEEEeqnarray}
where $(a)$ follows from the fact that $U_2^{(1)}\longrightarrow (X_2^{(1)}, X_0^{(1)}) \longrightarrow Y_1^{(1)}$ form a Markov chain and $(b)$ follows from the fact that $U_1^{(2)}\longrightarrow (X_1^{(2)}, X_0^{(2)}) \longrightarrow Y_2^{(2)}$ form a Markov chain. Similarly, we observe that the right-hand sides of \eqref{eq:gav}-\eqref{eq:gus} represent the right-hand sides of inequalities (\ref{eq:relay1})-(\ref{eq:relay22}) with the input distribution $\hat p$. Hence, the region of $(R^{(1)}, R^{(2)})$ that satisfies (\ref{eq:relay1})-(\ref{eq:relay22}) and \eqref{eq:Back7}-\eqref{eq:Back12} is enlarged by utilizing the input distribution $\hat p$ instead of $p$. This proves the independency of input distributions with $u^{(1)}$ and $u^{(2)}$ in the optimum distribution.

\underline{\emph{Simplifying the achievable rate}}

As we can assume that the input distributions are of the form \eqref{eq:p1} and \eqref{eq:p2}, the achievable rate can be simplified as follows.
\begin{IEEEeqnarray}{rl}
C_{BME_{back}}^{low}& = R^{(1)} + R^{(2)} \leq\nonumber\\
&\max_{0\leq t_{1},t_{2},t_{1}+t_{2}=1}\min\left(
t_{1}I\left(X_{0}^{(1)},X_{2}^{(1)};Y_{1}^{(1)}\right),t_{2}I\left(X_{0}^{(2)},X_{1}^{(2)}
;Y_{2}^{(2)}\right)\right),\label{eq:constr1}
\end{IEEEeqnarray}

\begin{IEEEeqnarray}{rl}
&\text{subject to}\nonumber\\
&r_{Bin}^{(1)}\leq R^{(1)},\label{eq:constr3}\\
&r_{Bin}^{(2)}\leq R^{(2)},\label{eq:constr4}\\
&R^{(1)}\leq t_{1}I\left(X_{0}^{(1)};Y_{1}^{(1)}\mid X_{2}^{(1)}\right),\label{eq:constr5}\\
&R^{(2)}\leq t_{2}I\left(X_{0}^{(2)};Y_{2}^{(2)}\mid
X_{1}^{(2)}\right),\label{eq:constr6}\\
&R^{(2)}-r_{Bin}^{(2)}+r_{Bin}^{(1)}\leq t_1I\left(X_{2}^{(1)};Y_{3}^{(1)}\right),\label{eq:constr7}\\
&R^{(1)}-r_{Bin}^{(1)}+r_{Bin}^{(2)}\leq t_2I\left(X_{1}^{(2)};Y_{3}^{(2)}\right).\label{eq:constr2}
\end{IEEEeqnarray}
with input distributions
\begin{IEEEeqnarray}{rl}
&p(x_{0}^{(1)},x_{2}^{(1)})=p(x_{2}^{(1)})p(x_{0}^{(1)}|x_{2}^{(1)}),\nonumber\\
&p(x_{0}^{(2)},x_{1}^{(2)})=p(x_{1}^{(2)})p(x_{0}^{(2)}|x_{1}^{(2)}).\nonumber
\end{IEEEeqnarray}
Now, we show that (\ref{eq:constr1})-(\ref{eq:constr2}) is equivalent to
\begin{IEEEeqnarray}{rl}
C_{BME_{back}}^{low}& \leq \max_{0\leq t_{1},t_{2},t_{1}+t_{2}=1}\min\left(
t_{1}I\left(X_{0}^{(1)},X_{2}^{(1)};Y_{1}^{(1)}\right),t_{2}I\left(X_{0}^{(2)},X_{1}^{(2)}
;Y_{2}^{(2)}\right),\right.\nonumber\\
&\left.t_{1}I\left(X_{0}^{(1)};Y_{1}^{(1)}\mid X_{2}^{(1)}\right)+t_{2}I\left(X_{0}^{(2)};Y_{2}^{(2)}\mid
X_{1}^{(2)}\right),\right.\nonumber\\
&\left.t_1I\left(X_{2}^{(1)};Y_{3}^{(1)}\right)+
t_2I\left(X_{1}^{(2)};Y_{3}^{(2)}\right)\right).\label{eq:Ach}
\end{IEEEeqnarray}
First, it is easy to verify that (\ref{eq:constr1})-(\ref{eq:constr2}) imply \eqref{eq:Ach}. Now, in order to prove that the converse is also true, we show that for every possible rate $r$ satisfying \eqref{eq:Ach}, there exists a quad-tupple $\left(R^{(1)},R^{(2)}, r_{Bin}^{(1)}, r_{Bin}^{(2)} \right)$ such that $R^{(1)} + R^{(2)} = r$, $\left(R^{(1)},R^{(2)}, r_{Bin}^{(1)}, r_{Bin}^{(2)} \right)$ satisfies (\ref{eq:constr1})-(\ref{eq:constr2}), and moreover at least one of bin rates is equal to zero, i.e. $r_{Bin}^{(1)}=0$ or $r_{Bin}^{(2)}=0$.

Let us define $R^{(1)} \triangleq \min \left(r, t_{1}I\left(X_{0}^{(1)};Y_{1}^{(1)}\mid X_{2}^{(1)}\right)\right)$, $R^{(2)} \triangleq r - R^{(1)}$. As $r$ satisfies \eqref{eq:Ach}, we conclude that $(R^{(1)},R^{(2)})$ satisfies \eqref{eq:constr1}, \eqref{eq:constr5}, and \eqref{eq:constr6}. Furthermore, as $R^{(1)}+R^{(2)}=r \leq t_1I\left(X_{2}^{(1)};Y_{3}^{(1)}\right)+
t_2I\left(X_{1}^{(2)};Y_{3}^{(2)}\right)$, we conclude that either $R^{(1)} \leq t_2I\left(X_{1}^{(2)};Y_{3}^{(2)}\right)$ or $R^{(2)} \leq t_1I\left(X_{2}^{(1)};Y_{3}^{(1)}\right)$. For the sake of symmetry, let us assume that the first case has occurred, i.e. $R^{(1)} \leq t_2I\left(X_{1}^{(2)};Y_{3}^{(2)}\right)$. Now, we define $r_{Bin}^{(1)} \triangleq 0$ and $r_{Bin}^{(2)} \triangleq \max \left( 0, R^{(2)} - t_1I\left(X_{2}^{(1)};Y_{3}^{(1)}\right) \right)$. Obviously, \eqref{eq:constr3}, \eqref{eq:constr4}, and \eqref{eq:constr7} are valid. Considering \eqref{eq:constr2}, we have
\begin{equation}
R^{(1)}-r_{Bin}^{(1)}+r_{Bin}^{(2)} = R^{(1)} + \max \left( 0, r - R^{(1)} - t_1I\left(X_{2}^{(1)};Y_{3}^{(1)}\right) \right) \stackrel{(a)}{\leq} t_2I\left(X_{1}^{(2)};Y_{3}^{(2)}\right)
\end{equation}
where $(a)$ follows from the facts that $r \leq t_1I\left(X_{2}^{(1)};Y_{3}^{(1)}\right)+ t_2I\left(X_{1}^{(2)};Y_{3}^{(2)}\right)$ and $R^{(1)} \leq t_2I\left(X_{1}^{(2)};Y_{3}^{(2)}\right)$. Hence, \eqref{eq:constr2} is also valid. The second case in which $R^{(2)} \leq t_1I\left(X_{2}^{(1)};Y_{3}^{(1)}\right)$ can be dealt with in a similar manner.

Hence, from the above argument, the achievable rate of BME scheme with backward decoding can be simplified as follows:
\begin{IEEEeqnarray}{rl}
C_{BME_{back}}^{low}& \leq \max_{0\leq t_{1},t_{2},t_{1}+t_{2}=1}\min\left(
t_{1}I\left(X_{0}^{(1)},X_{2}^{(1)};Y_{1}^{(1)}\right),t_{2}I\left(X_{0}^{(2)},X_{1}^{(2)}
;Y_{2}^{(2)}\right),\right.\nonumber\\
&\left.t_{1}I\left(X_{0}^{(1)};Y_{1}^{(1)}\mid X_{2}^{(1)}\right)+t_{2}I\left(X_{0}^{(2)};Y_{2}^{(2)}\mid
X_{1}^{(2)}\right),\right.\nonumber\\
&\left.t_1I\left(X_{2}^{(1)};Y_{3}^{(1)}\right)+
t_2I\left(X_{1}^{(2)};Y_{3}^{(2)}\right)\right),
\end{IEEEeqnarray}
with probabilities
\begin{IEEEeqnarray}{rl}
&p(x_{0}^{(1)},x_{2}^{(1)})=p(x_{2}^{(1)})p(x_{0}^{(1)}|x_{2}^{(1)}),\nonumber\\
&p(x_{0}^{(2)},x_{1}^{(2)})=p(x_{1}^{(2)})p(x_{0}^{(2)}|x_{1}^{(2)}).\nonumber
\end{IEEEeqnarray}


\end{document}